\documentclass[10pt]{article}

\usepackage{authblk}
\usepackage{caption}
\usepackage{amstext,amsmath,amssymb,amsfonts,bbm}
\usepackage[latin1]{inputenc}
\usepackage{epsfig}
\usepackage{dsfont}
\usepackage{hyperref}
\usepackage{amsthm}
\usepackage{subfigure}
\usepackage{color}
\usepackage{multirow}
\usepackage{psfrag}
\usepackage{graphicx}
\usepackage{extarrows}
\usepackage{braket}

\usepackage[lmargin=60pt,rmargin=60pt,tmargin=80pt,bmargin=80pt]{geometry}

\captionsetup{width=.9\textwidth}

\theoremstyle{plain}

\newtheorem{lemma}{Lemma}

\newtheorem{theorem}{Theorem}

\newtheorem{definition}{Definition}
\newtheorem{proposition}{Proposition}
\theoremstyle{definition}


\newcommand{\bP}{ {\bf P} } 
\newcommand{\bQ}{ {\bf Q} } 
\newcommand{\bA}{ {\bf A} } 
\newcommand{\ba}{ {\bf a} } 
\newcommand{\bS}{ {\bf S} }

\newcommand{\cM}{{\cal M}}
\newcommand{\cG}{{\cal G}}

\newcommand{\cB}{{\cal B}}
\newcommand{\cH}{{\cal H}}

\newcommand{\cS}{{\cal S}}

\newcommand{\be}{\begin{equation}}
\newcommand{\ee}{\end{equation}}


\bibliographystyle{ssg}

\allowdisplaybreaks[4]

\begin{document}

\title{\bf The $1/N$ expansion of the symmetric traceless and the antisymmetric tensor models in rank three}

\author[1]{Dario Benedetti}
\author[2]{Sylvain Carrozza}
\author[3]{Razvan Gurau}
\author[4]{Maciej Kolanowski}

\affil[1]{\normalsize\it Laboratoire de Physique Th\'eorique, CNRS, Univ.Paris-Sud, Universit\'e Paris-Saclay, 91405 Orsay, France \authorcr
email: dario.benedetti@th.u-psud.fr  \authorcr \hfill }

\affil[2]{\normalsize\it Perimeter Institute for Theoretical Physics
 31 Caroline St N, N2L 2Y5, Waterloo, ON, Canada  \authorcr email: scarrozza@perimeterinstitute.ca  \authorcr \hfill}

\affil[3]{\normalsize\it Centre de Physique Th\'eorique, CNRS, \'Ecole Polytechnique, Universit\'e Paris-Saclay, 91128 Palaiseau, France \authorcr
\it and Perimeter Institute for Theoretical Physics, 31 Caroline St. N, N2L 2Y5, Waterloo, ON, Canada \authorcr
email: rgurau@cpht.polytechnique.fr \authorcr \hfill}

\affil[4]{\normalsize\it Faculty of Physics, University of Warsaw, Pasteura 5, 02-093 Warsaw, Poland \authorcr
email: mp.kolanowski@student.uw.edu.pl \authorcr \hfill}

\date{}

\maketitle

\hrule\bigskip

\begin{abstract}
We prove rigorously that the symmetric traceless and the antisymmetric tensor models in rank three with tetrahedral interaction admit a $1/N$ expansion, and that at leading order they are dominated by melon diagrams.
This proves the recent conjecture of I. Klebanov and G. Tarnopolsky in \cite{Klebanov:2017nlk}, which they checked numerically up to 8th order in the coupling constant.
\end{abstract}

\bigskip

\hrule\bigskip

\tableofcontents

\bigskip

\section{Introduction and discussion}

Tensor models \cite{review,RTM} have recently come under the spotlight for their connection to the Sachdev--Ye--Kitaev (SYK)  
model \cite{Sachdev:1992fk,Kitaev,Polchinski:2016xgd,Jevicki:2016bwu,Maldacena:2016hyu,Gross:2016kjj,Gross:2017aos}, which is a promising testbed for better understanding black holes via an holographic description.
The two types of models share a similar large-$N$ limit \cite{Witten:2016iux,Klebanov:2016xxf}\footnote{However, they differ at subleading orders in $1/N$, see \cite{Bonzom:2017pqs}.}, but while in the SYK model this entails a large number of randomly distributed couplings, in tensor models it corresponds instead to a large symmetry group and just one or few couplings.
Therefore, tensor models offer an appealing alternative to the SYK model, and they are being thoroughly explored for such reason  
\cite{Peng:2016mxj,Krishnan:2016bvg,Krishnan:2017txw,Choudhury:2017tax,Giombi:2017dtl,Bulycheva:2017ilt,Krishnan:2017lra,Ferrari:2017jgw,Prakash:2017hwq,Peng:2017kro,Benedetti:2017fmp,BenGeloun:2017jbi}.

It is quite clear that the main feature of tensor models which makes them suitable for SYK-like applications is the existence of a large-$N$ limit \cite{color,expansion1}, and the fact that the class of diagrams that dominate in such limit are the melonic 
diagrams \cite{critical,uncoloring}.
This is a non-trivial result. In fact, while tensor models where introduced a long time ago \cite{ambj3dqg,sasa1}, in their original version they lacked a large-$N$ expansion and this severely limited their usefulness. Their large-$N$ behavior 
was dramatically improved by the addition of \emph{colors}. Originally introduced in the form of a coloring of the fields, \cite{color,expansion1}, it was later realized that the colors could be 
``pushed'' on the tensor indices \cite{uncoloring,sdequations,Carrozza:2015adg} to obtain models for only one tensor with \emph{no symmetry} under permutations: in other words, the location of a given 
index is an important label, and different indices transform independently under rotations.

Until recently, all the models having a well-defined large-$N$ limit were based on non symmetric tensors. 
On the other hand, while it has been understood since early on that completely symmetric tensors do not have a  well-defined large-$N$ limit, there are no no-go theorems for antisymmetric or traceless symmetric tensors, which are the subject of
the present paper. An important impulse to address the question of tensor with special symmetries came from the recent work of Klebanov and Tarnopolosky \cite{Klebanov:2017nlk}, where the authors consider models based either on a
symmetric traceless or on an antisymmetric tensor. They studied such models up to the 8th order in perturbation theory, and found no problematic diagrams. Therefore, they conjectured that such models admit a large-$N$ limit dominated by melonic 
diagrams.

Recently the $1/N$ expansion has been established for a model with two symmetric tensors \cite{Gurau:2017qya}. In this paper we will build on the method introduced there to establish the $1/N$ expansion for 
symmetric traceless and antisymmetric tensors. We prove the conjecture of \cite{Klebanov:2017nlk}, namely that in both the symmetric traceless and the antisymmetric cases, tensor models with a tetrahedral interaction support a $1/N$ expansion dominated by melonic graphs.
 
A crucial difference between the cases we treat in this paper and the two-tensor model treated in \cite{Gurau:2017qya}
is that we now need to deal with tadpole graphs. It turns out that the tadpoles are quite problematic: individual graphs containing tadpoles violate the 
maximal scaling in $N$. For both the antisymmetric and the symmetric traceless models the tadpoles can be eliminated by a \emph{partial resummation} of the perturbative series.
The resummation of all the tadpoles brings a nontrivial cancellation at leading order in $N$. 
Crucially such cancellation does not occurr for the  
symmetric model with no tracelessness condition: in order to offset the contribution of the tadpoles in that case one must use a different rescaling of the coupling constant with which
the  melonic graphs are strictly suppressed. The difference between the symmetric cases with and without tracelessness condition can be understood in terms of the irreducible and reducible nature of their respective representation of the $O(N)$ group, and the applicability of Schur's Lemma (see Lemma \ref{lemma:2-point-N}).

In quantum field theory the tadpoles are subtracted by Wick ordering which in turn requires the addition of counter terms to the action. We will show that the original 
theories are exactly equivalent to theories with a Wick ordered interaction but with a renormalized covariance. One then checks explicitly, and this is a nontrivial check, 
that the renormalized covariance has a well defined large $N$ limit and is itself a series in $N^{-1/2}$ (it is this step that fails for the symmetric, non traceless case).
In this new formulation the tadpoles are subtracted, and one can now show in the resulting theory, which does not have any more tadpoles, the correct scaling bound in $N$.

{\bf Outline of the paper.}
In Section~\ref{sec:models} we introduce the models and the main results of our paper: Theorem~\ref{thm:main}, which establishes the existence of the $1/N$ expansion, and Theorem~\ref{thm:LO}, which characterizes its leading order. The rest of the paper is dedicated to their proof. In Section~\ref{sec:expansion} we present various aspects of the perturbative expansion and we provide a glimpse of the proof; in particular, in Section~\ref{sec:graphs-def} we define the two types of graphs (or more precisely maps) that are used to describe the perturbative expansion, and we derive expressions for their associated amplitudes, while  in Section~\ref{sec:graphs-ex} we discuss several important classes of graphs that play an important role in the proof. We close the section with Proposition~\ref{prop:moves}, which is an essential step in the proof of Theorem~\ref{thm:main}. The proof of Proposition~\ref{prop:moves} is the most tedious part of the paper, and we postpone it to Section~\ref{sec:proof1}. In Section~\ref{sec:subtr} we show how tadpoles and melons can be resummed by means of the Wick ordering trick; their resummation leads us to consider graphs with no tadpoles and no melons, for which we prove Proposition~\ref{prop:main}, the last ingredient for the proof of Theorem~\ref{thm:main}. In Section~\ref{sec:LO}, we finally prove Theorem~\ref{thm:LO}, which states that the $1/N$ expansion is dominated by melonic diagrams. In the four Appendices we collect some useful ingredients and side results.

\newpage

\section{The models and the main results}
\label{sec:models}

We will treat the symmetric traceless and  antisymmetric cases simultaneously. 
Let us first consider real tensors of rank $3$ having no symmetry property under permutation of their indices, thus having $N^3$ independent components, and
transforming in the direct product of three copies of the fundamental representation of the orthogonal group $O(N)$:
\be \label{eq:gen-tensor}
T_{a_1 a_2 a_3} \rightarrow T'_{a_1a_2 a_3} = (O \cdot T)_{a_1a_2 a_3} := \sum_{b_1,b_2,b_3} O_{a_1b_1} O_{a_2b_2} O_{a_3b_3} T_{b_1b_2b_3} \;,  \qquad O \in O(N)\;.
\ee
Although similar to the models built out of the fundamental representation of $O(N)^{\otimes 3}$ \cite{RTM,expansion1,critical,uncoloring,Carrozza:2015adg,Klebanov:2016xxf}, a tensor model built upon \eqref{eq:gen-tensor} differs from them in an important aspect:
all the indices of the tensor transform with the same orthogonal matrix, hence one can build invariants in which an index in the first position on a tensor is contracted with an index in another position, say the third position, on another (or the same) tensor.

We denote ${\bf 1}$ the identity operator in the space of tensors ${\bf 1}_{a_1a_2a_3,b_1b_2b_3}= \delta_{a_1b_1} \delta_{a_2b_2} \delta_{a_3b_3}$, and:
\[
T {\bf 1} T  \equiv \sum_{\genfrac{}{}{0pt}{}{a_1,a_2,a_3}{b_1,b_2,b_3}}  T_{a_1a_2a_3}  {\bf 1}_{a_1a_2a_3,b_1b_2b_3}  T_{b_1b_2b_3}  \;,\qquad 
  \partial_T {\bf 1} \partial_T \equiv \sum_{\genfrac{}{}{0pt}{}{a_1,a_2,a_3}{b_1,b_2,b_3}} \frac{\partial}{\partial T_{a_1a_2a_3}} {\bf 1}_{a_1a_2a_3,b_1b_2b_3} \frac{\partial}{\partial T_{b_1b_2b_3}}  \;.
\] 
 
The Gaussian integral with covariance ${\bf 1}$, can be represented as a differential operator \cite{RTM,Brydges:2014nba,salmhofer1999renormalization}; for example, the free 2-point function is written as:
\begin{align*}
 & \Braket{ T_{a_1a_2a_3} T_{b_1b_2b_3}  }_0    =  \int [dT] \; e^{-\frac{1}{2} T {\bf 1} T}  \; T_{a_1a_2a_3} T_{b_1b_2b_3} = 
  \left[ e^{\frac{1}{2} \;  \partial_T {\bf 1} \partial_T    }      \;  T_{a_1a_2a_3} T_{b_1b_2b_3}
\right]_{T=0} =   {\bf 1}_{a_1a_2a_3,b_1b_2b_3} \;,
\end{align*}
and its full contraction gives $\Braket{T {\bf 1} T}_0  =  N^3$. The generic tensor model with tetrahedral interaction is defined by the action:
\begin{align} \label{eq:action}
 S(T) = & \frac{1}{2} \sum_{a_1 , a_2 , a_3}  T_{a_1a_2a_3}   T_{a_1a_2a_3} - \frac{  \lambda  }{ 4 N^{3/2}}   
 \sum_{a_1\dots a_6}  T_{a_1a_2a_3}   T_{a_3a_4a_5}    T_{a_5 a_2 a_6}  T_{a_6a_4 a_1} \;, 
\end{align}
where the sign of the coupling constant follows the usual conventions in matrix models \cite{DiFrancesco:1993nw}.
The normalization of the interaction is well known in tensor models and it is the only one which can lead to an interesting large $N$ 
limit \cite{RTM,Carrozza:2015adg,Klebanov:2017nlk}. The partition function, the free energy and its (appropriately normalized) first derivative are:
\begin{align}\label{eq:modelgen}
 Z_{\bf 1}(\lambda) = &    \int [ dT] \; e^{-S(T)} = \left[ e^{\frac{1}{2} \;   \partial_T {\bf 1} \partial_T   }      \; e^{  \frac{  \lambda  }{4 N^{3/2}}   
 \sum_{a_1\dots a_6}  T_{a_1a_2a_3}   T_{a_3a_4a_5}    T_{a_5 a_2 a_6}  T_{a_6a_4 a_1} } \right]_{T=0} \;, \crcr
 \ln Z_{\bf 1}(\lambda) & = \ln\bigg\{ \int [ dT] \; e^{-S(T)} \bigg\} \; , \qquad F_{\bf 1}(\lambda) = \frac{4}{N^3} \lambda\partial_{\lambda} \ln Z_{\bf 1}(\lambda)  \;.
\end{align}
In order to simplify the combinatorics we will deal below with $F_{\bf 1}(\lambda)$.

\

Unlike the fundamental representation of $O(N)^{\otimes 3}$, which is irreducible, the direct product of three copies of the fundamental representation of $O(N)$ is a reducible representation.
Therefore, a generic tensor transforming as in \eqref{eq:gen-tensor} can be decomposed in irreducible components. This is achieved by removing its traces, and decomposing the rest in terms of irreducible representations
of the symmetric group $ \mathfrak{S}_3$, which commutes with the action of $O(N)$. The result is a decomposition of $T_{a_1a_2 a_3}$ into the following irreducible objects: a completely symmetric and traceless tensor, 
a completely antisymmetric one, two tensors with mixed symmetry, and three lower-rank tensors (the traces, which in our rank-3 case simply correspond to vector objects). 
Under such a decomposition, the quadratic part of the action \eqref{eq:action} partially diagonalizes\footnote{Each of the two degenerate sectors of the two tensors with mixed symmetry and of the three traces remain internally mixed.}, while the quartic interaction leads to a mixing between the various irreducible components.
From the point of view of model building, the traces lead to hybrid models, mixing different ranks, such as the models studied in \cite{Halmagyi:2017leq}. The tensors with mixed-symmetry might be interesting 
from the point of view of the multi-matrices interpretation of rank-3 tensors \cite{Ferrari:2017ryl,Ferrari:2017jgw,Azeyanagi:2017drg,Azeyanagi:2017mre}. However, sticking to a purely rank-3 tensor point of view, it makes sense to restrict oneself 
to the completely symmetric or antisymmetric components. The two can mix in the quartic interaction resulting in a term with two antisymmetric and two symmetric tensors; however, for the sake of simplicity, in this
paper we will not consider possible mixing between irreducible components, and 
we will deal with either  \emph{completely symmetric and traceless} or  \emph{completely antisymmetric} 
tensors.

We denote $\bA$ the orthogonal projector on antisymmetric tensors:
\begin{align}\label{eq:A}
& \bA_{a_1a_2a_3, b_1b_2b_3} = \frac{1}{3!} \sum_{\sigma\in \mathfrak{S}_3} \epsilon(\sigma) \prod_{i=1}^3 \delta_{a_i b_{ \sigma(i)} }  \\
& \qquad =   \frac{1}{3!} \delta_{a_1b_1} ( \delta_{a_2b_2} \delta_{a_3 b_3} - \delta_{a_2b_3} \delta_{a_3b_2}  ) + 
        \frac{1}{3!}  \delta_{a_1b_2} ( - \delta_{a_2b_1} \delta_{a_3 b_3} +\delta_{a_2b_3} \delta_{a_3b_1}  )  +
      \frac{1}{3!} \delta_{a_1b_3} (\delta_{a_2 b_1} \delta_{a_3 b_2} -  \delta_{a_2 b_2} \delta_{a_3 b_1} )  \;, \nonumber
\end{align}
and $\bS$ the orthogonal projector on symmetric traceless tensors \cite{Klebanov:2017nlk}:
\begin{align}\label{eq:S}
& \bS_{a_1a_2a_3, b_1b_2b_3} = \frac{1}{3!} \bigg[ \sum_{\sigma\in \mathfrak{S}_3}  \prod_{i=1}^3 \delta_{a_i b_{ \sigma(i)} }  -\frac{2}{N+2}  
   \sum_{i,j \in \{1,2,3\} } \delta_{a_ib_j}  \prod_{k,l \neq i } \delta_{a_k a_l} \prod_{r,s\neq j} \delta_{b_rb_s}
\bigg]\crcr
& =  \frac{1}{3!} \bigg[ 
  \delta_{a_1b_1} ( \delta_{a_2b_2} \delta_{a_3 b_3} + \delta_{a_2b_3} \delta_{a_3b_2}  ) + 
         \delta_{a_1b_2} (  \delta_{a_2b_1} \delta_{a_3 b_3} +\delta_{a_2b_3} \delta_{a_3b_1}  )  +
      \delta_{a_1b_3} (\delta_{a_2 b_1} \delta_{a_3 b_2} +  \delta_{a_2 b_2} \delta_{a_3 b_1} )  \crcr
 & \qquad -\frac{2}{N + 2} \bigg( 
   \delta_{a_1b_1} \delta_{a_2a_3} \delta_{b_2b_3} + \delta_{a_1b_2} \delta_{a_2a_3} \delta_{b_1b_3} + \delta_{a_1b_3} \delta_{a_2a_3} \delta_{b_1b_2} + (a_1 \leftrightarrow  a_2) + (a_1 \leftrightarrow  a_3)
 \bigg)
 \bigg] \;. 
\end{align}

We will denote generically $\bP = \bA, \bS $ one of the two projectors.
The tensor models for symmetric traceless and antisymmetric tensors with tetrahedral interaction are obtained from the generic model of  Eq.\eqref{eq:modelgen} by allowing the propagation
of only the antisymmetric (respectively symmetric traceless) modes of the tensor\footnote{This is equivalent to giving an infinite mass to the orthogonal modes $({\bf 1} - \bP)T$ of the tensor.}:
\begin{align}\label{eq:crucial}
  F_{\bP}(\lambda) = \frac{4}{N^3} \lambda\partial_{\lambda} \ln \bigg\{
     \left[ e^{\frac{1}{2} \;   \partial_T \bP \partial_T   }      \; e^{  \frac{  \lambda  }{4 N^{3/2}}   
 \sum_{a_1\dots a_6}  T_{a_1a_2a_3}   T_{a_3a_4a_5}    T_{a_5 a_2 a_6}  T_{a_6a_4 a_1} } \right]_{T=0}
  \bigg\} \;.
\end{align}

Because only the projected modes $\bP T$ propagate, one can either take Eq.\eqref{eq:crucial} as definition and consider that
the tensor $T$ still has no symmetry property under permutation of its indices, or one can change variables to $P = \bP T$ and write equivalently:
\begin{align}\label{eq:crucial1} 
&  F_{\bP}(\lambda)  = \frac{4}{N^3} \lambda\partial_{\lambda} \ln \bigg\{
     \left[ e^{\frac{1}{2} \;   \partial_P \bP \partial_P  }      \; e^{  \frac{  \lambda  }{4 N^{3/2}}   
 \sum_{a_1\dots a_6}  P_{a_1a_2a_3}   P_{a_3a_4a_5}    P_{a_5 a_2 a_6}  P_{a_6a_4 a_1} } \right]_{P=0}
  \bigg\} \crcr
& \frac{\partial}{\partial P_{a_1a_2a_3}} P_{b_1b_2b_3}  \equiv \bP_{a_1a_2a_3,b_1b_2b_3} \;,
\end{align}
where this time the tensor $P$ is antisymmetric or symmetric traceless. Observe that the second line is a \emph{definition}.
The $1/N$ expansion of the symmetric traceless and respectively antisymmetric tensor model in rank $3$ with tetrahedral interaction is encoded in the following theorem, the main result of this paper.
\begin{theorem}\label{thm:main}
We have (in the sense of perturbation series):
\[
 F_{\bP}(\lambda) = \sum_{\omega \in \mathbb{N}/2} N^{-\omega} F_{\bP}^{(\omega)} (\lambda) \;.
\]
\end{theorem}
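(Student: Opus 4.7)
The plan is to write $F_{\bP}(\lambda)$ as a sum of connected Feynman-map amplitudes, bound the $N$-scaling of each map by a non-negative half-integer power via a suitable ``degree'', and dispose of the tadpole (and melon) subgraphs that spoil the bound through a Wick-ordering resummation. I expect to follow closely the architecture already sketched in the introduction.

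First, I would expand the Gaussian integral defining $F_{\bP}(\lambda)$ into a sum over connected $4$-valent maps built from tetrahedral vertices, each vertex carrying a weight $\lambda/(4 N^{3/2})$ and each edge carrying the projector $\bP$. Writing $\bP$ explicitly as a signed sum over $\mathfrak{S}_3$ in the antisymmetric case, and as the analogous symmetric sum plus the trace-subtraction term proportional to $1/(N+2)$ in the symmetric traceless case, each edge of a map resolves into one of a finite list of stranded structures. This refines every map into a stranded graph whose amplitude is $N$ to the number of closed index-faces times the vertex prefactors and, in the $\bS$ case, the trace-subtraction factors. This is the content of Section~\ref{sec:graphs-def} and Section~\ref{sec:graphs-ex}.

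Second, on the subfamily of stranded graphs that contain neither tadpoles nor melons, I would use Proposition~\ref{prop:moves} as a graph-rewriting tool: through a controlled sequence of combinatorial moves it brings any such graph to a canonical form without increasing its $N$-scaling, the face count of the canonical form being easy to bound. Proposition~\ref{prop:main} then closes the estimate, producing amplitudes of order $N^{3-\omega}$ with $\omega \in \mathbb{N}/2$ non-negative. Combined with the overall $4/N^3$ prefactor in $F_{\bP}$, this gives exactly the bound asked for in the theorem, but only on tadpole- and melon-free graphs.

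Third, to extend the bound from this sector to the full expansion, I would invoke the Wick-ordering construction of Section~\ref{sec:subtr}. The theory is rewritten in normal-ordered form with respect to a self-consistent renormalized covariance $C$, chosen so that all tadpole contractions, and more generally all melon insertions on a propagator, are reabsorbed into $C$. By Lemma~\ref{lemma:2-point-N}, which applies Schur's lemma to the irreducible $O(N)$-module $\bP T$, the covariance $C$ must be of the form $c_N\, \bP$; one then verifies, and this is the step where tracelessness or antisymmetry is essential and where the non-traceless symmetric analogue fails, that $c_N$ admits a formal expansion in $N^{-1/2}$ with coefficients of order one. Once the theory is recast in Wick-ordered form only tadpole- and melon-free maps survive, so Step 2 applies and the claimed expansion follows. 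I expect the main obstacle to be this third step: establishing the exact equivalence between $F_{\bP}(\lambda)$ and the Wick-ordered theory at the level of formal series, and verifying that the self-consistent equation for $c_N$ has a solution admitting a $1/N$ expansion. The moves of Step 2 are in principle routine but tedious because of the multiplicity of stranded edge types created by $\bP$, particularly by the trace-subtraction component of $\bS$.
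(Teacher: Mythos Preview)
Your proposal is correct and follows the same architecture as the paper: stranded expansion (Section~\ref{sec:graphs-def}), Wick-ordering resummation to a theory with renormalized covariance $K\,\bP$ and no tadpole or melon subgraphs (Section~\ref{sec:subtr}, culminating in Eq.~\ref{eq:rewrite}), and then the non-negativity of the degree on the remaining graphs via Proposition~\ref{prop:main}, itself built on Proposition~\ref{prop:moves}.

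One caution on your Step~3: you invoke Lemma~\ref{lemma:2-point-N} to conclude that the renormalized covariance is a scalar multiple of $\bP$, but the boundedness clause of that lemma appeals to Theorem~\ref{thm:main} itself, so citing it here would be circular. The paper avoids this by computing the order-$1$ (tadpole) and order-$2$ (melon) self-energy explicitly, finding $\Sigma^{(2)} = (\lambda K f_1^{\bP} + \lambda^2 K^3 f_2^{\bP})\,\bP$ with the concrete $f_i^{\bP}$ displayed in Section~\ref{sec:subtr}; these explicit coefficients are in any case what you need to write down and solve the quartic self-consistency equation for $K$ and to check that its solution is a series in $N^{-1/2}$. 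The Schur's-lemma part of Lemma~\ref{lemma:2-point-N} alone (proportionality to $\bP$) is non-circular and would be a legitimate shortcut for that piece, but you still need the explicit values of $f_1^{\bP}$ and $f_2^{\bP}$ for the rest.
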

\begin{proof}
 This follows from Eq.~\ref{eq:rewrite} and Proposition~\ref{prop:main}.
 
\end{proof}

In a second stage, we will prove that the model is dominated by melon diagrams (which we will introduce in the next section). 
\begin{theorem}\label{thm:LO}
The leading order contribution $F_{\bP}^{(0)} (\lambda)$ is a sum over melonic stranded maps. 
\end{theorem}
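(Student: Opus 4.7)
My plan is to derive Theorem~\ref{thm:LO} as a direct corollary of the machinery built to prove Theorem~\ref{thm:main}. The starting point is the reformulation of Section~\ref{sec:subtr}, in which tadpoles and melonic two-point insertions are absorbed into a renormalized covariance $C_{\bP}(\lambda)$ that itself admits a $1/N$ expansion (this is what fails for the symmetric non-traceless case). The free energy is then expressed as a sum of amplitudes of stranded maps which contain neither tadpoles nor melonic two-point subgraphs, and Proposition~\ref{prop:main} provides a strict scaling bound $N^{-\omega}$ with $\omega>0$ for every non-trivial such reduced map. Consequently, the contribution to $F^{(0)}_{\bP}(\lambda)$ must come entirely from the ``trivial'' reduced configurations, dressed by the resummed propagator.

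First I would unwind the Wick-ordering identity to express $F_{\bP}^{(0)}(\lambda)$ as $(4/N^3)\lambda\partial_\lambda$ of the logarithm of a Gaussian integral whose covariance is the leading-order piece $C_{\bP}^{(0)}(\lambda)$ of the renormalized propagator, dressed with the single tetrahedral vertex. Since $C_{\bP}^{(0)}(\lambda)$ is itself determined self-consistently by the melonic Schwinger--Dyson-type fixed point motivating the Wick-ordering in the first place, expanding this Gaussian integral and tracking the stranded structure recovers exactly the sum over melonic stranded maps built out of tetrahedral vertices with bare propagator lines. To complete the identification I would exhibit the bijection between (a) melonic stranded maps in the original perturbative expansion and (b) iterated melonic propagator insertions in the renormalized framework: each recursive melon insertion contributes one tetrahedral vertex (weight $N^{-3/2}$ from the coupling normalization) compensated exactly by $N^{3/2}$ from three new closed strands, so each melonic map scales as $N^0$, as needed.

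The main obstacle will be the bookkeeping of projector contractions at each insertion. In the antisymmetric case $\bP=\bA$ the melonic insertion picks up signs from $\epsilon(\sigma)$, while in the symmetric traceless case $\bP=\bS$ it carries the $2/(N+2)$ trace-subtraction terms; one must verify, using Lemma~\ref{lemma:2-point-N} and the classification of melonic maps already invoked in Theorem~\ref{thm:main}, that the net effective melonic kernel is non-trivial at leading order in both cases --- this is precisely the Schur-lemma input that distinguishes $\bA$ and $\bS$ from the full symmetric case and was already flagged in the introduction. Once this is checked, combined with the strict inequality $\omega>0$ furnished by Proposition~\ref{prop:main} for reduced graphs, the identification of $F^{(0)}_{\bP}(\lambda)$ with the sum over melonic stranded maps is complete.
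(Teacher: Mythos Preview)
Your outline has a genuine gap at the crucial step. You assert that Proposition~\ref{prop:main} gives a \emph{strict} bound $\omega>0$ for every non-trivial reduced stranded map, and then deduce that only the trivial (ring) configuration survives at leading order. But Proposition~\ref{prop:main} only establishes $\omega(\hat\cS)\ge 0$. The paper explicitly warns (at the start of Section~\ref{sec:LO}) that the strict inequality is \emph{false} at the level of stranded maps: there exist non-ring stranded configurations with no melon and no tadpole whose degree vanishes, for instance the graph $H_2$ closed onto itself (Fig.~\ref{fig:double-triangle}). So the inference ``$\omega\ge 0$ plus non-triviality $\Rightarrow$ subleading'' is not valid, and your argument as written does not rule out such contributions.

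The paper closes this gap with the additional machinery of Section~\ref{sec:LO}, which your plan omits. One introduces \emph{generalized} tadpoles and melons (ordinary ones with arbitrary connected two-point insertions on their edges). Then Proposition~\ref{prop:LO-bound} shows that a graph with vertices but \emph{no generalized tadpole and no generalized melon} has $\omega\ge 1/2$; this sharpening requires rerunning the dipole/triangle deletion analysis and checking case by case that either the degree strictly drops or a generalized tadpole is present. For graphs that \emph{do} contain a generalized tadpole, individual stranded configurations may have $\omega=0$, but Lemma~\ref{lem:LO-tad} uses Schur's Lemma (Lemma~\ref{lemma:2-point-N}) to show that the full Feynman amplitude carries an overall factor $f_1^{\bP}(N)=O(N^{-1/2})$ and is therefore subleading after summing over stranded structures. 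This cancellation argument is essential and is not captured by any degree bound on single stranded maps. Only with both ingredients does Proposition~\ref{propo:LO-ring} yield that the ring graph is the sole leading-order survivor, which combined with the melonic large-$N$ limit of $K(\lambda,N)$ gives Theorem~\ref{thm:LO}. Your invocation of Lemma~\ref{lemma:2-point-N} is aimed at a different (and secondary) point, namely the non-triviality of the melonic kernel; the place where Schur's Lemma is actually needed is to kill the degree-zero stranded configurations sitting inside generalized tadpoles.
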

\begin{proof}
This follows from Eq.~\ref{eq:rewrite}, Proposition \ref{propo:LO-ring}, and the fact that the partially resumed covariance $K(\lambda,N)$ introduced in Section \ref{sec:subtr} reduces to a sum over $2$-point melonic graphs in the large $N$ limit.
\end{proof}

The rest of this paper is dedicated to proving these two theorems. Similar expansions can be obtained by the usual means for the expectation of invariant observables of the model.

\newpage

\section{The perturbative expansion}
\label{sec:expansion}
    
We review the perturbative expansion of the models.
    
\subsection{Feynman and stranded graphs}
\label{sec:graphs-def}

The perturbative expansion of $F_{\bP}(\lambda)$ is obtained by:
\begin{itemize}
 \item[--] Taylor expanding in $\lambda$ and computing the Gaussian integrals (see Appendix~\ref{app:gauss}  for a brief discussion), which yields a sum over four-valent \emph{combinatorial maps} or embedded graphs;
 \item[--] taking the logarithm, which gives a sum over \emph{connected} combinatorial maps;
 \item[--] applying the operator $4\lambda \partial_{\lambda}$ which leads to \emph{rooted} connected combinatorial maps. A rooted map is a map with a halfedge on a vertex marked with an incoming arrow.
\end{itemize}
Ignoring for a second the scaling with $N$, at first order in $\lambda$ we have $F_{\bP}(\lambda) = \lambda \left[ e^{\frac{1}{2}    \partial_T \bP \partial_T     } TTTT  \right]_{T=0}$,
and the three corresponding rooted, connected, combinatorial maps are represented in  Fig.~\ref{fig:map1}.

\begin{figure}[htb]
 \begin{center}
 \includegraphics[scale=.6]{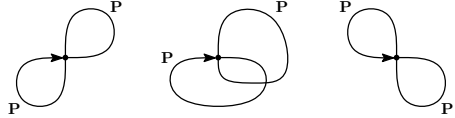}  
 \caption{First order contributions to $F_{\bP}(\lambda)$} \label{fig:map1}
 \end{center}
 \end{figure}

We denote the four valent, rooted, connected, combinatorial maps by ${\cal M}$. Each unlabeled map comes with combinatorial weight $1$
(this is why we chose to study $F_{\bP}(\lambda)$ \cite{DiFrancesco:1993nw,RTM,critical,review}):
\be\label{eq:pertexp1}
F_{\bP}(\lambda) = \sum_{ {\cal M} \text{ connected, rooted} } \lambda^{V(\cal M)} A( {\cal M} ) \;,
\ee
where $V(\cal M) $ denotes the number of vertices of ${\cal M}$ and $A( {\cal M} )$ is the amplitude of the map ${\cal M}$ which we now define.

We label the vertices of the map by $v\in \cM$, and the half edges $i_v$, with $i=0,1,2,3$. Every half edge has an associated ordered triple of indices 
$\ba^{i_v} =( a^{i_v}_{0_v}, a^{i_v}_{1_v} ,  a^{i_v}_{2_v} , a^{i_v}_{3_v} ) \setminus (a^{i_v}_{i_v})$. 
The edges of the map $e\in \cM$ will be denoted by the couple of labels of the half edges they connect: $e=(i_v,j_w)$ is the edge connecting the half edge $i_v$ of the vertex $v$
with the half edge $j_w$ on the vertex $w$. The amplitude $A( {\cal M} )$ of $\cM$ is:
\[
A(\cM) = N^{ -3 -\frac{3}{2} V(\cM) }\sum_{  a } \left( \prod_{v\in \cM} \;  \prod_{  i<j } \delta_{a^{i_v}_{j_v} a^{j_v}_{i_v } } \right) \prod_{e = (i_v,j_w) \in \cM} \bP_{ \ba^{i_v} , \ba^{j_w} } \;,
\]
where $\sum_a$ denotes the sum over all the indices $a$. 

We call, for obvious reasons, the expansion in Eq. \eqref{eq:pertexp1} the expansion in 
\emph{Feynman maps}. This expansion has the drawback that each $\bP_{ \ba^{i_v} , \ba^{j_w} }$ is a sum of terms (six in the antisymmetric case and fifteen in the symmetric traceless one), and 
these terms have different scaling with $N$.

\

It is convenient to pass from the Feynman expansion to a more detailed expansion in \emph{stranded maps} $\cS$. A stranded map is a combinatorial map with a choice of one (of six respectively fifteen) terms 
in Eq.\eqref{eq:A},\eqref{eq:S} for every edge. The six terms common to the two cases lead to \emph{unbroken} edges and the nine extra terms in \eqref{eq:S} to \emph{broken} edges. 
We call each pair of  indices contracted in the interaction vertex a \emph{corner}. The vertex is then represented as a four-valent stranded vertex with six corners, as in Fig.~\ref{fig:vertex}. 
Observe that four corners appear as genuine corners in the graphical representation of Fig.~\ref{fig:vertex}. However, the reader should bear in mind that strands which go through 
the vertex of Fig.~\ref{fig:vertex} also have a corner each (this could be represented by a dot on the middle of the strands, but we 
avoid doing this so as not to overcharge the figures). 
The stranded edges (see again Fig.~\ref{fig:vertex}) connect the corners via strands. For an unbroken edge all the strands 
traverse and connect corners at the two ends. For the broken edges, a pair of corners is connected by a strand at each end of the edge,
and one strand traverses (see Fig.~\ref{fig:vertex}).
 \begin{figure}[htb]
 \begin{center}
 \includegraphics[scale=.8]{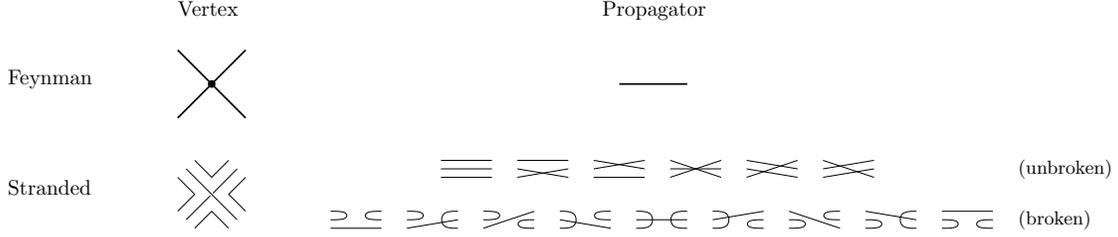}  
 \caption{Vertex and propagator of the model.} \label{fig:vertex}
 \end{center}
 \end{figure}
 
 We introduce some more notation. The unbroken edge $e = (i_v,j_w)_{\sigma^e}$ identifies pairwise the indices in the triple $\ba^{i_v}$ with a permutation $\sigma^e$ of the indices in the triple
 $\ba^{j_w}$,  $  \ba^{i_v} = \sigma^e(\ba^{j_w}  )$
 and brings a sign $\epsilon(\sigma^e)$ (which is always trivial in the symmetric case). A broken edge 
 $e = (i_v,j_w)_{k^e_v l^e_w}$
 identifies the index $a^{i_v}_{k^e_v }$ with the index $a^{j_w}_{l^e_w }$ and then identifies the 
 remaining pair of indices $k'_vk''_v$ on $i_v$ and $l'_wl''_w$ on $j_w$. The perturbative expansion now becomes:
\begin{align*}
 F_{\bP}(\lambda) & = \sum_{ \cS  \text{ connected, rooted} } \lambda^{ V (\cS)} A( \cS ) \;,  \crcr
 A(\cS) & = N^{ -3 -\frac{3}{2} V(\cS) }\sum_{  a } \left( \prod_{v\in \cS} \;  \prod_{  i<j } \delta_{a^{i_v}_{j_v} a^{j_v}_{i_v } } \right) 
\left(   \prod_{e = (i_v,j_w)_{\sigma^e} \in \cS}^{e \text{ unbroken}}  \frac{\epsilon(\sigma^e ) }{3!}  \; \delta_{   \ba^{i_v}  \sigma^e(\ba^{j_w}  )    }  \right) \crcr
& \qquad \qquad \left(  \prod_{e = (i_v,j_w)_{k^e_vl^e_w} \in \cS}^{e \text{ broken}}  \frac{ (- 2) }{ 3! (N+2) } \; \delta_{ a^{i_v}_{k^e_v} a^{j_w}_{l^e_w } } 
\delta_{ a^{i_v}_{k'_v} a^{i_v}_{ k''_v }} \delta_{ a^{j_w}_{l'_w} a^{j_w}_{l''_w}}   \right) \;.
\end{align*}
There are six choices of permutations $\sigma^e$ for the unbroken edges and nine choices of pairs $k^e_vl^e_w$ for the broken edges.
In a stranded map, the strands close into \emph{faces}. When computing the amplitude of $\cS$ 
one obtains a free sum per face.
We denote $F(\cS)$ the number of faces of $\cS$, $B(\cS)$ and $U(\cS)$ the number of broken and unbroken edges of $\cS$
and $ \epsilon(\cS) = (-1)^{ B(\cS) } \prod_{e\in \cS}^{e \text{ unbroken}} \epsilon(\sigma^e)  $ the sign of $\cS$. The amplitude of a stranded map is then:
\[
 A(\cS)  =  \left(  \frac{  \epsilon(\cS)  }{3!^{U(\cS) + B(\cS) } } \; \frac{ 1 }{ \left(  1 + \frac{2}{N} \right)^{ B(\cS)} }  \right) \; \; N^{  - \omega(\cS)} \;, 
\]
where we define the \emph{degree} of a connected map:
\be \label{eq:degree}
\boxed{ \omega(\cS) = 3 +  \frac{3}{2} V(\cS)  + B(\cS) - F(\cS)  } \;,
\ee
and the degree of a disconnected map as the sum of the degrees of its connected components. 
A priori the degree can be any half integer (positive, zero or negative), and the perturbative expansion of 
$F_{\bP}(\lambda)$ writes:
 \be\label{eq:perte1}
 F_{\bP}(\lambda)  = \sum_{ \cS  \text{ connected, rooted} } \lambda^{ V (\cS)} \left(  \frac{  \epsilon(\cS)  }{3!^{U(\cS) + B(\cS) } } \; \frac{ 1 }{ \left(  1 + \frac{2}{N} \right)^{ B(\cS)} }  \right)
 \; \; N^{  - \omega(\cS)} \;.
\ee

\paragraph{\it The main difficulty.}   Eq.~\eqref{eq:perte1} is an expansion very similar to the one we aim for in Theorem~\ref{thm:main}. However, 
the subtlety comes from the following fact: it is \emph{not true} that the 
 degree of any stranded map is non negative. In order to prove Theorem~\ref{thm:main}
 we first need to improve the naive perturbative expansion by performing a partial resummation of the perturbative expansion. This 
is done in Section~\ref{sec:subtr} and is the first step in establishing the $1/N$ expansion.

 \subsection{Examples of graphs}
 \label{sec:graphs-ex}
 
By abuse of language, but in keeping with the physics literature, we will sometimes refer to the maps as \emph{graphs} (stranded graphs, etc.). 
Several classes of maps will play a role in the sequel.

\subsubsection{Melons and tadpoles}\label{sec:cores}

At low orders we have two interesting graphs.
\begin{definition}
 We call a \emph{tadpole} an amputated\footnote{Amputated graphs have two external half edges hooked to two vertices.} one-particle-irreducible two-point Feynman graph with one vertex. We call \emph{melon}
 an amputated one-particle-irreducible two-point Feynman graph with two vertices.
\end{definition}

As a function of the embedding, there are a priori $3$ tadpoles and $6$ melons (see Fig.~\ref{fig:meltad}, the top two rows). Each tadpole leads to $6^3$ or $15^3$ stranded tadpoles, 
and each melon to $6^5$ or $15^5$ stranded melons (where we also take into account the choices of stranded external half edges).
   \begin{figure}[htb]
 \begin{center}
 \includegraphics[scale=.6]{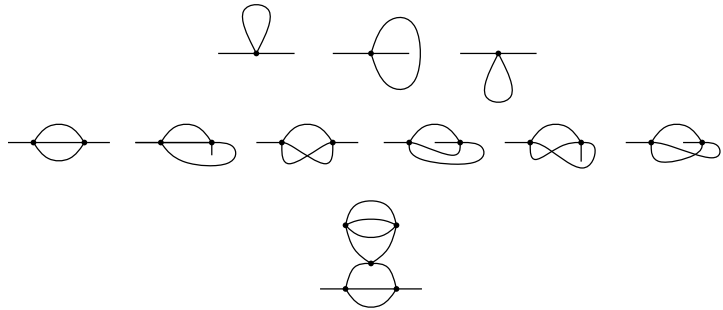}  
 \caption{The tadpole, the melon and a melon-tadpole graph.}\label{fig:meltad}
 \end{center}
 \end{figure}

\paragraph{Bad tadpoles.} Some stranded tadpoles can be used to build examples of stranded graphs with negative degree. We will call these tadpoles \emph{bad tadpoles}, and they  
have the following features (see Fig.~\ref{fig:badtad}):
 \begin{itemize}
  \item they have one internal face of length $1$,
  \item they have two external strands which return on the same external half edge, and one external strand which is transmitted from one half edge to the other.
 \end{itemize}
   \begin{figure}[htb]
 \begin{center}
 \includegraphics[scale=.5]{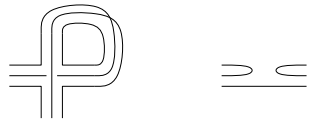}  
 \caption{A bad tadpole and its external strands.}\label{fig:badtad}
 \end{center}
 \end{figure}
 
 The problem with the bad tadpoles is that one can build a chain of $V$ bad tadpoles, as represented in Fig.~\ref{fig:chainbadtad}. The chain will have $V$ vertices, and $2V$ faces (each bad tadpole
 brings one  internal face of length $1$ and closes one external strand into a face). Such a chain will always bring a contribution:
$3V/2 - 2V = -V/2$ to the degree of a graph, hence inserting long enough chains of bad tadpoles on the edges, the degree of any graph ultimately becomes negative. 
   \begin{figure}[htb]
 \begin{center}
 \includegraphics[scale=.7]{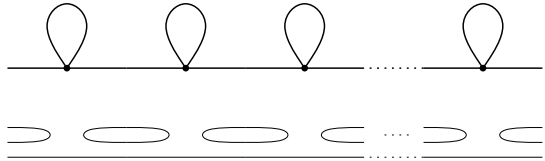}  
 \caption{A chain of bad tadpoles.}\label{fig:chainbadtad}
 \end{center}
 \end{figure}

\paragraph{Elimination of the tadpoles.}
It turns out that for both the antisymmetric and the symmetric traceless cases all the tadpoles (including the bad ones) are eliminated by a partial resummation of the perturbative series.
In both cases the resummation of all the tadpoles brings an unexpected cancellation at leading order in $N$. This is quite nontrivial and in particular does not work for the  
symmetric model with no tracelessness condition\footnote{It is for this reason that simply symmetric tensors do not support the same kind of $1/N$ expansion. In order to offset the contribution of bad tadpoles
in that case one must use a different rescaling of the coupling constant: in our notation one needs 
to keep $\lambda N^{1/2}$ fixed  when sending $N$ to infinity. This leads to a different $1/N$ expansion in which
the  melonic graphs are strictly suppressed.}.

In quantum field theory, it is well known that the tadpoles are subtracted by Wick ordering. This requires the addition of counter terms. 
We will show below by \emph{adding and subtracting} a counterterm that the original theory, defined by 
\eqref{eq:crucial} is exactly equivalent to a theory with a Wick ordered interaction but with a renormalized covariance. In this new theory 
the tadpoles are subtracted, and one can show that the graphs with no tadpoles have non negative degree. One then checks explicitly
(and this is a nontrivial check) that the renormalized covariance has a well defined large $N$ limit and is itself a series in $N^{-1/2}$. This proves Theorem~\ref{thm:main}.

The same rewriting can of course be performed in the symmetric (non traceless) case but, unless one rescales further the coupling constant, the trace modes
develop an instability in the large $N$ limit, and the covariance can not be renormalized (the would be renormalized covariance is 
the sum of a divergent series).

\paragraph{Melon-tadpole graphs.} A further subtlety comes from the fact that one must subtract not only the tadpoles, but also the melons. The degree defined in Eq.~\eqref{eq:degree}
is unchanged by the insertion of a melon on any edge. Thus a graph can be constructed by inserting a melon on the internal edge of a 
bad tadpole, see Fig.~\ref{fig:badtadmel}. Chaining this new graph one can build an arbitrary chain with no tadpoles, but which has the same effect as the chain of bad tadpoles.

   \begin{figure}[htb]
 \begin{center}
 \includegraphics[scale=.7]{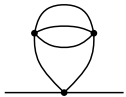}  
 \caption{An example of dangerous graph obtained by inserting a melon in a tadpole.}\label{fig:badtadmel}
 \end{center}
 \end{figure}

It follows that one needs to perform the explicit resummation and subtraction of both the tadpole and the melon graphs and 
find an equivalent formulation of the theory in terms of graphs having neither tadpoles nor melons.

\begin{definition}
The lowest orders melon-tadpole graphs are the melon and the tadpole.
A melon-tadpole graph is an amputated two-point graph which can be obtained from 
a lower order melon-tadpole graph by inserting a melon or a tadpole on an edge (see Fig.\ref{fig:meltad}, bottom row). 
\end{definition}

If a graph has neither melons nor tadpoles then it does not have any melon-tadpole subgraphs.
An edge in a two-point graph is a one-particle reducibility (1PR) edge if the graph disconnects by cutting it. A graph with no 1PR edges is called one-particle irreducible (1PI).
Melon-tadpole graphs can be defined alternatively as the two-point graphs such that their 1PI components (obtained by cutting all the 1PR edges)
are either a melon or a tadpole with arbitrary melon-tadpole insertions on the internal edges. 
It can be shown following \cite{GurSch} that if two melon-tadpole graphs are not totally disjoint then their union is a melon-tadpole graph.
It follows that melon-tadpole subgraphs in a graph $\cS$ can be extended maximally to non overlapping melon-tadpole graphs and can be eliminated simultaneously 
to pass to a melon-tadpole free \emph{core}.

\subsubsection{Dipoles, triangles and chains} 

A graph $\hat S$ which has neither melons nor tadpoles is sometimes called \emph{melon-tadpole free}.

\begin{definition}\label{def:dipole}
A $2$--\emph{dipole} (dipole for short) consists in 2 vertices connected by exactly 2 edges (which we call internal edges).  
A dipole has 4 external half edges, two on each vertex.

A \emph{triangle} is a subgraphs made out of $3$ vertices, $3$ edges connecting them pairwise, and $6$ external half edges (two per vertex). 
\end{definition}

If a pair of external half edges of a dipole is joined into an edge one obtains either a tadpole or a melon. If a pair of external half edges of a
triangle is connected into an edge one obtains either a tadpole or a dipole. Thus 
 If $\hat \cS$ is melon-tadpole free and has a dipole, then no pair of the four external half edges of the dipole can be joined into an edge of $\hat \cS$
 and if $\hat \cS$ is melon-tadpole free and has no dipole but has a triangle, no pair of the six external half edges of the triangle can be joined into an edge of $\hat \cS$.

\begin{definition}
   A \emph{chain} is a melon-tadpole free four-point graph which becomes a melon-tadpole graph by connecting one pair of external halfedges into an edge.
   A chain is \emph{proper} if it has at least three vertices\footnote{Observe that according to this definition, a vertex or a $2$-dipole are also chains. They are however not proper.}. A chain is maximal if it is not a subgraph of a larger chain. 
\end{definition}
\begin{figure}[htb]
 \begin{center}
 \includegraphics[scale=.8]{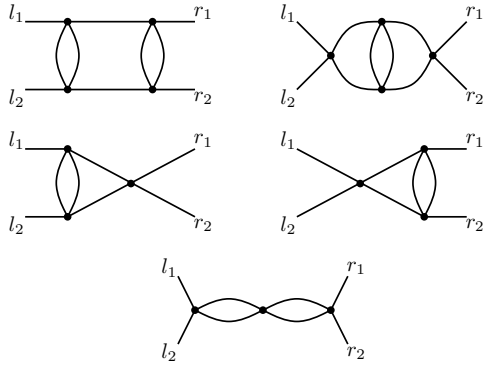}  
 \caption{The shortest proper chains.}\label{fig:minpropchain}
 \end{center}
 \end{figure}

The proper chains with three and four vertices are presented in Fig.~\ref{fig:minpropchain}. 
The proper chains can be extended maximally: starting from one of the chains in Fig.~\ref{fig:minpropchain}, 
one may try to extend the chain (say towards the right). To this end, one checks whether the 
right external half edges $r_1$ and $r_2$ of the chain are joined to the left external half edges $l_1'$ and $l_2'$ of a dipole. If they are, then the chain can be extended, and there are two possible
cases (see Fig.~\ref{fig:dipoles}):
\begin{itemize}
 \item  $l_1'$ and $l_2'$ are incident one to each vertex of the dipole, hence the dipole is vertical;
 \item  $l_1'$ and $l_2'$ are incident to the same vertex of the dipole, hence the dipole is horizontal.
\end{itemize}
One may then repeat this procedure until a maximal number of dipoles have been added to the chain, and a \emph{maximal proper chain} has been identified. 

\begin{figure}[htb]
 \begin{center}
 \includegraphics[scale=.8]{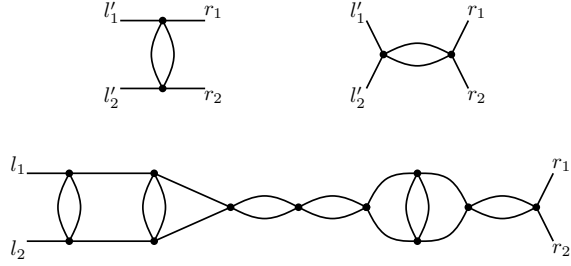}  
 \caption{Horizontal dipole, vertical dipole and a maximal chain.}\label{fig:dipoles}
 \end{center}
 \end{figure}
 
\subsubsection{End graphs} 

We denote $F_q(\cS)$ the number of faces with $q$ corners (i.e. of length $q$) of a graph $\cS$. We call the faces of length $1$, $2$ or $3$ \emph{short}. 

\begin{proposition}\label{prop:end}
We call \emph{end graphs} the stranded graphs $\cS$ such that:
 \begin{itemize}
  \item either $\cS$ has no vertices, that is it is a ring graph consisting in only one edge closed onto itself, 
  \item or $\cS$ has no short faces, $F_1(\cS) = F_2(\cS) = F_3(\cS) = 0$.
 \end{itemize}
 End graphs have non negative degree.
\end{proposition}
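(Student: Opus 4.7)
The plan is to separate the two sub-cases in the definition of end graphs. The ring case (no vertices) is a degenerate one that I would dispose of by direct inspection, while the main case (no short faces) reduces to a handshake-type corner counting fed into the degree formula~\eqref{eq:degree}.

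For the ring case, $V(\cS)=0$, so $\cS$ is a single edge closed onto itself with $B(\cS)\in\{0,1\}$, and the face count is determined entirely by the strand structure of that edge. If the edge is unbroken with permutation $\sigma^e\in\mathfrak{S}_3$, the three strands close into cycles of $\sigma^e$, so $F(\cS)$ equals the number of cycles of $\sigma^e$, hence $F(\cS)\in\{1,2,3\}$ and $\omega(\cS)=3-F(\cS)\in\{0,1,2\}$. In the broken case a short direct enumeration of the possible pairings of strand ends yields $F(\cS)\le 3$ and again $\omega(\cS)\ge 0$.

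For the main case, I would exploit the fact that the tetrahedral interaction pairs the six indices $a_1,\ldots,a_6$ among the four tensors of a vertex, so every vertex carries exactly six corners (the four visible in the graphical representation of Fig.~\ref{fig:vertex} plus the two attached to the strands which go straight through). Since each corner belongs to exactly one face,
\[
 \sum_{q\ge 1} q\,F_q(\cS) \;=\; 6\,V(\cS) \;.
\]
Under the assumption $F_1(\cS)=F_2(\cS)=F_3(\cS)=0$, every face has length at least $4$, so
\[
 F(\cS) \;=\; \sum_{q\ge 4} F_q(\cS) \;\le\; \tfrac{1}{4}\sum_{q\ge 4} q\,F_q(\cS) \;=\; \tfrac{3}{2}\,V(\cS) \;,
\]
and plugging this into~\eqref{eq:degree} yields $\omega(\cS)\;\ge\; 3+B(\cS)\;\ge\; 3$, which is certainly non-negative.

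The only step where I expect to have to be careful is the corner count: one must remember the two \emph{hidden} corners associated with the strands passing straight through each vertex, without which the bound would degrade to $F(\cS)\le 2\,V(\cS)$ and the conclusion would fail. Once this is properly set up, the main estimate is a one-line handshake inequality and the ring case is just a small amount of bookkeeping on $\mathfrak{S}_3$.
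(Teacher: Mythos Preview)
Your proof is correct and follows essentially the same approach as the paper's own proof: the handshake identity $\sum_q q F_q(\cS)=6V(\cS)$ combined with $F_q=0$ for $q\le 3$ yields $F(\cS)\le \tfrac{3}{2}V(\cS)$ and hence $\omega(\cS)\ge 3$, while the ring case is handled by the observation that $F(\cS)\le 3$. The paper's proof is terser (it does not split the ring case into broken and unbroken edges), but the logical content is the same.
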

\begin{proof}
As a vertex contributes $6$ corners to the faces we have:
\[
F_1(\cS) + 2F_2(\cS) + 3 F_3(\cS) + 4F_4(\cS) + \dots = 6V(\cS) \;,
\]
therefore, if a connected graph $\cS$ has no short faces, then $3 V(\cS) \geq 2 F(\cS)$ and the degree $\omega(\cS)$ is at least $3$. 
The ring graphs have no vertex and at most $3$ faces, hence non negative degree.

\end{proof}
 
\subsubsection{Graphs with no tadpoles, no melons and no broken edges} 
 
We will encounter graphs $G$ with no tadpoles, no melons and no broken edges. 
Observe that for such graphs the faces of length two must belong to dipoles, and the ones of length three to triangles. 
 The main technical result of this paper is the following Proposition.
 
\begin{proposition}\label{prop:moves}
 If a graph $G$ with no tadpoles, no melons and no broken edges has a dipole, then there exists a graph $G'$ such that:
     \begin{itemize}
         \item $\omega(G) \ge \omega(G')$,
         \item $V(G)> V(G')$,
         \item $G'$ has no tadpoles, no melons and no broken edges.
\end{itemize}
The same holds if $G$ has no dipole, but it has a triangle.
\end{proposition}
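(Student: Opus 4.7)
The plan is to build $G'$ by a local contraction of the identified subgraph, combined with a re-pairing of the external halfedges dictated by the stranded structure, and (when necessary) by enlarging the surgery to a maximal chain containing the dipole. Since no broken edge can be produced by composing permutations carried by unbroken edges, we will automatically have $B(G')=0$, so the degree formula collapses to $\omega = 3 + \frac{3}{2}V - F$. Consequently, to certify $\omega(G')\le\omega(G)$ I need the face count to drop by at most $3$ when a dipole is contracted ($\Delta V = -2$) and by at most $4$ when a triangle is contracted ($\Delta V = -3$).

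For a dipole I would first classify it by the two permutations $\sigma_1,\sigma_2\in\mathfrak{S}_3$ labelling its internal edges; the number of internal faces equals the number of cycles of $\sigma_1^{-1}\sigma_2$, giving the three possibilities $1$, $2$, $3$ that correspond to the horizontal, intermediate and vertical dipoles shown in Figure~\ref{fig:dipoles}. The contraction move removes the two vertices and the two internal edges of the dipole and replaces them by two new edges whose permutation labels are the compositions along the two re-paired external paths. A strand-by-strand count in each type shows that the number of faces lost — internal faces of the dipole plus external faces merged by the re-pairing — is exactly $3$ in the vertical case, giving $\omega(G')=\omega(G)$, and is strictly smaller in the intermediate and horizontal cases because fewer internal faces are present and any additional external merging is bounded by $3$.

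The principal obstacle is that this surgery may reintroduce a tadpole or a melon into $G'$: a tadpole appears precisely when the two external halfedges at one side of the dipole are joined by a single external edge of $G$, and a melon when the two sides are joined by two parallel external edges. To circumvent this, rather than contracting an arbitrary dipole I would invoke the maximal proper chain construction of Section~\ref{sec:graphs-ex}: pick the maximal chain containing the dipole and contract it as a block. By maximality, no further dipole sits at either extremity of the chain, and the melon–tadpole freeness of $G$ then forbids the pairs of external halfedges at the two ends from closing into a tadpole or a melon. Broken edges cannot appear because every newly inserted edge carries the composition of unbroken edge labels.

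The triangle case, invoked only when $G$ has no dipole, proceeds along the same lines: excise the three vertices and three internal edges of the triangle and reconnect the six external halfedges according to the stranded routing of the triangle. The absence of dipoles in $G$ is what forces the three edge-permutations to be pairwise generic (otherwise a hidden $2$-dipole would be present), and it is also what guarantees that no melon or tadpole is produced upon reconnection. A case analysis on the product of the three permutations yields the bound $\Delta F\ge -4$ required for the degree. The hardest single step, and what I expect to occupy the bulk of Section~\ref{sec:proof1}, is the exhaustive case-by-case verification that simultaneously controls the face bookkeeping \emph{and} certifies that $G'$ remains free of tadpoles, melons, and broken edges.
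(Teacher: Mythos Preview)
Your proposal has genuine gaps in both halves. In the dipole case, the formula ``internal faces $=$ number of cycles of $\sigma_1^{-1}\sigma_2$'' is wrong: it would hold if the two internal edges met the same triple of strand-endpoints at each vertex, but the tetrahedral vertex has a fixed corner structure and the two internal halfedges of a dipole share only one corner per vertex. The actual enumeration (Fig.~\ref{fig:36}) gives $36$ stranded dipole types, only $4$ of which carry an internal face, and that face is unique --- no dipole has three internal faces (that would be a melon). Your ``horizontal/vertical'' labels also do not match the paper's (Fig.~\ref{fig:dipoles}), which refer to the placement of external halfedges, not face counts. The maximal-chain contraction you propose is a different route from the paper; the maximality argument for avoiding tadpole/melon recreation is plausible, but you have not supplied the face bookkeeping for contracting a chain of arbitrary length $n$ (you need $\Delta F\ge -\tfrac{3n}{2}$ with a consistent strand re-routing across all $n$ vertices), and this is not obvious. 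The paper instead deletes a single dipole, trying the $\perp$, $=$ and $\times$ channels in turn; when all three either disconnect or create a tadpole/melon it shows the dipole sits inside one of four explicit $2$-point subgraphs $H_1,\ldots,H_4$ (Fig.~\ref{fig:adegchains}) and invokes Lemma~\ref{lem:2pointmoves}, which proves that contracting any $H_i$ \emph{strictly} lowers $\omega$ by at least $1/2$ --- enough slack to absorb one subsequent tadpole deletion.

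In the triangle case your key claim, that the absence of dipoles ``guarantees that no melon or tadpole is produced upon reconnection'', is false: the paper exhibits explicit dipole-free configurations (Figs.~\ref{fig:untwisted-samemel}, \ref{fig:trimelon}, \ref{fig:untwisted_tad}) in which triangle deletion does create a melon or a tadpole. The actual mechanism is not to forbid this but to observe that whenever it happens $G$ must contain a pair of adjacent untwisted triangles, removable simultaneously by a $4\to 2$ move (Lemma~\ref{lem:4to2}); and one triangle type (boundary $\cG_6$) resists the $3\to 2$ move altogether and requires $3\to 1$ or $3\to 0$ moves. None of this is captured by a product-of-permutations argument.
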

\begin{proof}
This follows from Lemmata~\ref{lem:dipole} and~\ref{lem:tripole} in Section~\ref{sec:proof1}.

\end{proof}

 \newpage

\section{Subtraction}
\label{sec:subtr}

We now show that the models for both symmetric traceless and antisymmetric tensors in Eq.~\eqref{eq:crucial1} are equivalent 
to models with melons and tadpoles subtracted and renormalized covariance.
Distinguishing the fields in the interaction in Eq.~\ref{eq:crucial1} (and considering that repeated indices are summed) we have in 
both cases\footnote{The second equality involves only permutations of indices and the last one is just a relabeling.}: 
\begin{align*}
 & P^0_{a_1a_2a_3}   P^1_{a_3a_4a_5}    P^2_{a_5 a_2 a_6}  P^3_{a_6a_4 a_1}  = 
 P^0_{a_1a_2a_3}  P^2_{a_5 a_2 a_6}   P^1_{a_3a_4a_5}  P^3_{a_6a_4 a_1}  \crcr 
 & \qquad  =P^0_{a_1a_3a_2}  P^2_{a_2 a_6 a_5}        P^1_{a_5a_3a_4}  P^3_{a_4a_6 a_1} = 
   P^0_{a_1a_2a_3}  P^2_{a_3 a_4 a_5}   P^1_{a_5a_2a_b}  P^3_{a_6a_4 a_1} \;,  
\end{align*}
that is, the interaction is invariant under arbitrary permutations of the fields hence the amplitude of the Feynman map $\cM$ is insensitive to the embedding of $\cM$. 

\

Let us consider a theory with covariance $K\bP$, where $K$ is a real number. The self energy (amputated  one-particle-irreducible two-point function) up to order $\lambda^2$ is:
\begin{align*}
 &  \Sigma^{(2)}_{a_1a_2a_3,b_1b_2b_3}  = 3 \frac{\lambda K}{N^{3/2}} \sum_{c} \bP_{a_1a_2a_3, c_1c_2c_3} \bP_{c_3c_4c_5, c_5c_2c_6} \bP_{c_6 c_4 c_1,b_1b_2b_3}   \crcr
 & \qquad   + 6 \frac{\lambda^2 K^3}{N^3}
     \sum_{c,d} \bP_{a_1a_2a_3, c_1 c_2c_3}    \bP_{c_3 c_4 c_5, d_3 d_4 d_5 } \bP_{c_5c_2 c_6 , d_5 d_2 d_6 } \bP_{c_6c_4c_1 ,  d_6 d_4 d_1  } \bP_{ d_1 d_2 d_3 , b_1b_2b_3} \;.
\end{align*}
Using the explicit formulae of the two projectors in Eq.\eqref{eq:A}, \eqref{eq:S}, a short computation shows that:
\[
 \Sigma^{(2)}_{a_1a_2a_3,b_1b_2b_3} =  \bigg(  \lambda K f^{\bP}_1  +    \lambda^2 K^3 f_2^{\bP}   \bigg) \bP_{a_1a_2a_3,b_1b_2b_3} \;,
\]
where:
\begin{align*}
    f_1^{\bA}   =    \frac{N-2}{ 2 N^{3/2}}  & \;, \qquad f_2^{\bA}  =  \frac{N^3 - 9 N^2 +32N-36 }{6^2   N^3 }  \;,  \crcr
    f_1^{\bS}   =   \frac{N^2+2N-8}{ 2 N^{3/2} (N+2)} & \;, \qquad f_2^{\bS}  =  \frac{N^6 +15 N^5 + 64N^4 -84N^3 -800N^2 +384 N +1536}{6^2 N^3 ( N+2)^3}  \;.
\end{align*}

In both cases $f_1^{\bP}$ is a series in $N^{- 1/2 }$ with $\lim_{N\to \infty } f_1^{\bP} = 0 $ while $f_2^{\bP}$ is a series in $N^{-1}$ with $  \lim_{N\to \infty} f_2^{\bP}  =  \frac{1}{36} $.
We denote by some abuse of notation $ \Sigma^{(2)  }=   \lambda K f_1^{\bP} +   \lambda^2 K^3 f_2^{\bP}$ and $T^4$ the interaction in Eq.~\eqref{eq:modelgen}. 

The subtracted interaction:
\[ \frac{ \lambda }{4 N^{3/2}} : T^4:_K =  \frac{ \lambda }{4 N^{3/2} } T^4 -  \frac{1}{2} \Sigma^{(2)} T \bP T  \;, \]
is ``Wick ordered'' up to second order in $\lambda$ with respect to the measure with covariance $K\bP$.
This interaction subtracts the tadpole and the melon contributions: the Feynman graphs of the model
with covariance  $K\bP$ and interaction $: T^4:_K $ have \emph{neither tadpoles, nor melon} subgraphs. 
It remains now to chose $K$ such that the model with covariance $K\bP$ and interaction subtracted with respect to $K$
is the original model of Eq.~\eqref{eq:crucial}:
\begin{align*}
  Z_{\bP}(\lambda) & = \left[  e^{\frac{1}{2} \frac{\partial}{\partial T}  \bP  \frac{\partial}{\partial T}  }  \; e^{  \frac{\lambda}{4 N^{3/2}} T^4}\right]_{T=0} = 
  \left[  e^{\frac{1}{2} \frac{\partial}{\partial T}  \bP  \frac{\partial}{\partial T}  }  \; e^{  \frac{   \Sigma^{(2)}  }{2}  T \bP T +  \frac{\lambda}{4 N^{3/2}} : T^4:_K }\right]_{T=0}  =
   \left[  e^{\frac{1}{2} \frac{1}{1 - \Sigma^{(2) }} \frac{\partial}{\partial T}  \bP \frac{\partial}{\partial T}  }  \; e^{  \frac{\lambda}{4 N^{3/2}} : T^4:_K }\right]_{T=0}   \;,
\end{align*}
therefore we chose $K$ such that: 
\[
 K = \frac{1}{  1 - \Sigma^{(2)}  } \Rightarrow   1 - K + \lambda f_1^{\bP}  K^2  +   \lambda^2 f_2^{\bP}  K^4  =0 \; . 
\]

For $N$ large and $\lambda$ small enough this equation admits a solution $K(\lambda,N)$ which is\footnote{We show in Appendix \ref{app:symmodel} that it is at this stage that the $1/N$ expansion 
fails for a symmetric tensor with no tracelessness condition: such a function $K(\lambda,N)$ does not exist in that case.}: a series in both $\lambda$ and $N^{-1/2}$, uniformly bounded in both $N$ and $\lambda$, such that
$\lim_{N\to \infty} K(\lambda,N)  $ is the generating function of the $4$-Catalan numbers, and
 \[ \lim_{\lambda \to 0} \left[  \lim_{N\to \infty} K(\lambda,N) \right] = 1 \;.\] 

It follows that Eq.~\eqref{eq:crucial} can be written as:
 \begin{align}\label{eq:crrrucial}
  F_{\bP}(\lambda) = \frac{4}{N^3} \lambda\partial_{\lambda} \ln \bigg\{
     \left[ e^{\frac{1}{2} \;   K (\lambda, N)  \partial_T \bP \partial_T   }      \; e^{  \frac{  \lambda  }{4 N^{3/2}}   
 :  \sum_{a_1\dots a_6}  T_{a_1a_2a_3}   T_{a_3a_4a_5}    T_{a_5 a_2 a_6}  T_{a_6a_4 a_1}:_{K(\lambda,N)} } \right]_{T=0}  
  \bigg\} \; ,
\end{align}
which is, as advertised, a theory with renormalized covariance $K (\lambda, N)$ and interaction which is both tadpole and melon subtracted
with respect to $K (\lambda, N)$. 
The perturbative expansion generates now Feynman graphs (which in turn expand in terms of stranded graphs) with \emph{no tadpoles and no melons}\footnote{The Feynman graphs $\hat \cS$ of 
 the subtracted theory are nothing but the melon-tadpole free cores discussed in  Section~\ref{sec:cores}.} and Eq.~\eqref{eq:perte1} becomes:
 \be\label{eq:rewrite}
 F_{\bP}(\lambda)  = \sum_{ \genfrac{}{}{0pt}{}{ \hat \cS  \text{ connected, rooted} }{  \text{with no tadpoles and no melons} } } \lambda^{ V (\hat \cS )} \left(  \frac{  \epsilon(\hat \cS )  }{3!^{U(\hat \cS ) + B(\hat \cS ) } } \; 
 \frac{ 1 }{ \left(  1 + \frac{2}{N} \right)^{ B(\hat \cS )} }  \right) \bigg[K(\lambda, N)  \bigg]^{U(\hat \cS ) + B(\hat \cS )}  
 \; \; N^{  - \omega(\hat \cS )} \;.
\ee

As $K(\lambda, N) $ is itself a series in $N^{-1/2}$, the $1/N$ expansion in Theorem~\ref{thm:main} follows from the following Proposition.
\begin{proposition}\label{prop:main}
 Let $\hat \cS $ be a connected stranded graph with no tadpoles and no melons. Then $\omega(\hat \cS) \ge 0$.
\end{proposition}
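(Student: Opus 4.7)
The plan is to proceed by strong induction on $V(\hat{\cS})$, with Proposition~\ref{prop:moves} providing the main inductive reduction and Proposition~\ref{prop:end} closing off the terminal cases. The base cases are small: the ring graph has no vertices and at most $3$ faces, so $\omega \geq 0$; graphs with $V = 1$ are excluded since the unique $1$-vertex two-point object is a tadpole, and $V = 2$ two-point objects that are melons are also excluded by hypothesis (and any other low-$V$ configuration is checked directly). The induction step splits into a broken-edge elimination and an iterated application of Proposition~\ref{prop:moves}.

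First I would reduce to the case where $\hat{\cS}$ has no broken edges. A broken edge $e = (i_v, j_w)_{k^e_v l^e_w}$ contributes $+1$ to the degree via $B(\hat{\cS})$ and forces ``turnback'' identifications $a^{i_v}_{k'_v} = a^{i_v}_{k''_v}$ (and similarly at $w$), which are short strands linking two corners on the same vertex. The idea is to replace $e$ by one of the six unbroken permutations over the same pair of half-edges, producing $\hat{\cS}'$ with $B(\hat{\cS}') = B(\hat{\cS}) - 1$ and $V(\hat{\cS}') = V(\hat{\cS})$. By a local face-counting argument together with a pigeonhole over the six choices, some unbroken replacement should satisfy $F(\hat{\cS}') \geq F(\hat{\cS}) - 1$, giving $\omega(\hat{\cS}') \leq \omega(\hat{\cS})$. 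The choice must also avoid creating a new tadpole or melon; the tadpole/melon-free hypothesis on $\hat{\cS}$ rules out the pathological replacements, and any melon/tadpole that is unavoidably created only involves the endpoints of $e$ and can be further reduced.

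Second, once $B(\hat{\cS}) = 0$, I iterate Proposition~\ref{prop:moves}. While $\hat{\cS}$ contains a dipole or triangle, the Proposition yields $\hat{\cS}'$ with strictly fewer vertices, still free of tadpoles, melons, and broken edges, and with $\omega(\hat{\cS}') \leq \omega(\hat{\cS})$; by the induction hypothesis, $\omega(\hat{\cS}') \geq 0$. If instead $\hat{\cS}$ has no dipole and no triangle, then since all edges are unbroken, length-$2$ faces would force a dipole and length-$3$ faces would force a triangle, so $F_2(\hat{\cS}) = F_3(\hat{\cS}) = 0$; the absence of tadpoles (combined with no broken edges) rules out length-$1$ faces by a short local argument, since a length-$1$ face in an unbroken graph would close a single corner into itself through one edge, forcing that edge to be a self-loop. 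Thus $\hat{\cS}$ is an end graph and Proposition~\ref{prop:end} yields $\omega(\hat{\cS}) \geq 0$.

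The main obstacle is the broken-edge elimination: the turnback strands at a broken edge can interact intricately with the global face structure, and one must carefully track how each of the six unbroken replacements affects the face count while simultaneously preserving the no-tadpole/no-melon conditions. This is conceptually the analogue, in the presence of broken edges, of the dipole/triangle reduction performed by Proposition~\ref{prop:moves}, and requires a detailed local case analysis. Once this step is in place, the rest of the induction is driven cleanly by Propositions~\ref{prop:moves} and~\ref{prop:end}.
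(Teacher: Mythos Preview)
Your approach is the paper's: eliminate broken edges, then iterate Proposition~\ref{prop:moves} until an end graph is reached and Proposition~\ref{prop:end} applies. The one substantive correction is that the broken-edge elimination you flag as the ``main obstacle'' is in fact the trivial step. Given a broken edge, cut its two returning strands and reglue them as two traversing strands; this is a single cut-and-glue on one pair of strands, so $\vert \Delta F \vert \leq 1$ while $\Delta B = -1$, and hence $\omega$ cannot increase. No pigeonhole over the six unbroken permutations is needed. Moreover, tadpoles and melons are properties of the underlying Feynman graph (which half-edges are paired into edges), not of the strand pattern on an edge, so changing the strand configuration on a fixed edge cannot create or destroy them---your worry here is unfounded. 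All the actual work sits inside Proposition~\ref{prop:moves}.
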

\begin{proof}
 Recall that the degree of $\hat \cS$ is defined as:
\[
\omega(\hat \cS) = 3 +  \frac{3}{2} V(\hat \cS)  + B(\hat \cS) - F(\hat \cS)  \;,
\]
where $V(\hat \cS), B(\hat \cS)$ and $F(\hat \cS)$ are the numbers of vertices, broken edges and faces of the map $\hat \cS$. 

We can replace all the broken edges in $\hat \cS$ by unbroken ones by cutting the two returning strands  
and regluing them the other way around into two traversing strands. This does not increase the degree: the number of faces can either increase or decrease by $1$, while the number of broken edges always 
decreases by $1$. Also, this does not introduce tadpoles or melons. 

Without loss of generality we restrict from now on to graphs with no tadpoles, no melons and  no broken edges, which we denote by $G$. 
All the strands go along the edges of $G$, hence the faces of length $q$ of $G$ are bounded by cycles of edges of length $q$. In particular the faces of length two are bounded by dipoles, and the faces of length three
by triangles. As $G$ has no tadpoles and no melons, $F_1(G)=0$. We iteratively apply the following:
\begin{itemize}
  \item either $G$ has dipoles, or it has no dipoles but has triangles. In virtue or Proposition~\ref{prop:moves} there exists a stranded graph $G'$ with no tadpoles, no melons, no broken edges,
        strictly fewer vertices than $G$, and degree not larger 
        than $G$,
 \item  or $G$ has no tadpoles, no dipoles and no triangles. Then $G$ is an end graph in the sense of Proposition~\ref{prop:end}, and in this case $\omega(G)\ge 0$ .
 \end{itemize}
 This proves the Proposition.
 
 In the next Section we explain how $G'$ is constructed from $G$ by erasing dipoles and triangles, following a precise algorithm. This construction is quite subtle and 
 in particular it is not true that $G'$ has, for instance, fewer dipoles than $G$ (the deletion of a dipole can in fact create two dipoles). However, our induction is 
 on the number of vertices of the graph, hence ends in a finite number of steps.
 \end{proof}

\newpage

\section{Deletions}
\label{sec:proof1}

From now on graphs are understood to have no broken edges.

\subsection{Dipoles}
 
\begin{lemma}\label{lem:melon}
 Let $G$ be a graph with only unbroken edges which has a melon. Then there exists a graph $G'$ obtained from $G$ by replacing the melon with an unbroken edge such that $\omega(G')\le \omega(G)$. 
\end{lemma}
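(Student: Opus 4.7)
My plan is to perform an explicit surgery: remove the two vertices of the melon $M$ and its three internal unbroken edges, and insert in their place a single unbroken edge $e$ whose permutation $\sigma^e$ is chosen, among the six possibilities, so as to best mimic the pairing of the six external strand-ends of $M$ induced by its external strand-paths. Since the surgery removes two vertices and introduces no broken edges, the degree formula gives
\[
\omega(G') - \omega(G) \;=\; -3 \,+\, \big(F(G) - F(G')\big),
\]
so the lemma reduces to the combinatorial inequality $F(G) - F(G') \le 3$.

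\textbf{Face bookkeeping.} Faces not touching $M$ are preserved. The affected faces split into (i) \emph{internal} faces of $M$, i.e., strand cycles closed entirely inside $M$, which are destroyed by the surgery; and (ii) \emph{external} faces, which are rearranged because $M$ and $e$ generally induce different perfect matchings on the six external strand-ends. For (i), an internal face alternates between the two vertices of $M$ through its internal edges, so its length is even and at least $2$. Such a face uses that many internal-internal corners of $M$, and since each of the two vertices offers only $\binom{3}{2}=3$ of them, I obtain $f_{\mathrm{int}}(M) \le 3$.

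\textbf{External pairing and main obstacle.} The external strand-paths of $M$ induce a matching on the six external strand-ends, of one of two types. In the \emph{crossing} case (three pairs each joining the two external half-edges of $M$), $\sigma^e$ can be chosen so that $e$ reproduces $M$'s external pairing on the nose; the external face structure then persists, giving $F(G)-F(G')=f_{\mathrm{int}}(M)\le 3$ and concluding the lemma. In the \emph{returning} case (two pairs joining strand-ends at a single external half-edge of $M$), no unbroken edge $e$ can reproduce the pairing exactly. The main difficulty is to verify that, in this case, an optimal $\sigma^e$ leaves an external-cycle-count discrepancy of at most $1$, which is then absorbed by the refined bound $f_{\mathrm{int}}(M) \le 2$ (each returning path traverses at least two internal edges, leaving $\le 4$ edge-strands for internal faces). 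This comes down to a finite check on cycle counts of unions of perfect matchings on six points.
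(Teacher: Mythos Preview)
Your proposal is correct and follows essentially the same approach as the paper: both split into the case where all three external strand-paths traverse (where the internal-face bound $f_{\mathrm{int}}\le 3$ suffices) and the case where two loop back (where the refined bound $f_{\mathrm{int}}\le 2$ leaves room for one extra face loss from the external mismatch). The only noteworthy difference is in how the returning case is finished: instead of your finite check on matchings of six points, the paper first replaces $M$ by a \emph{broken} edge---whose one crossing and two returning strands reproduce $M$'s external pairing exactly, so no external faces are disturbed---and then converts that broken edge to an unbroken one via a single cut-and-glue, which changes the face count by at most one; this is a slightly slicker way to certify the ``discrepancy $\le 1$'' you assert.
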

\begin{proof}
 There are 9 internal strands provided by the edges. The internal face have even length. An external face of the melon
 \begin{itemize}
  \item[--] either traverses from one external halfedge to the other, in which case it has odd length;
  \item[--] or loops back and returns on the same external halfedge, in which case it has even length.
 \end{itemize}

 If all the three external faces of the melon traverse, there are at most 6 remaining strands which can close into at most 3 internal faces. Deleting the vertices and the internal faces does not increase the degree.
 
 If two external faces of the melon loop back, then there are at most 5 strands left to support internal faces, and therefore at most two internal faces. We delete the two vertices and the two internal faces.
 This creates a broken edge. As before, we can replace the broken edge by an unbroken one, which can at most delete another face. In total two vertices  are deleted and 
 at most three faces are also deleted. Thus the degree can not increase. 
 
\begin{figure}[htb]
 \begin{center}
 \includegraphics[scale=.3]{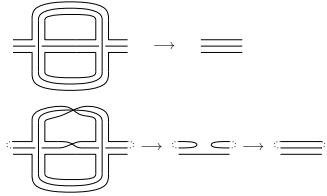}  
 \caption{Examples of deletions of melons which do not change the degree. The dots represent a specific configuration of faces
 dictated by the rest of the graph. The three external strands traverse the melon in the top panel, while two external strands loop 
 back in the bottom one; in both cases, at most 3 faces are deleted by deleting the melon.}\label{fig:deletemelons}
 \end{center}
 \end{figure}
 
  Examples of limiting cases in which the degree is unchanged after the deletion are provided in Fig.~\ref{fig:deletemelons}. 
  
\end{proof}

 \begin{figure}[htb]
 \begin{center}
 \includegraphics[width=10cm]{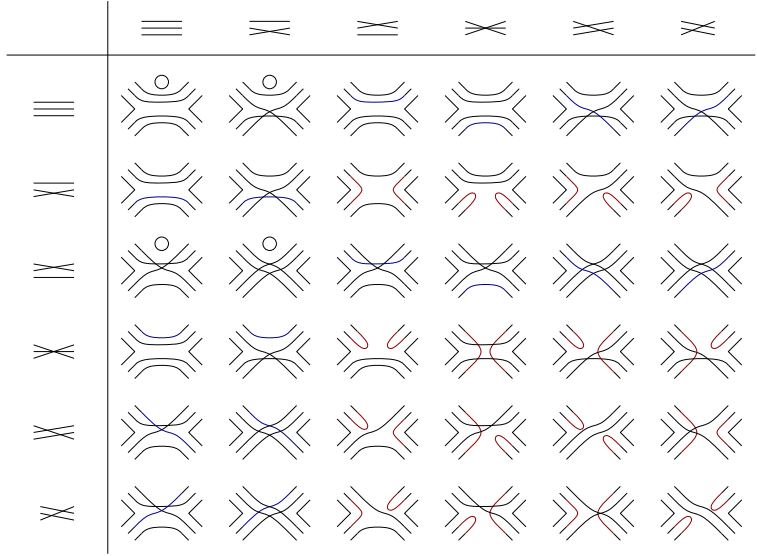}  
 \caption{The faces of a 2-dipole with unbroken edges. There are $6$ available 
 strands (3 for each edge). We colored in black the strands of length $1$, in red those of length $2$ and in blue those of length $3$. The
 odd length strands go from left to right and the even length ones 
 leave and return to the same side. Four of the cases have an internal face of length $2$ which we draw as a small circle. } \label{fig:36}
 \end{center}
 \end{figure}

 A face of length two is bounded by a $2$--dipole, but not every $2$--dipole encloses a face of length $2$. 
All the possible configurations of connexions of faces for a $2$--dipole are presented in Fig.~\ref{fig:36}.

  \begin{figure}[htb]
 \begin{center}
 \includegraphics[scale=.6]{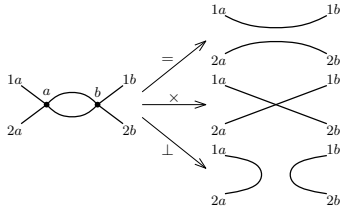}  
 \caption{Deletion of dipoles. The vertices $a$ and $b$ are deleted and the half edges are reconnected in a pairing. Some pairs of internal strands 
 need to be cut and reglued.} \label{fig:channels}
 \end{center}
 \end{figure}

 A 2--dipole can be \emph{deleted}. This consists in deleting the two vertices and reconnecting the strands into two new edges. 
 There are a priori three channels of deletion \emph{parallel} $=$, \emph{cross} $\times$ and \emph{orthogonal} $\perp$ (with respect to the parallel edges of the dipole)
 represented in Fig.~\ref{fig:channels}. We denote $G'$ the graph obtained from $G$ by a deletion of a 
 $2$--dipole. The deletion always erases two vertices, $V(G') = V(G)-2$. Concerning the variation of the number of faces:
\begin{itemize}
 \item[--] the internal face of length two, if it exists, is erased.
 \item[--] pairs of external strands are \emph{cut and reglued}. Each cut-and-glue operation:
  \begin{itemize}
   \item either deletes a face, if two different faces of $G$ are cut and reglued in a face of $G'$;
   \item or creates a face or leaves the number of faces unchanged (depending on the structure of the face and the regluing), if the same face of $G$ is cut twice and reglued.
  \end{itemize}
\end{itemize}

Observe that a deletion can separate  connected components, hence can change the number of connected components of $G$. 

\begin{lemma} 
 Any dipole can be deleted in the orthogonal channel $\perp$ in such a way that:
 \[
  F(G') \ge F(G)-3 \;.
 \]
 Any dipole can be deleted in either the parallel $=$ or the cross $\times$ channel (or both) in such a way that:
 \[
  F(G') \ge F(G)-3 \;.
 \]
 If $G'$ is connected then $\omega(G')\le \omega(G)$.
\end{lemma}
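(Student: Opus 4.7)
The plan is to reduce the degree bound to a face-counting bound and then verify the latter by a case analysis over the $36$ strand configurations of Fig.~\ref{fig:36}. The reduction is immediate: since $B(G) = B(G') = 0$ and dipole deletion removes exactly two vertices, the degree formula gives
\[
\omega(G') - \omega(G) = \tfrac{3}{2}\bigl(V(G') - V(G)\bigr) - \bigl(F(G') - F(G)\bigr) = -3 - \bigl(F(G') - F(G)\bigr),
\]
so, assuming $G'$ is connected (whence the additive constant $3$ in the degree formula is counted once, as in $G$), the bound $\omega(G') \le \omega(G)$ is equivalent to $F(G') \ge F(G) - 3$. The connectedness hypothesis is essential: an extra connected component of $G'$ would contribute an additional $+3$ to the degree, potentially breaking the inequality.

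For the face inequality, I would classify the $36$ strand configurations of a $2$-dipole with unbroken edges: each is encoded by a pair of permutations in $\mathfrak{S}_3 \times \mathfrak{S}_3$, one per internal edge. For each configuration I would catalog (i) the internal face structure, which by inspection of Fig.~\ref{fig:36} consists of at most one face of length $2$, present in exactly $4$ of the $36$ cases, and (ii) how the six internal strands pair up the $12$ strand-endpoints on the four external half-edges. Any dipole deletion destroys all internal faces and performs three external cut-and-glue operations; each such operation either loses a face of $G$ (when it merges two distinct faces) or preserves/creates a face (when it cuts the same face twice).

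For the orthogonal channel $\perp$, I would argue that reconnecting strand-endpoints within the same vertex is the matching most compatible with the internal stranded structure of the vertex, so that the three external cut-and-glue operations lose at most two external faces in total; combined with the possible internal length-$2$ face this yields $F(G') \ge F(G)-3$ uniformly across all $36$ configurations. For the parallel $=$ and cross $\times$ channels, I would appeal to a dichotomy: these are the two matchings of the four external half-edges which differ from $\perp$, and by working through the configurations one verifies that at least one of them induces enough cycles on the external strand-endpoints to keep the total face loss within $3$.

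The main obstacle will be the combinatorial bookkeeping of the $36$ cases, compounded by the fact that whether a given cut-and-glue merges two distinct external faces of $G$ or acts on a single face depends on the global topology of $G$ outside the dipole. The worst-case bound $F(G') \ge F(G) - 3$ is saturated precisely when the internal length-$2$ face is present and all three external cut-and-glue operations merge pairs of distinct faces; the delicate step is to show that in every configuration where the parallel channel saturates this bound beyond $3$, the cross channel respects the bound (and vice versa), so that a choice is always available.
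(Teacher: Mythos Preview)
Your reduction of the degree bound to the face bound $F(G') \ge F(G)-3$ is correct, and your overall plan---case-check the $36$ stranded configurations of Fig.~\ref{fig:36}---matches the paper. But the specific combinatorial claim you make for the orthogonal channel is wrong, and this is not a matter of bookkeeping that the case analysis would simply fill in.

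You assert that in the $\perp$ channel ``the three external cut-and-glue operations lose at most two external faces in total'', so that together with a possible internal length-$2$ face the total loss is at most $3$. This fails for the $32$ configurations \emph{without} an internal face. In those configurations the dipole can have four crossing external strands (the odd-length ones in Fig.~\ref{fig:36}), and rerouting all four in the $\perp$ channel can merge four distinct faces of $G$ into one, losing three external faces. Three cut-and-glue operations can certainly delete three faces; nothing in your heuristic ``the $\perp$ matching is most compatible with the vertex structure'' forces one of them to be harmless. The paper's dichotomy is different and is what actually works: (i) in the $32$ configurations with no internal face, at most $4$ strands need cutting, so at most $3$ faces are lost; (ii) in the $4$ configurations with an internal face, at most $2$ cut-and-glue operations suffice, so at most $2$ external faces plus the internal one are lost.

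You also overlook a constraint that drives up the cut count: the two new edges produced by the deletion must be \emph{unbroken}. Even a returning (even-length) strand may need to be cut and reglued to achieve the correct permutation on the new edge, and the paper explicitly includes such extra cuts in its bound of $4$. For the $=$/$\times$ dichotomy your instinct is right, but the organizing principle is again the strand-cut count per channel: when $=$ requires $6$ cuts, $\times$ requires at most $4$; the residual $8$ configurations needing $5$ cuts in either channel have no internal face and can each be implemented with $3$ successive cut-and-glue operations.
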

\begin{proof}
The first two statements follow by direct inspection of the 36 cases in Fig.~\ref{fig:36}. 

We consider the deletion in the orthogonal channel. In the 32 cases in which the dipole does not have an internal face of length 2,
one can always perform the deletion by cutting a maximum of $4$ strands (including the red strands which sometimes need to be cut in order to prevent the creation of broken edges).
In the worst case, the cut-and-glue operations will delete 3 faces by merging 4 distinct faces into one.
 \begin{figure}[htb]
 \begin{center}
 \includegraphics[scale=.7]{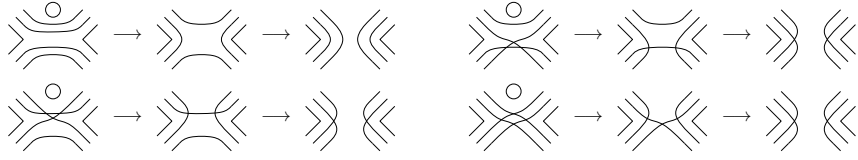}  
 \caption{$\perp$ deletions by $2$ successive cut-and-glue operations.} \label{fig:cut-glue_dipole1}
 \end{center}
 \end{figure}
If the dipole has an internal face of length $2$ (the 4 cases in Fig.~\ref{fig:36}), one can always perform the deletion with no more than two cut-and-glue operations
as illustrated in Fig.~\ref{fig:cut-glue_dipole1}. In all cases $F(G') \ge F(G) - 3$. 
 
In the parallel and cross channels, the number of strands to be cut ranges from 2 to 6. For all cases in which it is 4 or fewer in the parallel channel,
the deletion can never affect more than four faces, and therefore $F$ cannot decrease by more than 3. When the number of strands one needs to cut in the parallel channel is 6, 
it is 4 or fewer in the cross channel, so one can delete in the cross channel. 

 \begin{figure}[htb]
 \begin{center}
 \includegraphics[scale=.8]{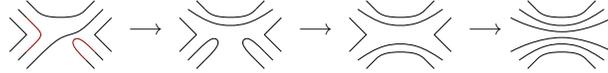}  
 \caption{Example of $2$-dipole deletion requiring five cuts in the parallel channel. 
The deletion can be implemented with $3$ successive cut-and-glue operations.} \label{fig:contract5cuts}
 \end{center}
 \end{figure}

One is left with 8 cases, which require 5 cuts in either of the $=$ or $\times$ channels.
Such dipoles do not have an internal face, and one can check that $3$ successive cut-and-glue operations suffice in all cases.
An example is provided in Fig.~\ref{fig:contract5cuts}.
Again, in all cases $F(G') \ge F(G) - 3$. 

The last statement is immediate.

\end{proof}

\begin{lemma}\label{lem:2pointmoves}
Consider the four $2$-point graphs $H_1,H_2,H_3,H_4$ of Fig.~\ref{fig:adegchains}. In a graph with only unbroken edges, replacing any subgraph $H_i$ by an unbroken edge strictly 
lowers the degree. We call any such combinatorial move a \emph{$H$-contraction}. More precisely:
\begin{itemize}
\item $\omega \to \omega' \leq \omega -1$ under a $H_1$-contraction;
\item $\omega \to \omega' \leq \omega - 1/2$ under a $H_2$-contraction;
\item $\omega \to \omega' \leq \omega - 1$ under a $H_3$-contraction;
\item $\omega \to \omega' \leq \omega - 1$ under a $H_4$-contraction.
\end{itemize}
\end{lemma}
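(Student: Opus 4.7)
The plan is to proceed by direct case analysis on the stranded structure of each $H_i$, using the degree formula $\omega = 3 + \frac{3}{2} V + B - F$ restricted to the sector $B = 0$ considered here. An $H$-contraction replaces a two-point subgraph $H_i$ of $G$ by a single unbroken edge, so the change in degree reads
\[
 \Delta \omega \;=\; -\frac{3}{2} V(H_i) - \Delta F,
\]
where $\Delta F = F(G') - F(G)$ can only be negative or zero, since faces can only be merged or destroyed by a local cut-and-glue operation. The announced inequalities translate into explicit upper bounds on the number of lost faces: at most $\frac{3}{2} V(H_i) - 1$ for $H_1$, $H_3$, $H_4$ and at most $\frac{3}{2} V(H_i) - 1/2$ for $H_2$.

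First I would identify the internal stranded structure of each $H_i$ compatible with the ambient graph having no tadpoles, no melons and no broken edges. For a proper chain this amounts to choosing a sequence of horizontal or vertical dipoles (cf.\ Fig.~\ref{fig:dipoles}), and the exclusion of melons, tadpoles, and broken edges sharply constrains the allowed configurations. For each such configuration the number of short internal faces of $H_i$ and the pairing induced on its six external strand-ends are entirely determined. Second, for each stranded realisation I would perform the replacement by choosing the parallel, cross, or orthogonal channel of deletion (in the spirit of the 2-dipole analysis carried out above, Fig.~\ref{fig:cut-glue_dipole1}) so as to minimise the number of cut-and-glue operations and to avoid creating broken edges. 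A direct count of faces destroyed or merged by these operations then yields $\Delta F$ and hence the bound on $\Delta \omega$.

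The main obstacle will be the strand-level bookkeeping for the four subgraphs: as already witnessed for a single 2-dipole (Fig.~\ref{fig:36}), a given combinatorial subgraph hides many stranded realisations, and the deletion channel must be selected judiciously in each case. The weaker bound for $H_2$ is the tell-tale sign that this subgraph admits a worst-case stranded realisation in which the optimal reglueing saves one fewer face than for the other three; isolating this extremal configuration and showing it cannot be improved is what separates the $-1/2$ from the $-1$ in the statement. Once the case analysis is complete, the lemma supplies the combinatorial move needed to contract $H$-type subgraphs during the iterative reduction of dipoles and triangles that drives the proof of Proposition~\ref{prop:moves}.
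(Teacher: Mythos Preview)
Your outline has the correct target---bounding the number of faces lost under the contraction---but the proposed method differs substantially from the paper's and contains a conceptual slip.

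First, the slip: an $H$-contraction is not a dipole deletion and there are no ``parallel, cross, or orthogonal channels'' to choose from. You replace the two-point subgraph $H_i$ (two external legs, three strands each) by a single unbroken edge; the only freedom is in which of the six permutations connects the three strands of leg $x$ to those of leg $y$. So the dipole-deletion vocabulary does not apply here, and enumerating ``all stranded realisations of $H_i$ compatible with no tadpoles/melons'' would mean listing on the order of $6^{E(H_i)}$ configurations per graph---feasible but far more laborious than necessary.

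The paper's argument avoids this enumeration entirely. It works with two global quantities: the total length $\ell$ of the three external (open) strands of $H_i$, and the total internal strand length $S = 6V(H_i) - 3 - \ell$. From $S = \sum_i i F_i$ and $F = \sum_i F_i$ one gets, for any $k\geq 2$,
\[
F \leq \Big\lfloor \frac{S}{k+1} + \sum_{i\leq k} \frac{k+1-i}{k+1} F_i \Big\rfloor\,,
\]
and then bounds $F_2$ and $F_3$ simply by counting the length-$2$ and length-$3$ cycles present in the \emph{combinatorial} graph $H_i$ (no stranded enumeration needed). The analysis then splits into exactly two cases according to the external strand pattern: either all three strands traverse from $x$ to $y$ (the effective edge is unbroken, faces lost $=F$), or two strands loop back (the effective edge is broken, and replacing it by an unbroken one may cost one extra face, so one must show $F$ is one smaller in this regime). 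In each case a short argument on $\ell$ and the available short cycles pins down the bound. This is what makes the proof a couple of paragraphs per $H_i$ rather than a long configuration table.

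Your statement that ``$\Delta F$ can only be negative or zero'' is harmless but imprecise: in the loop-back case the final broken-to-unbroken step is a genuine cut-and-glue and can \emph{increase} the face count by one; this only strengthens the bound, but it is worth getting right. The real content of the lemma is the internal face bound, and your outline does not yet supply a mechanism for obtaining it without exhaustive enumeration.
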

\begin{proof}
The proof is lengthy but straightforward. Details can be found in Appendix~\ref{app:Specialcases}.

\end{proof}

\begin{lemma}\label{lem:dipole}
Let us consider a graph $G$ with no melon-tadpoles. If $G$ has a $2$-dipole, then there exists a graph $G'$ with strictly fewer vertices than $G$ having no tadpoles and no melons and
with $\omega(G') \le \omega(G)$.
\end{lemma}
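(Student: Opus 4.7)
The plan is to try to delete the 2-dipole $D \subset G$ using the dipole-deletion lemma stated just above Lemma~\ref{lem:2pointmoves}, and, when this would create a tadpole or a melon, to replace the dipole deletion by one of the $H$-contractions of Lemma~\ref{lem:2pointmoves}. Since $G$ has no broken edges, the dipole-deletion lemma guarantees that $D$ can be deleted either in the orthogonal channel, or in at least one of the parallel or cross channels, with $F(G') \ge F(G) - 3$. As $V(G') = V(G) - 2$, this yields $\omega(G') \le \omega(G)$ whenever $G'$ is connected; disconnection can be avoided by choosing $D$ to be a boundary dipole of a maximal proper chain, so that the opposite end of the chain keeps the remainder of $G$ glued to the reconnected half-edges.

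Next I would analyse when the resulting $G'$ fails to be melon-tadpole free. Let the four external half-edges of $D$ attach to vertices $v_1,v_2,v_3,v_4$ of $G$. A new tadpole on a vertex $v$ in $G'$ requires that two reglued external half-edges of $D$ both sit on $v$ in the chosen channel; a new melon requires that three edges of $G'$ join two common vertices, which forces further coincidences among the $v_i$ together with pre-existing multiple edges on their side of $D$. Inspecting the finite list of such coincidence patterns across the three admissible channels, one sees that whenever \emph{every} permissible channel fails, the neighbourhood of $D$ in $G$ must contain a two-point subgraph isomorphic to one of $H_1, H_2, H_3, H_4$ from Lemma~\ref{lem:2pointmoves}. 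In that event I would discard the dipole-deletion move and instead apply the corresponding $H$-contraction.

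The $H$-contraction replaces that two-point subgraph by a single unbroken edge, strictly reduces $\omega$ by at least $1/2$, and removes at least as many vertices as the dipole deletion would have. It remains to verify that the output is melon-tadpole free: dipole deletion in a non-failing channel is safe by construction, while for an $H$-contraction any newly created tadpole or melon would have to traverse the new edge, and this would force a melon or tadpole already present inside or adjacent to $H_i$, contradicting the melon-tadpole-free hypothesis on $G$. In either branch $V(G') < V(G)$ and $\omega(G') \le \omega(G)$, as required.

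The main obstacle is the middle step: a tedious but finite case analysis, branching on which of $v_1,v_2,v_3,v_4$ coincide and on the structure of the short faces passing through $D$, to certify that a simultaneous failure of all dipole-deletion channels always exhibits one of the $H_i$. I also expect that managing connectivity will require some care, but the chain structure introduced earlier, together with the observation that a melon-tadpole-free $G$ containing a dipole automatically embeds that dipole inside a genuine proper chain, should suffice to select a dipole whose deletion preserves connectedness.
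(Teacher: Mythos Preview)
Your outline follows the paper's strategy closely (try dipole deletion in the available channels; when all of them would create a tadpole or melon, fall back to an $H$-contraction), but there is a genuine gap in the final step.

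You assert that after an $H$-contraction ``any newly created tadpole or melon would have to traverse the new edge, and this would force a melon or tadpole already present inside or adjacent to $H_i$, contradicting the melon-tadpole-free hypothesis on $G$.'' This is false. Replacing, say, $H_2$ by an edge \emph{can} create a tadpole even though $G$ was melon-tadpole free: it suffices that the two external legs of $H_2$ are attached to the same vertex of $G$, which is compatible with $G$ having no melons or tadpoles. The paper handles this not by a contradiction argument but by \emph{spending} the strict gain from Lemma~\ref{lem:2pointmoves}: the $H$-contraction lowers $\omega$ by at least $1/2$, so one can afford to delete the newly created tadpole (which raises $\omega$ by at most $1/2$) and still satisfy $\omega(G')\le\omega(G)$. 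Ruling out a melon after the $H$-contraction is also more delicate than you suggest: the paper does not derive a direct contradiction with $G$, but instead argues that such a melon would contain a dipole which, back in $G$, could have been deleted in the $\perp$ channel without creating a tadpole or melon---contradicting the branch of the case analysis one is in. Likewise, after deleting the post-contraction tadpole, showing that no further tadpole or melon appears uses that the specific configurations on the right of Fig.~\ref{fig:excludedH} have already been excluded earlier in the proof. Your linear argument misses this interplay between branches.

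A smaller point: your connectivity fix via ``boundary dipole of a maximal proper chain'' is not quite right, since a lone dipole need not sit inside a \emph{proper} chain at all. The paper avoids chains here entirely: if the $\perp$ deletion of some dipole disconnects, then the $=$ or $\times$ deletion of that same dipole is automatically connected \emph{and} cannot create a tadpole or melon (the two halves of $G\setminus D$ are in different components, which forbids the required coincidences). This is a cleaner way to handle connectivity than the chain argument you sketch.
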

\begin{proof} As $G$ has no tadpole and no melon, no two external halfedges of the dipole can be connected into an edge.
We first search for dipoles which can be easily deleted in the orthogonal channel $\perp$. There are several cases:
\begin{description}
 \item[There exists a dipole whose deletion in the $\perp$ channel disconnects the graph.] We then build $G'$ by deleting this dipole in the parallel channel $=$ (or cross channel $\times$). 
 This deletion cannot disconnect the graph, cannot increase the degree and cannot create tadpoles or melons.
 \item[All deletions in the $\perp$ channel do not disconnect the graph.] We then have two sub cases:
 \begin{itemize}
  \item{\it There exists a deletion $\perp$ which does not create a tadpole or a melon.} We then obtain $G'$ by performing the deletion.
    \item{\it All deletions $\perp$ create at least a tadpole or a melon.} Any dipole is then in one of the configurations depicted in Fig.~\ref{fig:adelchains}. 
    In all the three cases the deletion in the parallel $=$ (or cross $\times$) channel can not disconnect the graph.
 \begin{figure}[htb]
 \begin{center}
 \includegraphics[scale=.6]{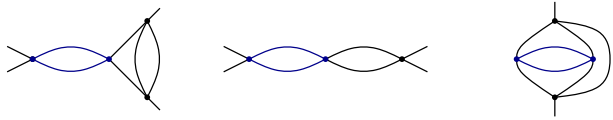}  
 \caption{The deletion of the (blue) dipole in the channel $\perp$ creates at least a melon or a tadpole.} \label{fig:adelchains}
 \end{center}
 \end{figure}
 Observe also that, in the two $4$-point drawings, the two right (resp. left) halfedges of the chain cannot be connected into an edge (as $G$ has no tadpoles and no melons), but a left and a right ones can. We then have two sub cases:
 \begin{itemize}
    \item there exists a deletion in  the parallel $=$ (or cross $\times$) channel which does not create a tadpole or a melon. We then build $G'$ by deleting in the parallel $=$ (or cross $\times$) channel.
   \item all deletions in the  parallel $=$ (or cross $\times$) channel create a tadpole or a melon. Taking into account that the two left (resp. two right) halfedges can not be paired together, we are
   then in one of the four cases depicted in Fig.~\ref{fig:adegchains}.
  \begin{figure}[htb]
 \begin{center}
 \includegraphics[scale=.6]{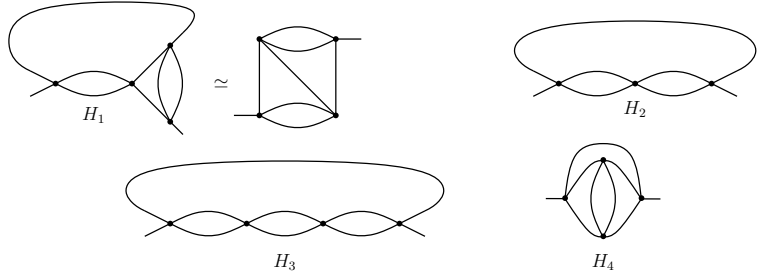}  
 \caption{The cases in which both the $\perp$ and the $=$ (or $\times$) create at least a tadpole or a melon.} \label{fig:adegchains}
 \end{center}
 \end{figure}
 
We now proceed with a $H$-contraction shown in Fig.~\ref{fig:excludedH}, which, by lemma \ref{lem:2pointmoves}, decreases the  degree. 
\begin{figure}[htb]
 \begin{center}
 \includegraphics[scale=.6]{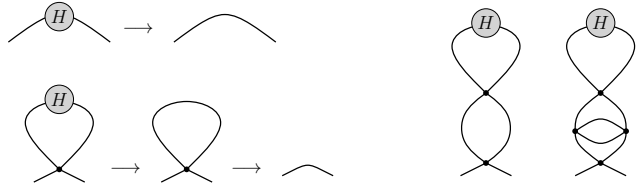}  
 \caption{$H$-contractions. On the left panel are represented the two allowed situations: either a tadpole is not generated, or one is generated and deleted. 
 The chains on the right panel have already been excluded.}\label{fig:excludedH}
 \end{center}
 \end{figure} 

This cannot create a melon\footnote{Otherwise there exists a dipole in $G$ which, together with the edge replacing $H_i$, becomes the melon. This dipole could 
itself be deleted in the $\perp$ channel without generating a melon or a tadpole, which we already excluded.}. 
If no tadpole is created, we have our graph $G'$. If a tadpole is created, we obtain $G'$ after replacing it by an unbroken edge. 
The degree can not increase by more that $1/2$ when deleting the tadpole, but this is compensated by the net decrease of the degree from the $H$-contraction.
$G'$ can have neither a tadpole nor a melon\footnote{Otherwise we are in the situation on the right panel in Fig.~\ref{fig:excludedH} which has already been 
excluded.}.
 
 \end{itemize}
 
 \end{itemize}

\end{description}

\end{proof}

 \newpage
 
 \subsection{Triangles}\label{sec:triangles}
 
In order to facilitate the description of the various strand configurations of triangles, we will use 
the notion of \emph{boundary graph}, which we now introduce. Consider a $n$-point graph $H$, with external legs $l_1 , \ldots , l_n$. Its boundary 
$\partial H$ is the closed graph consisting of $n$ $3$-valent vertices $v_1, \,\ldots,\, v_n$ such that: to every strand connecting $l_i$ to $l_j$ in $H$ corresponds an edge 
of $\partial H$, connecting $v_i$ to $v_j$. Examples of dipoles with their boundary graphs are presented in Fig.~\ref{fig:boundary-ex}. The boundary graphs 
can immediately be obtained from the stranded representations of Fig.~\ref{fig:36}, by: a) deleting any closed face; b) pinching each triplet of open strands to form a $3$-valent vertex. 
\begin{figure}[htb]
 \begin{center}
 \includegraphics[scale=.7]{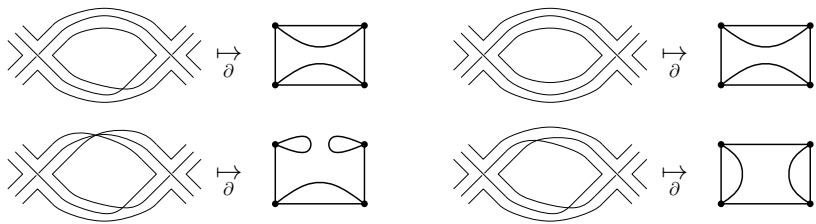}  
 \caption{Some dipoles and their boundary graphs.} \label{fig:boundary-ex}
 \end{center}
 \end{figure}
 
The boundary graph tracks the external strands of $H$. Its key property is the following: the rest of the graph $G\setminus H$ ``sees'' only the boundary $\partial H$ of a subgraph $H$.
To be precise, consider two open graphs $H$ and $\tilde{H}$, such that $\partial H = \partial \tilde{H}$. Given a graph $G$ having $H$ as a subgraph, 
one can construct a graph $\tilde{G}$ by replacing $H$ with $\tilde{H}$ and gluing the external strands of $\tilde H$ according to the 
pattern of gluing of the external strands of $H$. 
The graph $\tilde G$ is such that: $V(\tilde{G}) =  V(G) - V(H) + V(\tilde{H})$ and $F(\tilde{G}) =  F(G) - F(H) + F(\tilde{H})$.

In particular, we will often look to replace triangles in $G$ with pairs of vertices connected by one (unbroken) edge. The boundary graph of 
two vertices connected by one unbroken edge is always the \emph{prism graph} shown in Fig.~\ref{fig:boundary2}. It will play an important role in the following.
\begin{figure}[htb]
 \begin{center}
 \includegraphics[scale=.7]{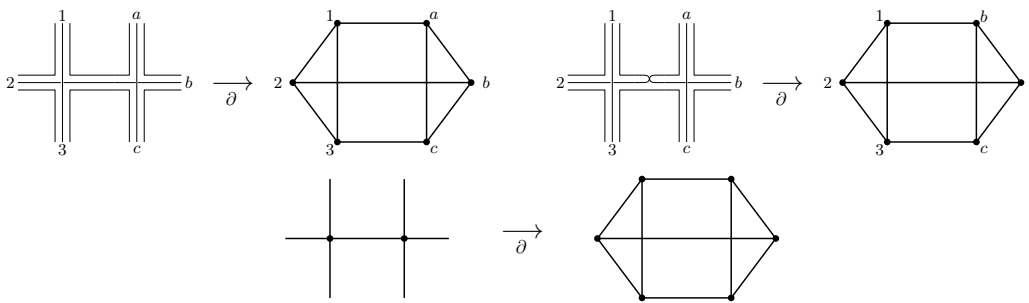}  
 \caption{Top panel: two examples of $6$-point diagrams with two vertices connected by one unbroken edge. Bottom panel: the boundary graph of such diagrams is always a prism.} \label{fig:boundary2}
 \end{center}
 \end{figure}

 \subsubsection{Boundary graphs of triangles}
 
 Let us consider a triangle $H$ and let us label its vertices $a$, $b$ and $c$ and its external half edges $a_1$ and $a_2$ (incident at $a$), 
 $b_1$ and $b_2$ (incident at $b$) and  $c_1$ and $c_2$ (incident at $c$). $H$ has $18$ corners. We call a 
 corner \emph{external} (resp. \emph{internal}) if it separates a pair of external half-edges (resp. internal edges), and \emph{mixed} otherwise. 
There are $3$ external (resp. internal) corners ($1$ per vertex of $H$), and $12$ mixed corners ($4$ per vertex of $H$).

The boundary graph $\partial H$ has 6 vertices: $a_1,a_2,b_1,b_2,c_1,c_2$ and 9 edges. 
We color the edges of $\partial H$ according to their length\footnote{The numbers of corners minus $1$ traversed by the corresponding strand in $H$.}: 
gray for length $0$, black for $1$, red for $2$, blue for $3$, and green for $4$. 
The total available length is 9. The vertices of $\partial H$ are partitioned into 3 pairs of vertices connected 
by gray edges\footnote{They represent the pairs of half edges of the triangle incident to the same vertex.}: 
$(a_1,a_2)$, $(b_1,b_2)$ and $(c_1,c_2)$, as depicted in Fig.~\ref{fig:boundarystart}. All the possible boundary graphs are obtained by adding 6 more edges to build a trivalent graph.
\begin{figure}[htb]
 \begin{center}
 \includegraphics[scale=.7]{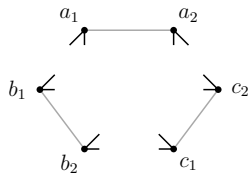}  
 \caption{The starting configuration for building $\partial H$} \label{fig:boundarystart}
 \end{center}
 \end{figure}

 In order to list all the possible boundary graphs observe that:
 \begin{itemize}
  \item the two strands originating at a half edge, say $a_1$, of $H$ point one towards the vertex $b$ and one towards the vertex $c$. Consider the strand pointing towards $b$. It 
 can either exit through one of the halfedges of $b$, or it can go through the internal corner of $b$ and head on towards $c$. 
 We will sometimes label the strands (which become edges in $\partial H$) by their end half edges (vertices of $\partial H$) and by the ordered list of internal vertices they go through.
 \item the three strands of the internal edge $(a,b)$ of $H$ originate one in $a_1$, another one in $a_2$ and the third one on the internal corner of $a$.
 \end{itemize}
 
\

\paragraph{Triangles with an internal face.} 
If a closed face is bounded by the triangle, it must necessarily be of length $3$ and go through the $3$ internal corners. 
The 6 edges of $\partial H$ must all have length 1 (hence are black edges). Recalling the structure of the stranded vertex we 
have:
\begin{itemize}
 \item[--] one edge incident to $a_1$ must connect to one of the $b$ vertices (say $b_1$) and the other edge incident to $a_1$ to one of the $c$ vertices (say $c_1$).
 \item[--] one edge incident to $a_2$ must connect to the other $b$ vertex (that is $b_2$) and the other edge incident to $a_2$ to the other $c$ vertex (that is $c_2$). 
 \item[--] the remaining two edges can be: 
    \begin{itemize}
     \item either $(b_1,c_2)$ and $(b_2,c_1)$, leading to the complete bipartite graph $\cG_1$ in Fig.~\ref{fig:triangle_face},
     \item or $(b_1,c_1)$ and $(b_2,c_2)$, leading to the prism graph $\cG_2$ in Fig.~\ref{fig:triangle_face}.
    \end{itemize}
\end{itemize} 

\begin{figure}[htb]
 \begin{center}
 \includegraphics[scale=.7]{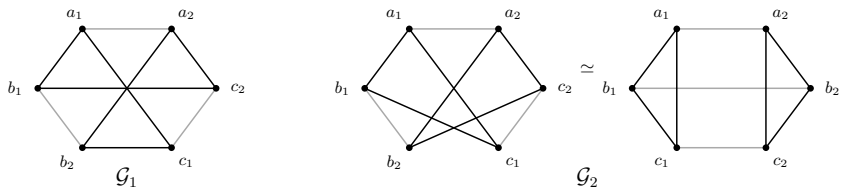}  
 \caption{The boundary graphs of triangles with an internal face.} \label{fig:triangle_face}
 \end{center}
 \end{figure}

 One can easily reconstruct the corresponding triangles. They can always be embedded in the plane as shown in Fig.~\ref{fig:cases3} at the top. 
 In the remainder of this paper, we will call \emph{untwisted triangle} (resp. \emph{twisted triangle}) any triangle which can be brought in the canonical form shown on the left (resp. right).
\begin{figure}[htb]
 \begin{center}
 \includegraphics[scale=.5]{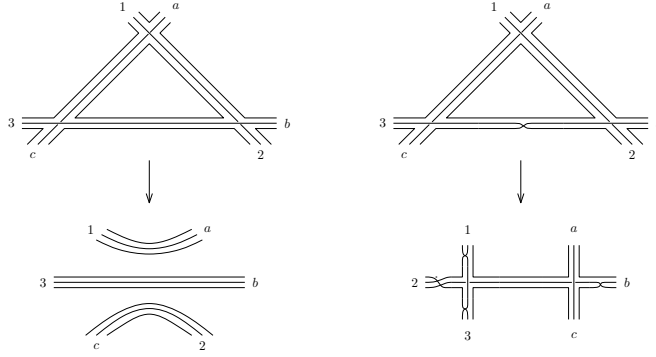}  
 \caption{Canonical forms of untwisted and twisted triangles.} \label{fig:cases3}
 \end{center}
 \end{figure}

\paragraph{Triangles with no internal face.} When no closed face is bounded by the three edges of the triangles, there are more cases to distinguish:
\begin{description}
 \item[\it $(1+1+1)$-triangle.]
We call \emph{$(1+1+1)$-triangle} a triangle in which the three internal corners belong to three distinct strands.
$\partial H$ will have $3$ edges of length $2$ (red edges), and 3 edges of length $1$ (black edges).
The red and black edges can not connect vertices in the same pair $a$, $b$ or $c$. Furthermore, 
every pair $a$, $b$ or $c$ is incident to exactly two black and two red edges\footnote{Because along any edge of $H$, exactly one strand is black and two are red.}.
\begin{figure}[htb]
 \begin{center}
 \includegraphics[scale=.7]{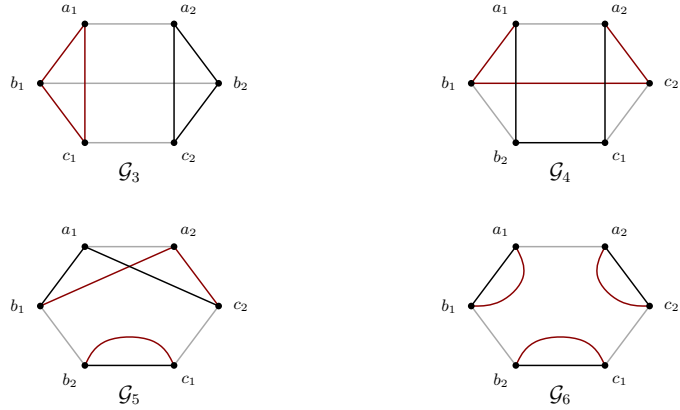}  
 \caption{The boundary graphs of the $(1+1+1)$-triangles. $\cG_3$ and $\cG_4$ are prisms.} \label{fig:triangle_1-1-1}
 \end{center}
 \end{figure}
We have several cases:
\begin{itemize}
 \item[--] on every pair, the red edges are incident to the same vertex. We can consider the red edges to be $(a_1,b_1)$, $(b_1,c_1)$ and $(a_1,c_1)$.
 The black edges must then be $(a_2,b_2),(b_2,c_2)$ and $(a_2,c_2)$ and we find the case $\cG_3$ in Fig.~\ref{fig:triangle_1-1-1}.
 \item[--] on exactly two pairs the red edges are incident to the same vertex. Without loss of generality we can assume that the red edges are $(b_1,a_1)$, $(b_1,c_2)$ and $(c_2,a_2)$. 
  Then the black edge incident to $a_1$ must also connect to the $b$ pair\footnote{The red strand corresponding to $(a_1,b_1)$ is $a_1cb_1$, hence the 
  last strand exiting from $a_1$ points to $b$.}, 
  but it can only connect on $b_2$, hence we get the edge $(a_1,b_2)$. Similarly we get a black edge $(a_2,c_1)$. Finally, the last black edge must be 
  $(b_2,c_1)$ leading to the graph $\cG_4$ in Fig.~\ref{fig:triangle_1-1-1}.
 \item[--] on only one pair the red edges are incident to the same vertex. Without loss of generality we can assume that $(a_2,b_1)$ and $(a_2,c_2)$ are red. But then there must exist a red edge $(b_2,c_1)$.
 Now, there exists a black edge incident to $c_2$ which must connect to the pair $a$, and it can only connect to $a_1$: hence there is an edge $(c_2,a_1)$. Similarly we obtain the black edge $(b_1,a_1)$, and finally another
 black edge $(b_2,c_1)$, leading to $\cG_5$ in Fig.~\ref{fig:triangle_1-1-1}.
 \item[--]  finally, on no pair the two red edges are incident to the same  vertex.  Without loss of generality we can assume that $(a_1,b_1)$, $(a_2,c_2)$ and $(b_2,c_1)$ are red edges. But then
  the black edge coming out of $a_1$ must connect to the pair $b$\footnote{As $(a_1,b_1)$ is the strand $a_1cb_1$.}, and it can only connect to $b_1$\footnote{Because, by the same argument, the black edge coming out of $b_2$ must connect on the pair $c$.}. 
  We obtain $\cG_6$ in Fig.~\ref{fig:triangle_1-1-1}.
\end{itemize}
 \item[\it $(2+1)$-triangle.]
We call \emph{$(2+1)$-triangles} the triangles in which exactly two of the internal corners belong to the same strand.
$\partial H$ will have an edge of length 3 (blue), one of length $2$ (red) and four of length $1$ (black).
The blue edge returns on the same pair (say $a$), and the red edge must connect the two other pairs ($b$ and $c$).
 \begin{figure}[htb]
 \begin{center}
 \includegraphics[scale=.7]{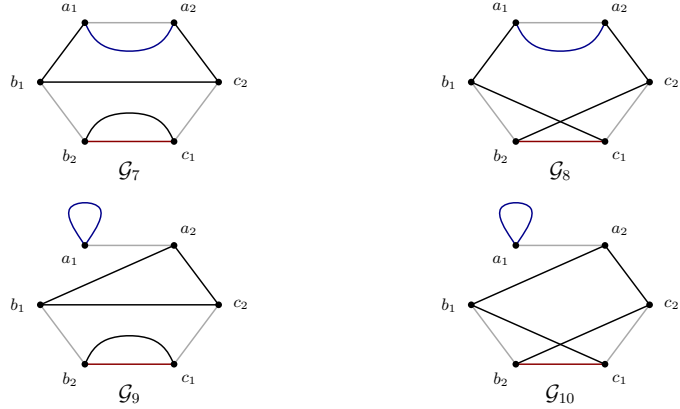}  
 \caption{The boundary graphs of the $(2+1)$-triangles.} \label{fig:triangle_2-1}
 \end{center}
 \end{figure}
We have several cases: 
\begin{itemize}
 \item[--] the blue edge connects $a_1$ and $a_2$. Without loss of generality we can assume that the corresponding strand in $H$ is $a_1cba_2$. 
 Then there exists a black edge in $\partial H$ starting in $a_1$ and ending on the pair $b$ (and one starting in $a_2$ and ending on the pair $c$). 
 Without loss of generality we can label them $(a_1,b_1)$ and $(a_2,c_2)$. As the red strand must pass through the internal corner on the vertex $a$,
 it must be $b_2 a c_1$, hence $\partial H$ has a red edge $(b_2,c_1)$. We have two sub cases:
 \begin{itemize}
  \item either the two remaining black edges are $(b_1, c_2)$ and $(b_2, c_1)$, leading to $\cG_7$ of Fig.~\ref{fig:triangle_2-1}.     
  \item or the two remaining black edges are $(b_1, c_1)$ and $(b_2, c_2)$, leading to $\cG_8$ of Fig.~\ref{fig:triangle_2-1}.
 \end{itemize}
 \item[--] the blue edge goes from $a_1$ to $a_1$.  Without loss of generality we can assume that the corresponding strand in $H$ is $a_1cba_2$. 
 Without loss of generality we can assume that $(a_2,b_1)$ and $(a_2,b_2)$ are black edges. Again the red edge must pass through the internal corner on the vertex $a$,
 hence it must be $b_2 a c_1$, and $\partial H$ has a red edge $(b_2,c_1)$. We again get two sub cases:
   \begin{itemize}
     \item either $(b_1, c_2)$ and $(b_2, c_1)$ are the last two black edges, and we get $\cG_9$ of Fig.~\ref{fig:triangle_2-1}.
     \item or the last two black edges are $(b_1, c_1)$ and $(b_2, c_2)$, yielding $\cG_{10}$ in Fig.~\ref{fig:triangle_2-1}.
   \end{itemize}

\end{itemize}
 
\item[\it $3$-triangle.]
Finally, we call \emph{$3$-triangle} a triangle in which the three internal corners belong to the same external strand. 
The latter is then necessarily of length $4$ and goes twice through the same edge of the triangle. As far as the boundary graph is concerned, this is 
identical to the case of a triangle with an internal face and we obtain the same two boundary graphs, as shown in Fig.~\ref{fig:triangle_3}.
\begin{figure}[htb]
 \begin{center}
 \includegraphics[scale=.7]{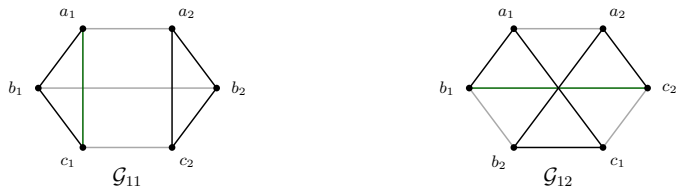}  
 \caption{The boundary graphs of $3$-triangles.} \label{fig:triangle_3}
 \end{center}
 \end{figure}

 \end{description}

\subsubsection{Deletion of triangles}

We call \emph{$3 \to 2$ move} any local combinatorial move which consists in replacing a triangle by two vertices connected by one edge, as shown in Fig.~\ref{fig:3to2}. A $3 \to 2$ move reduces the number of vertices by $1$ 
and cannot create new connected components. This operation can be used to eliminate a triangle in a way which minimizes the non-local effects on the face structure of a graph.  
\begin{figure}[htb]
 \begin{center}
 \includegraphics[scale=.6]{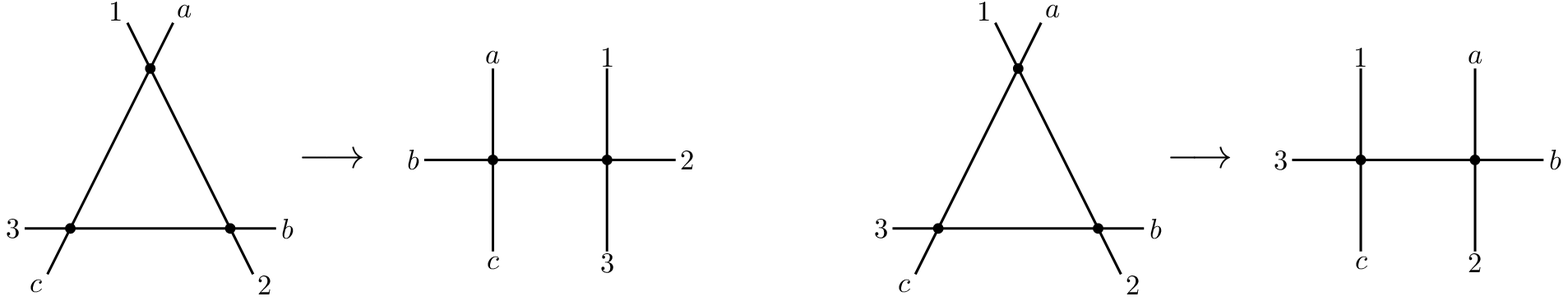}  
 \caption{The two types of $3\to2$ moves.} \label{fig:3to2}
 \end{center}
 \end{figure}

 There are two types of $3 \to 2$ moves. The \emph{left} $3 \to_L 2$ move, represented on the left in Fig.~\ref{fig:3to2} is such that 
 any two half-edges initially hooked to the same vertex end up hooked to different vertices. The  \emph{right} $3 \to_R 2$ move, represented on the right in Fig.~\ref{fig:3to2} is such that 
 an edge of the initial graph is conserved.
 
\begin{figure}[htb]
 \begin{center}
 \includegraphics[scale=.5]{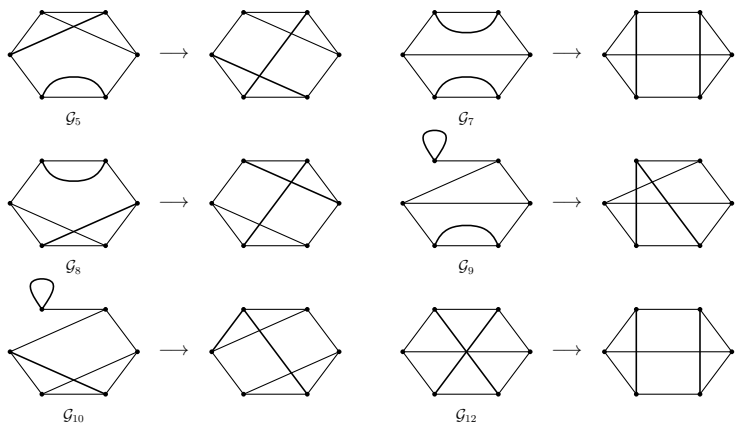}  
 \caption{Cut-and-glue operations turning boundary graphs of triangles into prisms (the lines being cut and glued are represented with a heavier stroke).} \label{fig:cut-glue_triangles}
 \end{center}
 \end{figure}

\begin{lemma}\label{lem:3to2_moves}
Let $G$ be a graph with a triangle $H$. Assume that $H$ is neither untwisted, nor of type $(1+1+1)$ with boundary $\cG_6$. Then $H$ can be eliminated by a $3\to 2$ move, such that the resulting graph $G'$ has degree:
\be
\omega(G') \leq \omega(G) - 1/2\;.
\ee
\end{lemma}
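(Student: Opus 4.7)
The starting point is an algebraic reformulation. For graphs with no broken edges the degree reads $\omega = 3 + \tfrac{3}{2}V - F$, so when a $3\to 2$ move is performed one has $V(G') = V(G) - 1$ and $B(G') = 0$, and the claimed inequality $\omega(G') \leq \omega(G) - 1/2$ becomes simply
\[
F(G) - F(G') \leq 1.
\]
Hence I would reduce the proof to exhibiting, in each non-excluded configuration, a single $3\to 2$ move under which at most one face of $G$ is lost.

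The natural framework is the boundary graph formalism set up at the beginning of Section~\ref{sec:triangles}. Once a bijection between the six external half-edges of $H$ and those of the two-vertex diagram $\tilde H$ is fixed by the choice of move, one has
\[
F(G') - F(G) = F(\tilde H) - F(H) + \Delta,
\]
where $\Delta$ is a correction depending only on the relative combinatorics of $\partial H$ and the prism $\partial \tilde H$ (and not on the rest of $G$). Two facts are essential: $F(\tilde H)=0$, since two vertices connected by a single unbroken edge support no closed internal face; and $F(H) \in \{0,1\}$ by the classification of the previous subsection, equal to $1$ exactly when $H$ has an internal face. The target bound then becomes $\Delta \geq F(H) - 1$: for a triangle with an internal face one must preserve the face count in the boundary rerouting, while in the $(1+1+1)$, $(2+1)$ and $3$-triangle cases one face loss is permitted.

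The substantive part of the proof is a case analysis over the ten boundary graphs $\cG_1,\ldots,\cG_{10}$. Since $\partial\tilde H$ is always a prism, the easiest situations are those in which $\partial H$ is itself a prism (namely $\cG_3$ and $\cG_4$ in the $(1+1+1)$ case, and one of $\{\cG_1,\cG_2\}$ in the $3$-triangle case): here a suitable $3\to 2$ move matches boundaries directly and yields $\Delta = 0$. For the remaining non-excluded boundaries I would use the explicit cut-and-glue constructions displayed in Fig.~\ref{fig:cut-glue_triangles}, which convert each of them into a prism by one or two local cut-and-glue operations. Each cut-and-glue changes the face count by $\pm 1$ or $0$, depending on whether it merges two distinct faces, splits a face, or leaves the face partition invariant. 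By choosing the type of move ($3\to_L 2$ or $3\to_R 2$) and the bijection of external half-edges appropriately one then verifies case by case that an admissible move exists for which $\Delta \geq F(H) - 1$, typically with equality.

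The main obstacle, and the reason for the two exclusions, is precisely that for untwisted triangles and for $(1+1+1)$-triangles with boundary $\cG_6$ no such choice is available: in these configurations every bijection between $\partial H$ and $\partial \tilde H$ is forced to perform two or more face-merging cut-and-glues with no compensating face splits, so that $\Delta < F(H) - 1$ and the $-1/2$ bound fails. These configurations genuinely fall outside the scope of the lemma and must be eliminated by different moves elsewhere in the paper. Apart from identifying and excluding them, the proof is a careful but routine bookkeeping, and the technically most delicate step is verifying the $(2+1)$ cases $\cG_7$ through $\cG_{10}$, where the cut-and-glue sequence realising the prism is not canonical and the wrong choice can either produce a tadpole, disconnect the graph, or violate the face bound.
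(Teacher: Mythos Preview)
Your plan is essentially the same as the paper's proof: reduce to $F(G)-F(G')\le 1$, use the boundary-graph formalism, dispose of the prism boundaries by a $3\to 2$ move that preserves the external strand structure, and reach the remaining boundaries by a cut-and-glue that turns them into prisms first. Two points of imprecision are worth flagging.

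First, the paper does \emph{not} fold the twisted case into the cut-and-glue analysis; it treats it separately with the explicit move of Fig.~\ref{fig:cases3} and records $F(G')=F(G)-1$ directly. In your framework this case has $F(H)=1$ but $\partial H=\cG_1$ is not a prism, so you would need $\Delta\ge 0$ from the boundary rerouting alone, which a generic cut-and-glue does not guarantee. You should isolate this case as the paper does. Relatedly, your explanation for why the untwisted case is excluded is not quite right: $\partial H=\cG_2$ \emph{is} a prism, so your own prism argument would seem to apply. The exclusion is not because the $3\to 2$ bound fails but because untwisted triangles are handled by a different ($3\to 0$) deletion in Lemma~\ref{lem:tripole}.

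Second, the paper uses exactly \emph{one} cut-and-glue for each of $\cG_5,\cG_7,\cG_8,\cG_9,\cG_{10},\cG_{12}$ (Fig.~\ref{fig:cut-glue_triangles}), not ``one or two''. Two cut-and-glues could give $\Delta=-2$, which for $F(H)=0$ would yield $F(G)-F(G')=2$ and break the bound. Finally, your closing remarks about tadpoles and disconnection are misplaced for this lemma: a $3\to 2$ move can never disconnect (this is stated just before Lemma~\ref{lem:3to2_moves}), and the lemma asserts nothing about tadpole- or melon-freeness of $G'$---those issues are handled downstream in Lemma~\ref{lem:tripole}.
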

\begin{proof}
If $H$ is twisted, then one performs the $3\to 2$ move depicted in the right panel of Fig.~\ref{fig:cases3}. We have $V(G') = V(G)-1$ and $F(G') = F(G)-1$, hence $\omega(G') = \omega(G) -\frac{1}{2}$.

Similarly, one can perform a $3\to 2$ which preserves the external face structure for any triangle whose boundary graph is a prism. This is the case of triangles with boundary graphs $\cG_3$, $\cG_4$, and $\cG_{11}$.
Such triangles have no internal face, so that this time $F(G')=F(G)$ and therefore $\omega(G') = \omega(G) - 3/2  < \omega(G) - 1/2$.  

For triangles with boundaries $\cG_5$, $\cG_7$, $\cG_8$, $\cG_9$, $\cG_{10}$ and $\cG_{12}$, one first performs a cut-and-glue operation, as shown in Fig.~\ref{fig:cut-glue_triangles}. This reduces 
the number of faces by at most $1$, and yields a prism boundary graph. We can subsequently perform a $3\to2$ move to obtain $G'$ with $\omega(G') \leq \omega(G) - 1/2$.

\end{proof}

\

Later in the proof, it will be convenient to delete two triangles simultaneously if they are \emph{adjacent} to each other, that is if they share exactly one edge.  
\begin{lemma}\label{lem:4to2}
Let $G$ be a graph with no tadpole, no melon, and no dipole. Suppose $G$ contains a pair of adjacent untwisted triangles. Then there exists a  graph $G'$ having strictly fewer vertices, having no melons, no tadpoles and with $\omega(G') \le \omega(G)$.
\end{lemma}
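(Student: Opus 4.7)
The plan is to delete the two adjacent untwisted triangles simultaneously with a ``$4 \to 2$ move'' that replaces the four-vertex subgraph $H$ they span by a pair of vertices connected by a single unbroken edge. Since this reduces the vertex count by $2$, the vertex contribution to $\omega$ drops by $3$, so it suffices to arrange that at most three faces of $G$ are lost in the move, and that no new tadpole, melon, or broken edge is introduced.

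First I would describe $H$ explicitly. Writing $T_1 = (a,b,c)$ and $T_2 = (a,b,d)$ with shared edge $(a,b)$, the subgraph $H$ has four vertices, five internal edges, and six external half-edges --- one at each of $a$ and $b$, two at each of $c$ and $d$. Because each triangle is untwisted, each carries an internal face of length $3$ through its three internal corners; and since the internal corners of $T_1$ and $T_2$ at the shared vertices $a$ and $b$ involve different pairs of half-edges, the two $3$-faces of $H$ are genuinely distinct, and both will be erased by the move.

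Next I would enumerate the possible boundary graphs $\partial H$. The untwisted condition forces the boundary of each isolated $T_i$ to be a prism ($\cG_2$ of Fig.~\ref{fig:triangle_face}), and gluing the two prisms along $(a,b)$ leaves only a handful of configurations for $\partial H$, parametrized by how the three strands of the shared edge match the internal corners of each triangle. For every such configuration one checks that $\partial H$ is either already a prism --- the boundary of two vertices joined by one unbroken edge, cf.~Fig.~\ref{fig:boundary2} --- or can be brought to a prism by a single cut-and-glue operation destroying at most one external face. Combined with the two internal $3$-faces that are erased, this yields a face deficit of at most three, and hence $\omega(G') \le \omega(G)$.

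Finally I would verify that $G'$ has no tadpole, no melon, and no broken edge. Since $G$ has no tadpole, no melon, and no dipole, no pair of external half-edges of $H$ can be directly joined by an edge of $G$ --- such a joining would already produce one of the forbidden patterns in $G$ --- which rules out the short cycles through the new two-vertex block that could re-create a tadpole or melon in $G'$; and any broken edge produced by the cut-and-glue step can be re-unbroken following the argument at the beginning of the proof of Proposition~\ref{prop:main}. The main obstacle is the classification of $\partial H$ in the third paragraph: it is the direct analogue of the single-triangle analysis of Section~\ref{sec:triangles} carried out on a larger, four-vertex subgraph, and it is precisely there that the adjacency and untwistedness hypotheses must be used to rule out configurations where the face loss would exceed three.
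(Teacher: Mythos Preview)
Your overall strategy --- replace the four-vertex block $H$ spanned by the two adjacent untwisted triangles with a two-vertex block via a $4\to 2$ move --- is the same as the paper's. Two points, however, need correction.

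First, the paper shows that $\partial H$ is \emph{always} a prism, with no cut-and-glue required: the shared edge $(a,b)$ must carry parallel strands in order to close both internal $3$-faces, and the remaining twists on the four non-shared edges can be absorbed into relabellings of the external half-edges. Hence only the two internal faces are lost and one obtains the strict bound $\omega(\tilde G)=\omega(G)-1$, not merely $\omega(\tilde G)\le\omega(G)$. This extra unit of slack is essential for what follows.

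Second --- and this is the genuine gap --- your assertion that ``no pair of external half-edges of $H$ can be directly joined by an edge of $G$'' is false. The six external half-edges sit at $a,b,c,d$; any pair on the same vertex, or on vertices adjacent in $H$, is indeed forbidden by the no-tadpole/no-dipole hypothesis. But $c$ and $d$ are \emph{not} adjacent in $H$, so an edge of $G$ joining an external half-edge at $c$ to one at $d$ is perfectly compatible with the hypotheses. After the $4\to 2$ move such a pair can land on the same new vertex and produce a tadpole in $\tilde G$; further deletions can then uncover a melon, and even a second tadpole. The paper deals with this explicitly: it removes the tadpole (cost $\le 1/2$), then any resulting melon (cost $\le 0$ by Lemma~\ref{lem:melon}), then a possible further tadpole (cost $\le 1/2$), and the total penalty of at most $1$ is absorbed by the slack $\omega(\tilde G)=\omega(G)-1$ obtained in the first step. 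Your argument, lacking both the slack and this case analysis, does not close.
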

\begin{proof}
Let us call $H$ and $H'$ the two adjacent triangles. Being untwisted, they must be in the configuration 
shown on the left side of Fig.~\ref{fig:2twisted} (up to some permutations of the external half-edges)\footnote{Consider $H$ and $H'$ represented on the left-hand side of Fig.~\ref{fig:2twisted}. 
We now add the strands on all the edges.
The edge common to $H$ and $H'$ must have parallel strands (otherwise it could not close the two internal faces of the triangles).
Consider the upper triangle. Both edges have one strand fixed by the requirement that the triangle closes a face. If the other two 
strands on only one of the edges are twisted, then the upper triangle is twisted. If both are twisted, then they can be both straightened by a 
permutation of the upper halfedges. The same applies to the lower triangle yielding the canonical form on the left panel of Fig.~\ref{fig:2twisted}.}. 
Their boundary graph being a prism, they can be
eliminated simultaneously by a $4\to 2$ move preserving the external strand structure (see Fig.~\ref{fig:2twisted}). Two internal
faces and two vertices are deleted in the process and the resulting graph $\tilde{G}$ has degree:
\be
\omega(\tilde{G}) = \omega(G) - 2 \times \frac{3}{2} + 2 = \omega(G) - 1\;.
\ee
\begin{figure}[htb]
 \begin{center}
 \includegraphics[scale=.6]{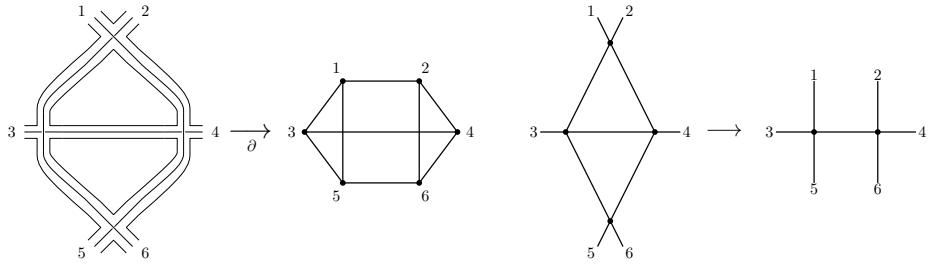}  
 \caption{Two adjacent untwisted triangles, their boundary prism graph and the $4 \to 2$ move removing them.} \label{fig:2twisted}
 \end{center}
 \end{figure}
\begin{description}
\item[\it $\tilde G$ has no tadpole and no melon.] If $\tilde{G}$ has a no tadpole and no melon, we take $G' = \tilde{G}$. 
\item[\it $\tilde G$ has a tadpole.] As $G$ has no dipoles and no tadpoles, the only way for $\tilde G$ to have a tadpole is for the half edges $1$ and $5$ (or 2 and 6) to be connected into an edge.
Only one of these two pairs can form an edge, otherwise $G$ would have a dipole. Suppose that $(1,5)$ is an edge of $G$. 
Then one can delete the tadpole in $\tilde G$ to obtain a graph $\hat G$ with $\omega(\hat G) \leq \omega(\tilde G) + 1/2 = \omega(G) - 1/2$. The new graph $\hat G$ cannot have tadpoles 
(this would require that $3$ and $4$ are connected in $G$ which would form a dipole), but it could have a melon. 
Deleting it, as well as at most another tadpole obtained after the melon deletion, we obtain $G'$ with $\omega(G') \leq \omega(\hat G) + 1/2 = \omega(G)$.
We have represented the worst case in Fig.~\ref{fig:2twisted_tad}.

\begin{figure}[htb]
 \begin{center}
 \includegraphics[scale=.4]{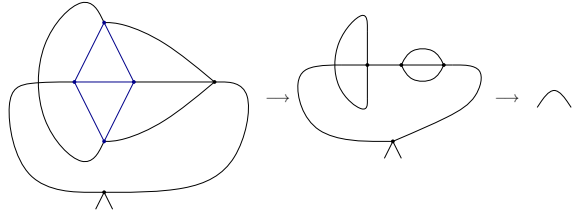}  
 \caption{A $4\to2$ move which generates a tadpole and a melon, nested into a tadpole.} \label{fig:2twisted_tad}
 \end{center}
 \end{figure}
\item[\it $\tilde G$ has no tadpole, but has a melon.]
If $\tilde G$ has no tadpole but has melons, the latter can be eliminated. This operation may generate tadpoles, but this can only happen for the
cases shown in Fig.~\ref{fig:2twisted_mel}. The two graphs on the left can generate at most one tadpole, while the two diagrams on the right may generate up to two tadpoles. In both cases, one
may delete them and obtain a graph $G'$ with $\omega(G') \leq \omega(\tilde G) + 1 = \omega(G)$. 
\begin{figure}[htb]
 \begin{center}
 \includegraphics[scale=.4]{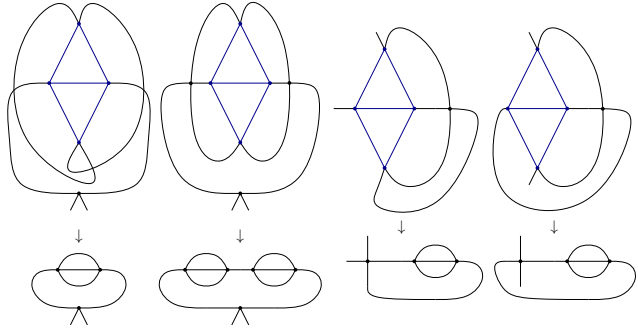}  
 \caption{$4\to2$ move which generates melons and then tadpoles after eliminating the melons.} \label{fig:2twisted_mel}
 \end{center}
 \end{figure}  
\end{description}
\end{proof}

\

In view of Lemma \ref{lem:dipole}, we consider only graphs $G$ having no melon-tadpoles no $2$-dipoles and, of course, no broken edges.
\begin{lemma}\label{lem:tripole}
Let $G$ be a graph with only unbroken edges having no melons, no tadpoles and no $2$-dipoles. If $G$ has a triangle, then there exists a (possibly 
disconnected\footnote{Recall that the degree of a disconnected graph is the sum of the degrees of its connected components: $\omega(G) = 3 C(G) + 3/2 V(G) - F(G)$, where 
$C(G)$ is the number of connected components of $G$.}) graph $G'$ having strictly fewer vertices, having 
no melons, no tadpoles and with $\omega(G') \le \omega(G)$.
\end{lemma}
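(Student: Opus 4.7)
The plan is to do a case analysis on the triangle $H$ in $G$, reducing each subcase to the moves already established in Lemmata~\ref{lem:3to2_moves} and~\ref{lem:4to2} and cleaning up with the melon/tadpole eliminations used in the proof of Lemma~\ref{lem:dipole}.

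First, I would dispose of the \emph{generic case}: if $G$ contains a triangle $H$ which falls within the scope of Lemma~\ref{lem:3to2_moves}, i.e.\ $H$ is neither untwisted nor of type $(1+1+1)$ with boundary $\cG_6$, then applying that lemma produces a graph $\tilde G$ with $\omega(\tilde G)\le \omega(G)-1/2$ and $V(\tilde G)=V(G)-1$. This $\tilde G$ may contain new melons and tadpoles arising from short strand configurations in $G\setminus H$ closing on the edge created by the $3\to 2$ move. Melons are removed at zero cost by Lemma~\ref{lem:melon}, while each tadpole costs at most $1/2$ in degree. The assumptions that $G$ has no tadpole, no melon and no $2$-dipole bound the length of any cleanup cascade starting from $H$: one shows by direct inspection that at most a single tadpole can ultimately survive, and its removal is absorbed by the $1/2$ slack. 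The resulting $G'$ thus satisfies $\omega(G')\le \omega(G)$ and $V(G')<V(G)$, with no melons and no tadpoles, as required.

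If on the other hand \emph{all} triangles of $G$ are untwisted or of type $(1+1+1)$-$\cG_6$, I would first test whether $G$ contains a pair of adjacent untwisted triangles: if so, Lemma~\ref{lem:4to2} applies directly. Otherwise, I would handle each remaining subtype by a local adaptation of the $3\to 2$ move. For an isolated untwisted triangle, the prism boundary $\cG_2$ allows a $3\to 2$ move preserving the external strands; its internal face of length $3$ and one vertex are destroyed, yielding $\omega\to\omega-1/2$. For a $(1+1+1)$-$\cG_6$ triangle, a single cut-and-glue operation turns the $\cG_6$ boundary into the prism $\cG_2$ at the cost of at most one face; a subsequent $3\to 2$ move then decreases the degree by $1/2$ in total, since $(1+1+1)$-triangles carry no internal face. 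In both subcases the cleanup of any single tadpole created proceeds as in the generic case.

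The main obstacle in this plan is bookkeeping: after the initial move, $\tilde G$ can contain nested melon-tadpole structures, and each successive removal can expose new ones. The key point is that because $G$ itself has no melon-tadpole and no $2$-dipole, the local combinatorial neighborhood of $H$ that can participate in such cascades is uniformly bounded. A finite case-by-case enumeration, in the spirit of the proof of Lemma~\ref{lem:dipole}, shows that the total degree cost of the cleanup never exceeds the $1/2$ budget provided by Lemma~\ref{lem:3to2_moves}, by its adapted variants for isolated untwisted and $\cG_6$-triangles, or by Lemma~\ref{lem:4to2}. This enumeration, rather than any conceptual difficulty, is the main technical burden of the proof.
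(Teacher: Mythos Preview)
Your generic case is roughly right, but the two exceptional cases are where the real content lies, and your treatment of both contains genuine errors.

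For the untwisted triangle you have the boundary graph backwards: the untwisted triangle has boundary $\cG_1$ (the complete bipartite $K_{3,3}$), while the prism $\cG_2$ is the boundary of the \emph{twisted} triangle. This is precisely why untwisted triangles are excluded from Lemma~\ref{lem:3to2_moves}. One can turn $\cG_1$ into a prism with a single cut-and-glue, but the arithmetic then reads $\omega'\le\omega-3/2+1+1=\omega+1/2$ (one vertex removed, one internal face lost, up to one face lost in the cut-and-glue), which goes the wrong way. The paper instead performs a $3\to 0$ deletion on the untwisted triangle, pairing the six external half-edges into three new edges; this removes three vertices and at most four faces, giving $\omega(\tilde G)\le\omega(G)-1/2$. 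The structural point you are missing is that whenever this $3\to 0$ move produces a tadpole or a melon, the original $G$ necessarily contains a pair of \emph{adjacent} untwisted triangles, and one then falls back on Lemma~\ref{lem:4to2}. So Lemma~\ref{lem:4to2} is not a preliminary filter as in your outline, but the escape hatch when the direct deletion goes wrong.

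For the $(1+1+1)$-triangle with boundary $\cG_6$, a single cut-and-glue cannot yield a prism: $\cG_6$ consists of the three gray edges together with three \emph{double} edges, and one cut-and-glue resolves at most one of the doublings (or trades it for a doubling of a gray edge). Two cut-and-glues would cost up to two faces, again giving only $\omega'\le\omega+1/2$. The paper handles $\cG_6$ by an entirely different mechanism: a $3\to 1$ move (two vertices removed, two cut-and-glues, hence $\omega(\tilde G)\le\omega(G)-1$) in one of three channels, with a separate $3\to 0$ contingency when every channel disconnects the graph. The extra half-unit of slack from the $3\to 1$ move is what makes the subsequent tadpole/melon cleanup affordable; with only the $-1/2$ budget your strategy aims for, it would not be.
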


\begin{proof}
As $G$ has no tadpoles, no melons and no dipoles, no pair of half edges incident to the triangle can be glued together into an edge. We discuss a number of cases separately.

\begin{description}
\item[There exists a triangle $H$ which is neither untwisted, nor with boundary graph $\cG_6$.] From Lemma~\ref{lem:3to2_moves}, one can perform a $3\to 2$ move to remove $H$ 
and obtain a graph $\tilde G$ with degree $\omega(\tilde G) \leq \omega(G) - 1/2$. Observe that for both left and  right moves, $3\to_L 2$ and $3\to_R 2$ 
$\tilde G$ can not have tadpoles\footnote{Otherwise a pair of external half edges of the triangle are joined into an edge, which is impossible.}.
We have two cases:
\begin{itemize}  
\item{\it $3\to_L 2$ move.} If $\tilde G$ has no melon then we set $G' = \tilde G$. Assume that $ \tilde G$ has a melon. 
 \begin{figure}[htb]
 \begin{center}
 \includegraphics[scale=.5]{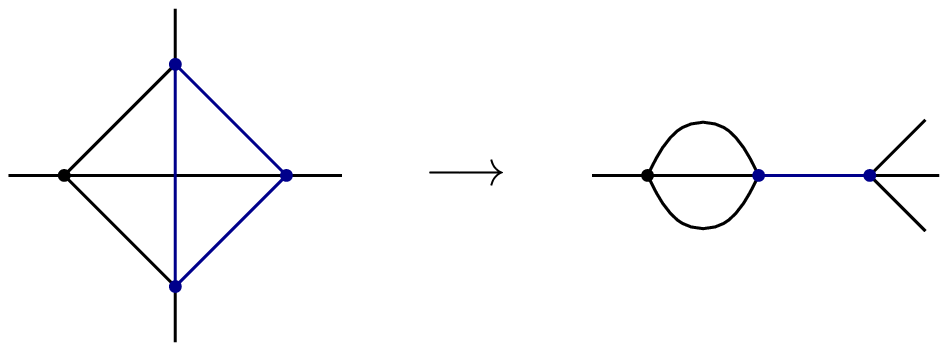}  
 \caption{A $3\to2$ move which generates a melon.} \label{fig:4to1}
 \end{center}
 \end{figure}
The melon can not include the new edge (as this would require some of the 
external half edges of the triangle to be joined together) hence it must include one of the new vertices. We are then 
in the situation depicted in Fig.~\ref{fig:4to1}.
  Eliminating the melon yields a graph $\hat G$ with:
    \[ \omega( \hat G) \le \omega(\tilde G) = \omega(G) -\frac{1}{2} \; .\]
   The new graph $\hat G$ cannot have a tadpole (as in this case $G$ would have a dipole). There are two cases:
    \begin{itemize}
     \item either $\hat G$ has no melon. Then we set $G' = \hat G$. 
     \item or $\hat G$ has a melon. In this case the melon must involve the new vertex and (in $G$) we are in the situation depicted in Fig.~\ref{fig:4to1a}.
 \begin{figure}[htb]
 \begin{center}
 \includegraphics[scale=.6]{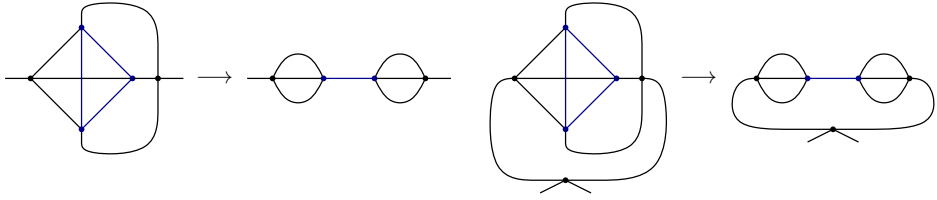}  
 \caption{A $3\to2$ move which generates two melons. In the worst case this require one tadpole deletion.} \label{fig:4to1a}
 \end{center}
 \end{figure}  
 
  We eliminate the new melon to obtain $\hat G_1$ with $\omega(\hat G_1) \le   \omega(\hat G) \le \omega(\tilde G) = \omega(G) - \frac{1}{2}$. The new graph $\hat G_1$ cannot have melons, and it can at most have one tadpole
 (see Fig.~\ref{fig:4to1a}). Eliminating the tadpole if it exists yields a graph $G'$ with:
 \[
  \omega(G') \le \omega(\hat G_1) + \frac{1}{2} \le \omega(G) \;,
 \] 
and $G'$ can not have any more melons or tadpoles.
    \end{itemize}

\item{\it $3\to_R 2$ move.} We set $G' = \tilde G$ as $G'$ can not have melons (otherwise $G$ would have dipoles).

\end{itemize}

\item[There exists a $(1+1+1)$-triangle $H$ with boundary graph $\cG_6$ .] We attempt to delete $H$ by a $3 \to 1$ move shown on the left-hand side of Fig.~\ref{fig:move6}. One can perform this
move in three possible channels, which we label by the edge being created: $(c2)$ (as represented in Fig.~\ref{fig:move6}), $(13)$ or $(ab)$. 

\begin{figure}[htb]
 \begin{center}
 \includegraphics[scale=.6]{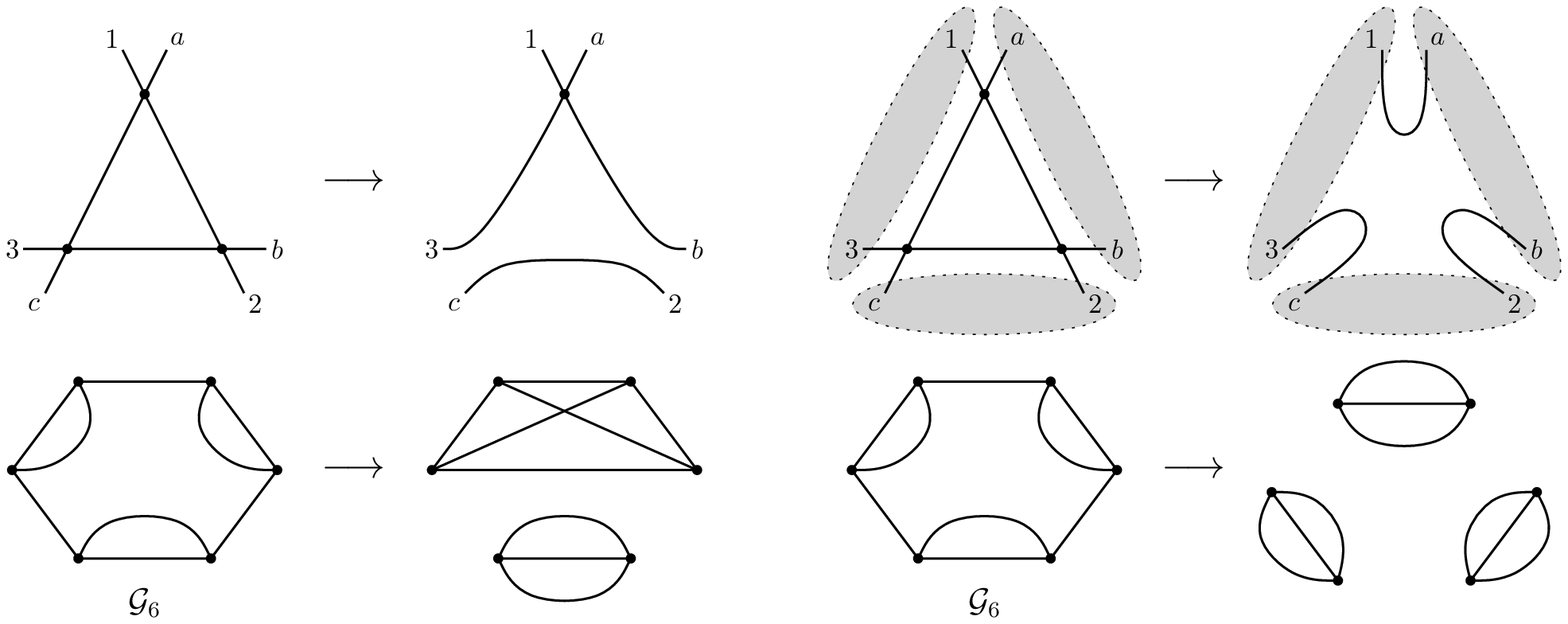}  
 \caption{Deletion of a triangle with boundary graph $\cG_6$: if the $3\to1$ moves of the type shown in the left panel disconnect the graph in all channels, one performs instead the $3\to 0$ move of 
 the right panel, which leaves the diagram connected.} \label{fig:move6}
 \end{center}
 \end{figure}
 
We have two cases:
\begin{itemize}
\item Assume first that performing the $3\to 1$ move in one of the channels (say $(c2)$) yields a connected graph $\tilde{G}$. The latter is obtained after two vertices are deleted and two cut-and-glue operations,
so that $\omega(\tilde{G}) \leq \omega(G) - 2\times 3/2 + 2 = \omega(G) - 1$. Three edges are created in the deletion: $(2c)$ and the edges involving the half edges $3$ and $b$.
\begin{itemize}
\item if $\tilde{G}$ has no tadpole and no melon, we set $G' = \tilde{G}$.
\item if $\tilde{G}$ has a melon, then at least two of the internal edges of the melon must be created by the deletion (otherwise $G$ would have had a dipole). If the new edge $(2c)$ belongs to the melon then, in $G$,
either $2$ or $c$ had to be connected to either $1$ or $a$ (as at least one other internal edge of the melon has been created by the deletion),  and $G$ had a dipole, which is impossible.
Finally, if only the new edges $3$ and $b$ are involved in the melon, then we are in the situation depicted in Fig.~\ref{fig:melcreated}.
\begin{figure}[htb]
 \begin{center}
 \includegraphics[scale=.5]{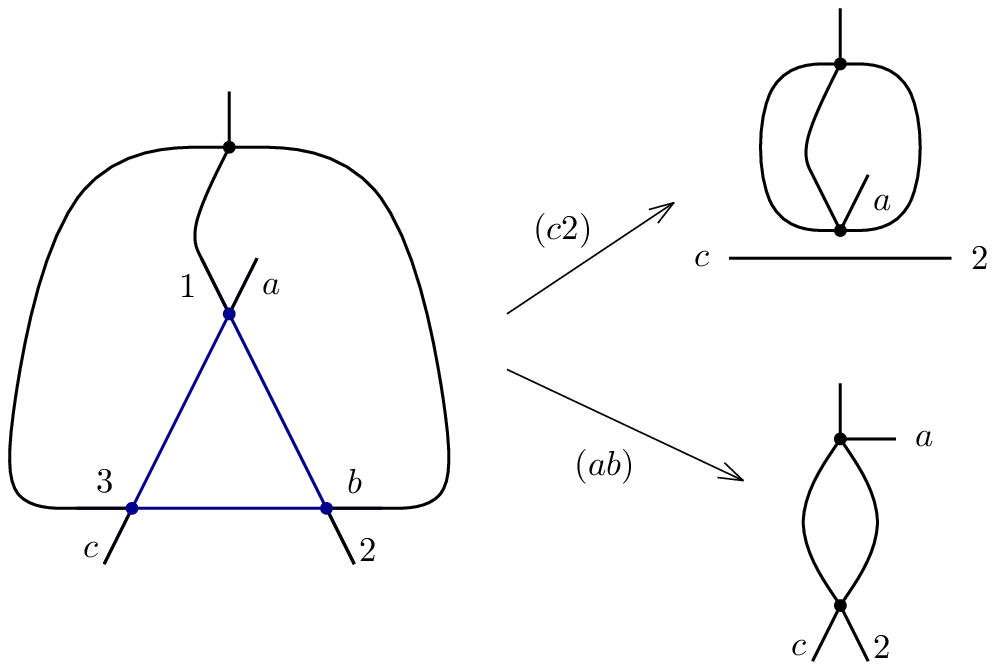}  
 \caption{Possible structure $G$ which generates a melon in the channel $(2c)$. Deleting in the channel $(ab)$ does not create tadpoles or melons.} \label{fig:melcreated}
 \end{center}
 \end{figure}

 In this case we delete in the channel that detaches the external halfedge $a$ of the triangle, $(ab)$. The resulting graph $G'$ can not have tadpoles or melons. 
\item if $\tilde{G}$ has a tadpole, the internal edge of the tadpole can only be the new edge $(c2)$ (since $G$ has no tadpole and no dipole). 
One eliminates it to get $\hat{G}$ with $\omega(\hat{G}) \leq \omega(\tilde{G}) + 1/2 \leq \omega(G) - 1/2$.
This cannot create a new tadpole, as this would require a tadpole or a dipole in $G$. 
\begin{itemize}
\item if $\hat{G}$ has no melon, we define $G' = \hat{G}$.
\item 
if $\hat{G}$ has a melon, the latter must involve at least one of the half edges $3$ or $b$, as otherwise there would be a dipole in $G$. Moreover, in $\tilde{G}$ the tadpole must have been connected to $1$ or $a$ (say $a$),
since a connection to $3$ or $b$ is also forbidden by the absence of dipole in $G$. We are then in one of the two cases shown in Fig.~\ref{fig:pproblem1}, depending on whether only one of the half edges $3$ or $b$ is involved
(in this case, say $b$), or both are. Both of these diagrams are $2$-point graphs, so they can generate at most one extra tadpole upon deletion. Deleting the latter if it occurs, we obtain a graph $G'$ with
$\omega(G') \leq \omega(\hat G ) + 1/2 \leq \omega(G)$.
\end{itemize}
\end{itemize}  

\begin{figure}[htb]
 \begin{center}
 \includegraphics[scale=.4]{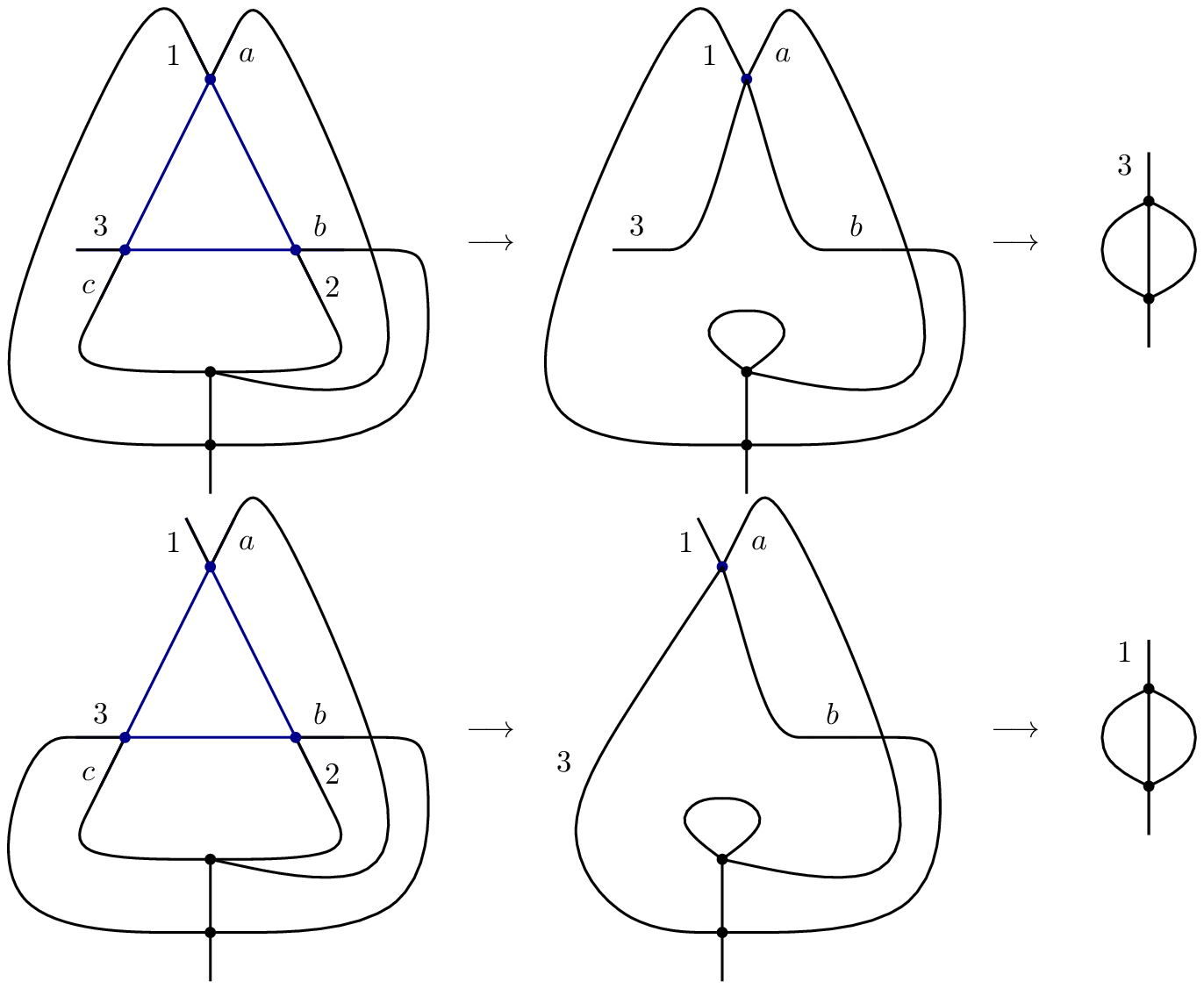}  
 \caption{Possible local structures of $G$, $\tilde{G}$ and $\hat{G}$ when a melon is generated by the deletion of the blue triangle. At most two tadpoles must be eliminated in total.} \label{fig:pproblem1}
 \end{center}
 \end{figure}

\item Assume now that performing a $3\to 1$ move in any of the channels yields a disconnected graph, so that we are in the situation shown on the right panel of Fig.~\ref{fig:move6}. We then instead act on $G$ with the $3\to 0$
move shown on the right-hand side of Fig.~\ref{fig:move6}, to obtain $G'$. $G'$ being formed by a 
gluing of three $2$-point subgraphs of $G$ in a loop, it is connected, and can have neither tadpoles nor melons. Moreover, this $3 \to 0$ move involves four cut-and-glue operations, so
that $F(G') \ge F(G) - 4$ and $\omega(G') \leq \omega(G) - 1/2$. 
\end{itemize}

\item[All triangles are untwisted.] If the triangle is untwisted, then one performs a deletion as depicted in Fig.~\ref{fig:cases3} on the left, to obtain $\tilde{G}$. 
Observe that one can choose the channel of deletion freely (that is the special pair of half-edges hooked to the same vertex which are reglued in an edge, like $(1,a)$ in Fig.~\ref{fig:cases3}). We have several cases:
\begin{itemize}
 \item The graph $\tilde{G}$ is connected. We have $V(\tilde{G}) = V(G)-3$ and $F(\tilde{G}) \ge F(G)- 4$, hence
    \[
     \omega( \tilde G) \le \omega(G) -\frac{1}{2} \;.
    \]
    
   \begin{itemize}  
	\item If $\tilde{G}$ has no melons and no tadpoles, we set $G'=\tilde{G}$.   
   \item {\it $\tilde{G}$ has a melon.} As $G$ has no dipoles, if $\tilde G$ has a melon then at least two of the new edges must be involved in the melon. 
       We have:
       \begin{itemize}
        \item if $(2,c)$ and $(3,b)$ belong to the same melon after deletion, then: either the vertices $(2,b)$ and $(3,c)$ were connected by dipoles to two other vertices in $G$, which cannot be; or we are in the situation 
        depicted in Fig.~\ref{fig:untwisted-samemel}. But then there exists a pair of adjacent triangles in $G$, which are both untwisted and we conclude by Lemma~\ref{lem:4to2}.

\begin{figure}[htb]
 \begin{center}
 \includegraphics[scale=.4]{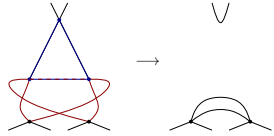}  
 \caption{A deletion of an untwisted triangle (in blue) which generates a melon. The red lines on the left form a pair of adjacent triangles.} \label{fig:untwisted-samemel}
 \end{center}
 \end{figure}  
        \item if $(1,a)$ and $(3,b)$ (resp. $(2,c)$) belong to a melon, but $(2,c)$ (resp. $(3,b)$) does not, then we are in the case depicted in Fig.~\ref{fig:trimelon}. Therefore, there is again a pair of adjacent untwisted
        triangles in $G$ and we can build $G'$ using Lemma~\ref{lem:4to2}. 
        \begin{figure}[htb]
 \begin{center}
 \includegraphics[scale=.4]{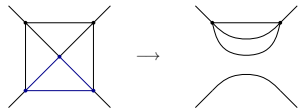}  
 \caption{Triangle deletion which generates a melon.} \label{fig:trimelon}
 \end{center}
 \end{figure}  
  \item if all the new edges are in the melon, we find again pairs of adjacent triangles and we conclude by Lemma~\ref{lem:4to2}
\end{itemize}
   
   \item {\it $\tilde{G}$ has no melon but has a tadpole.} Due to the asymmetry of the deletion, the edges behave differently. The edge $(1,a)$ cannot become a tadpole edge (as $G$ would have a dipole in this case).
     
      The edge $(3,b)$ or $(2,c)$ (or both) can. But then we are in one of the two situations shown in Fig.~\ref{fig:untwisted_tad}, and one finds again a pair of adjacent triangles and we conclude again by Lemma~\ref{lem:4to2}.
 \begin{figure}[htb]
 \begin{center}
 \includegraphics[scale=.3]{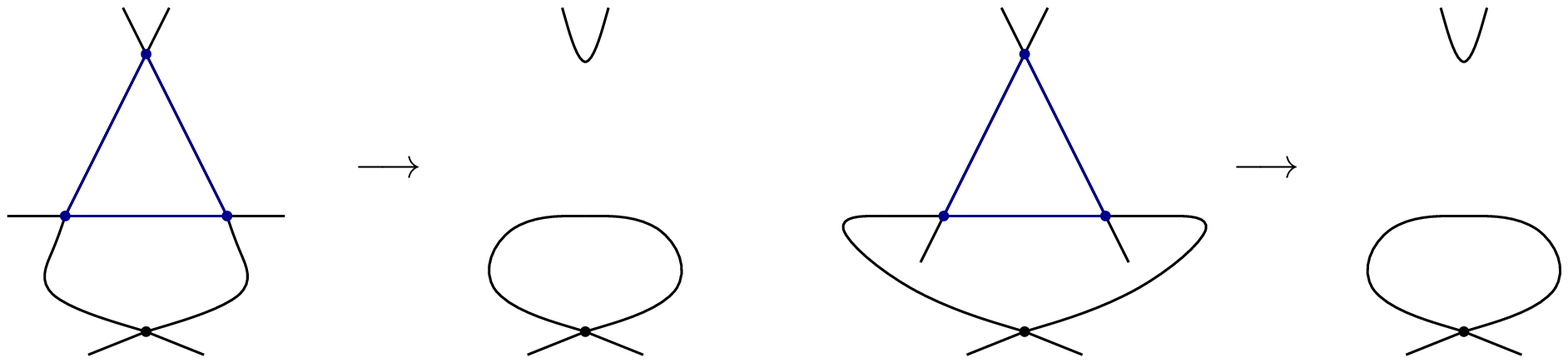}  
 \caption{Triangle deletions which generate one or more tadpoles.} \label{fig:untwisted_tad}
 \end{center}
 \end{figure}  
   \end{itemize}

\item The graph $\tilde{G}$ is disconnected. It has then two or three connected components, we discuss each situation separately. 
   \begin{itemize}
    \item If $\tilde{G}$ has three connected components, then $(1,a)$, $(3,b)$ and $(2,c)$ are all two point graphs. But then, deleting the triangle in the channel $(b,2)$, $(c,a)$ and $(1,3)$ yields a 
    connected graph $\tilde{G}_1$ from which we construct a suitable $G$ as previously described.
     \item If $\tilde{G}$ has two connected components, one of the alternative channels may yield a connected diagram $\tilde{G}_1$, which we can use to construct $G'$ as previously explained. 
     If none of the channels lead to a connected graph, $G$ is necessarily of the form presented in Fig.~\ref{fig:untwisted_final}.
\begin{figure}[htb]
 \begin{center}
 \includegraphics[scale=.6]{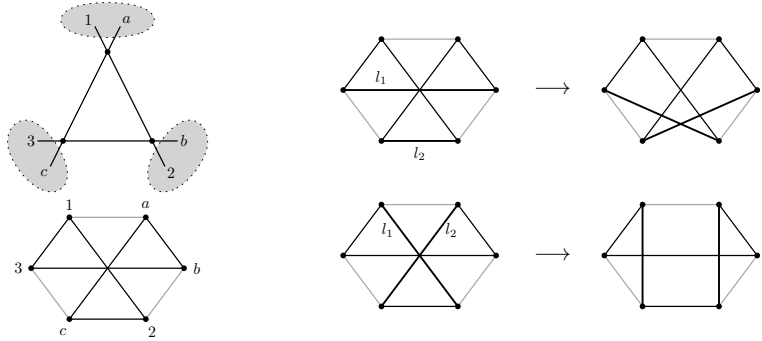}  
 \caption{Untwisted triangle which always leads to a disconnected graph upon deletion. One then performs a cut-and-glue operation to obtain a prism boundary graph.} \label{fig:untwisted_final}
 \end{center}
 \end{figure}  
On the left panel, we have represented the triangle and its boundary graph. 
There exists a pair of black edges $(l_1 , l_2)$ in the boundary graph which are:
\begin{itemize}
 \item traversed by the same face of $G$,
 \item completely disjoint (that is they do not share any vertex)
\end{itemize}

 To see this, consider the face $f$ associated to the boundary edge $(3b)$. As it goes into the connected component attached to the corner $(3,c)$, it must also come out of it:
\begin{itemize}
 \item if it exits through the half edge $c$ we are done as it traverses as either $(2c)$ or $(ac)$.
 \item if it exists through $(31)$, it enters the connected component attached to $(1,a)$ and exists through
   \begin{itemize}
    \item either $(12)$ which is disjoint from $(3b)$,
    \item or $(ac)$ or $(ab)$ which are disjoint from $(13)$.
   \end{itemize}
\end{itemize}

One performs a cut-and-glue operation on the pair $(l_1, l_2)$ which turns the boundary graph into a prism. As this either increases the number of faces or leaves it unchanged, the degree 
does not increase. One then performs a $3\to2$ move to obtain the graph $G'$ with degree $\omega(G') \le \omega(G) - 1/2$ (as the internal face of the triangle and one vertex are deleted).
This is illustrated on the right panel of Fig.~\ref{fig:untwisted_final}.

  \end{itemize}

\end{itemize}

\end{description} 

\end{proof}

\newpage

\section{Leading order}\label{sec:LO}

Let us first observe that since  Eq.~\ref{eq:crrrucial} contains no tadpoles and no melons, Theorem \ref{thm:LO} is equivalent to the following statement: the only connected stranded graph that survives in Eq.~\ref{eq:crrrucial} in the large $N$ limit is the ring graph of degree zero. All the tadpoles and melons have in fact been resummed into the new  covariance $K (\lambda, N)$, and the fact that in the large-$N$ limit the latter reduces to the generating function of the $4$-Catalan numbers means that only melons contribute to it at leading order in $1/N$ \cite{RTM}.
 
A natural conjecture would be that all stranded diagrams with no melon and no tadpole have strictly positive degree. However, this turns out to be wrong: for instance, one may easily construct stranded configurations with vanishing degrees starting from the graph $H_2$ closed onto itself (see Fig. \ref{fig:adegchains}, as well as Fig. \ref{fig:double-triangle} below). 

We therefore adopt a more refined strategy which will consist in: 1) identifying a suitable family of stranded diagrams for which $\omega \geq 1/2$ always holds; 2) proving that non-trivial cancellations occur for all the remaining graph configurations (which may have a vanishing degree).  

\

We first deal with the trivial situation in which there are no short faces.

\begin{lemma}\label{lem:LO-noshort}
Let $G$ be a stranded graph with no melon, no tadpole, no dipole and no triangle. If $\omega(G) =0$, then $G$ is the ring graph of vanishing degree.
\end{lemma}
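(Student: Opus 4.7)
The strategy is to use the end graph bound of Proposition~\ref{prop:end} to force $V(G)=0$, and then treat the ring case by hand. The first step is to observe that the hypotheses on $G$ kill every short face. Since $G$ has no broken edges, a face of length one in $G$ can only appear as the internal face of a bad tadpole, a face of length two must be enclosed by a $2$--dipole, and a face of length three must be enclosed by a triangle (exactly the observation recalled at the beginning of Proposition~\ref{prop:moves}). The assumptions of the lemma therefore imply $F_1(G)=F_2(G)=F_3(G)=0$.

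Next I would split on $V(G)$. If $V(G) \geq 1$, counting corners as in the proof of Proposition~\ref{prop:end} gives
\[
6 V(G) \;=\; \sum_{q\geq 1} q\, F_q(G) \;\geq\; 4\, F(G)\,,
\]
so $F(G) \leq \tfrac{3}{2} V(G)$. Substituting into $\omega(G) = 3 + \tfrac{3}{2} V(G) - F(G)$ (recall $B(G)=0$) yields $\omega(G) \geq 3$, contradicting $\omega(G)=0$. Consequently $V(G)=0$, and since $G$ is connected it must be a ring graph, that is a single unbroken edge closed onto itself.

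The remaining step is to single out the ring graph of degree zero. For any ring graph one has $\omega(G) = 3 - F(G)$, and Proposition~\ref{prop:end} bounds the number of faces by $3$. Hence $\omega(G)=0$ forces $F(G)=3$, which in turn forces each of the three strands of the unique edge to close into its own face; equivalently, the permutation $\sigma^e$ labelling the edge is the identity. This uniquely identifies $G$ as the ring graph of vanishing degree. I do not foresee a genuine obstacle here: the argument is essentially bookkeeping once the absence of short faces — the real content of the hypotheses — has been extracted.
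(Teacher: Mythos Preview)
Your argument is essentially the paper's, but there is a gap: you assume $B(G)=0$ (no broken edges), which is not part of the hypotheses. This lemma lives in Section~\ref{sec:LO}, where stranded graphs may carry broken edges; the blanket convention ``graphs have no broken edges'' was declared only for Section~\ref{sec:proof1}. Your key step --- ``a face of length two must be enclosed by a $2$--dipole, and a face of length three must be enclosed by a triangle'' --- relies on every strand traversing an edge, which fails precisely when a broken edge is present: the returning strands of a broken edge can close short faces at a single vertex without any tadpole, dipole or triangle in the underlying Feynman graph. Likewise, writing $\omega(G)=3+\tfrac{3}{2}V(G)-F(G)$ already presupposes $B(G)=0$.

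The fix is the one the paper uses: if $G$ is not a ring graph, first replace every broken edge by an unbroken one to obtain $\tilde G$. This does not increase the degree (each replacement changes $B$ by $-1$ and $F$ by at most $+1$) and leaves the underlying Feynman graph untouched, so $\tilde G$ still has no tadpole, dipole or triangle. Now your no-short-faces argument applies to $\tilde G$ and gives $\omega(G)\geq\omega(\tilde G)\geq 3>0$, a contradiction. The ring case then goes as you wrote (and note that a \emph{broken} ring would have $\omega=3+0+1-F\geq 1$, so it is excluded anyway).
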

\begin{proof}
Suppose $G$ is not a ring graph. We can replace all the broken edges of $G$ by unbroken ones to obtain $\tilde{G}$ such that $\omega(G) \geq \omega(\tilde{G})$. Furthermore $\tilde{G}$ cannot have short faces, hence its degree is strictly positive. 
\end{proof}

\

We then make use of the irreducible character of the antisymmetric and symmetric traceless $O(N)$ representations to fix the structure of an arbitrary connected $2$-point function. 
\begin{lemma}\label{lemma:2-point-N}
Let $\cG$ be a connected (and non amputated) $2$-point graph. The associated amplitude $A(\cG)_{a_1 a_2 a_3 , b_1 b_2 b_3}$ can be written as:
\be
A(\cG)_{a_1 a_2 a_3 , b_1 b_2 b_3} = \lambda^{V(\cG)} f_\cG (N) \, {\bf P}_{a_1 a_2 a_3 , b_1 b_2 b_3}\,, \nonumber
\ee
where $f_\cG (N)$ is uniformly bounded.
\end{lemma}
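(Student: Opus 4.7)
The plan is to apply Schur's Lemma to the irreducible $O(N)$-representation $V_\bP$ carried by the image of $\bP$ (the antisymmetric rank-$3$ representation in the $\bA$ case, the symmetric traceless rank-$3$ representation in the $\bS$ case). First I would observe that $A(\cG)$, viewed as a linear map on rank-$3$ tensors, is $O(N)$-equivariant: both the Gaussian covariance $\bP$ and the tetrahedral interaction in Eq.~\eqref{eq:crucial} are $O(N)$-invariant, and Wick contractions preserve equivariance. Because $\cG$ is non-amputated, each of its two external half-edges sits inside a propagator factor of $\bP$, so $A(\cG) = \bP\, A(\cG)\, \bP$ and $A(\cG)$ descends to an $O(N)$-equivariant endomorphism of $V_\bP$. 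Schur's Lemma then forces $A(\cG) = c_\cG \bP$, for a scalar $c_\cG$ depending only on $\cG$ and $N$; extracting the explicit power of the coupling from each vertex gives the decomposition $c_\cG = \lambda^{V(\cG)} f_\cG(N)$.

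To bound $f_\cG(N)$, I would take the trace of both sides: $\Tr A(\cG) = c_\cG \dim V_\bP$, where $\dim V_\bA = \binom{N}{3}$ and $\dim V_\bS = \binom{N+2}{3} - N$ are both $\Theta(N^3)$. The quantity $\Tr A(\cG)$ can be read as the vacuum amplitude of the connected graph $\cG^\circ$ obtained by closing the two external half-edges of $\cG$ into a single propagator edge. Using the renormalized formulation of Eq.~\eqref{eq:crrrucial} together with Proposition~\ref{prop:main} and the uniform boundedness of $K(\lambda,N)$ established in Section~\ref{sec:subtr}, this vacuum amplitude is at most $\lambda^{V(\cG)} N^3$ times a constant independent of $N$, so dividing by $\dim V_\bP = \Theta(N^3)$ yields a uniformly bounded $f_\cG(N)$.

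The main obstacle is this last step. Even when $\cG$ itself is tadpole- and melon-free, the closing operation into $\cG^\circ$ can create such subgraphs, so Proposition~\ref{prop:main} does not apply to $\cG^\circ$ directly. This is precisely why the argument must be carried out in the rewritten theory of Section~\ref{sec:subtr}: there all tadpole and melon insertions have already been absorbed into the renormalized covariance $K(\lambda,N)$, so each contributing vacuum diagram is tadpole- and melon-free by construction and its degree is controlled by Proposition~\ref{prop:main}, while each factor of $K(\lambda,N)$ contributes only a bounded power of $N^{-1/2}$. Combining the two ingredients gives the required uniform bound on $f_\cG(N)$ and completes the proof.
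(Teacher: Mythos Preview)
Your Schur's Lemma argument and the strategy of deducing boundedness of $f_\cG(N)$ from the existence of the $1/N$ expansion match the paper's proof almost exactly; the paper is simply terser, concluding in one line that an unbounded $f_\cG(N)$ ``would contradict Theorem~\ref{thm:main}'' without spelling out the trace/closing step. One caution about your last paragraph: Eq.~\eqref{eq:rewrite} reorganizes the full series $F_\bP$, not the amplitude of a single graph $\cG^\circ$, so the clean way to say what you want is to factor $A(\cG^\circ)$ as the amplitude of its melon--tadpole--free core (bounded by Proposition~\ref{prop:main}) times the Schur coefficients of its maximal melon--tadpole subgraphs (each a finite product of the explicitly bounded $f_1^\bP,f_2^\bP$ of Section~\ref{sec:subtr}), rather than invoking $K(\lambda,N)$ itself.
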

\begin{proof}
In what follows, we use the short-hand $A_{a_1 a_2 a_3 , b_1 b_2 b_3}$ for $A(\cG)_{a_1 a_2 a_3 , b_1 b_2 b_3}$. It defines a bilinear form $A(\cdot , \cdot)$ on the space of tensors: 
$$A(T,T') = T_{a_1 a_2 a_3} A_{a_1 a_2 a_3 , b_1 b_2 b_3} T'_{b_1 b_2 b_3}$$ 
By construction, $A(T,T')$ is a sum over stranded configurations, all of which contract the indices of $T$ and $T'$ pairwise (along external strands), in a $O(N)$ invariant way. It follows that $A(T,T')$ is invariant under $O(N)$: 
$$\forall O \in O(N), \qquad A(O\cdot T  , O \cdot T') = A(T, T')$$
By duality with respect to the standard inner product on the space of tensors\footnote{In the notation of this paper, the inner product of $T$ and $T'$ is defined as $\langle T \vert T' \rangle := T {\bf 1} T'$}, one can construct a map $\hat{A}$ from the space of rank-$3$ tensors to itself. In an arbitrary orthonormal basis $\{ T_n \}$, it takes the form:
$$
\hat{A} (T) := \sum_{n} A(T_n ,T) T_n  
$$
We now show that: a) $\hat{A}(\mathrm{Ker} \, {\bf P}) = \{ 0 \}$ ; b) $\hat{A}$ defines a $O(N)$ intertwiner; c) $\mathrm{Im} \, \hat{A} \subset \mathrm{Im} \, {\bf P}$. 

First, since the graph $\cG$ is not amputated (and ${\bf P}$ is symmetric) one can decompose $A(T,T')$ as: $A(T,T') = \tilde{A} ( {\bf P} T , {\bf P}  T' )$, where $\tilde{A}$ is a bilinear form. Hence $A(T,T')=A({\bf P} T,T') = A(T,{\bf P} T')$. In particular, for any $T \in \mathrm{Ker} \, {\bf P}$:
$$
\hat{A} (T) = \sum_n A(T_n ,  T) T_n = \sum_n A(T_n , {\bf P} T) T_n = \sum_n A(T_n , 0) T_n = 0\,.  
$$

Second, the covariance of $\hat{A}$ is a direct consequence of the invariance of $A$. Indeed, for any orthogonal transformation $O \in O(N)$ we show that
$$ 
\hat{A} (O \cdot T) = \sum_n A(T_n , O \cdot T) T_n = \sum_n A( O^{-1} \cdot T_n , T) T_n = \sum_n A(T_n^O , T) O \cdot T_n^O = O \cdot \hat{A} (T)\,,
$$
where we have introduced $T_n^O := O^{-1} \cdot T_n$. The invariance of $A$ has been invoked in the second equality. In the last equality, we have used the fact that the $O(N)$ action preserves the inner product to conclude that $\{ T_n^O \}$ is an orthonormal basis. 

Finally, to prove that $\mathrm{Im} \, \hat{A} \subset \mathrm{Im} \, {\bf P}$ it is convenient to choose an orthornomal basis adapted to the orthogonal projector $\mathbf{P}$. Calling $p$ the dimension of the full tensor space and $q$ the dimension of $\mathrm{Im} \, {\bf P}$, we assume that $\{ T_1 ,\ldots , T_q \}$ is an orthonormal basis of $\mathrm{Im} \, {\bf P}$, and that $\{ T_{q+1} ,\ldots , T_p \}$ is an orthonormal basis of $\mathrm{Ker} \, {\bf P}$. For any tensor $T$, we then have:
$$
\hat{A}(T) = \sum_{n=1}^p A(T_n , T) T_n = \sum_{n=1}^p A({\bf P} T_n , T) T_n = \sum_{n=1}^q A( T_n , T) T_n \in \mathrm{Im} \, {\bf P}
$$  

From b) and c), we conclude that $\hat{A}$ induces an intertwining map from the image of ${\bf P}$ to itself. The antisymmetric and symmetric traceless representations being irreducible (for large enough $N$) \cite{weyl1946, hamermesh}, Schur's Lemma implies that $\hat{A}$ is proportional to the identity on $\mathrm{Im} \, {\bf P}$. Since it furthermore vanishes on $\mathrm{Ker} \, {\bf P}$, one concludes that $\hat{A}$ is a multiple of ${\bf P}$. 

Hence $A_{a_1 a_2 a_3 , b_1 b_2 b_3}$ is proportional to ${\bf P}_{a_1 a_2 a_3, b_1 b_2 b_3}$. Finally, the coefficient of proportionality between these two quantities must be uniformly bounded in $N$, otherwise the $1/N$ expansion would not exist, contradicting Theorem~\ref{thm:main}.  
\end{proof}

\

In view of Lemma \ref{lemma:2-point-N}, the power-counting arguments of the previous sections can be immediately generalized to graph configurations in which the bare propagators are substituted with arbitrary connected $2$-point subgraphs. 
It will in particular be convenient to consider melons and tadpoles with such decorations, as represented in Fig. \ref{fig:gen-tadmel}. We call them \emph{generalized tadpoles} and \emph{generalized melons}.  
\begin{figure}[htb]
 \begin{center}
 \includegraphics[scale=.7]{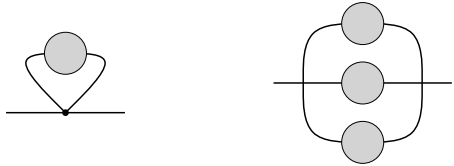}  
 \caption{A generalized tadpole (left) and a generalized melon (right).} \label{fig:gen-tadmel}
 \end{center}
 \end{figure}

\

In the following, we will set-up an induction on the number of vertices of a graph with no generalized melon and no generalized tadpole\footnote{A graph with no generalized melon and no generalized tadpole has in particular no melon and no tadpole, so the degree of any of its configuration is positive or zero.}. The next Lemma deals with the smallest graph in this family. 
\begin{lemma}\label{lem:init}
The smallest vacuum diagram with no generalized melon, no generalized tadpole and $V(\cG) \geq 1$ is the double triangle graph of Fig. \ref{fig:double-triangle}. Any of its stranded configurations has degree $\omega \geq 1/2$.
\end{lemma}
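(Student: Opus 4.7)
The plan is to establish the Lemma in three steps: combinatorial minimality, reduction to unbroken configurations, and a face-counting bound.

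\textbf{Step 1: Minimality.} I would first rule out graphs with $V(\cG) \leq 2$. At $V=1$, the $4$-regularity of the tensor vertex forces self-loops, hence (generalized) tadpoles. At $V=2$ without self-loops, the eight half-edges must form four parallel edges between the two vertices, so the diagram contains melon subgraphs. For $V=3$ without self-loops, the $4$-regularity equations $e_{AB}+e_{AC}=e_{AB}+e_{BC}=e_{AC}+e_{BC}=4$ admit the unique solution $e_{AB}=e_{AC}=e_{BC}=2$, which is the double triangle of Fig.~\ref{fig:double-triangle}. A direct enumeration of the vertex/edge subsets supported on the two non-distinguished vertices then confirms that no connected $2$-point subgraph exists with both external half-edges attached to a single vertex, nor that realises a third $2$-point chain between two vertices disjoint from the two direct edges; hence this graph carries neither generalized tadpoles nor generalized melons.

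\textbf{Step 2: Reduction to $B(\cS)=0$.} The broken-to-unbroken edge replacement used in the proof of Proposition~\ref{prop:main} changes $B$ by $-1$ and $F$ by $\pm 1$, so that $\Delta\omega \in \{-2, 0\}$ and $\omega$ can only decrease. It therefore suffices to prove $\omega(\cS) \geq 1/2$ for configurations $\cS$ with $B(\cS)=0$, in which case $\omega(\cS) = \tfrac{15}{2} - F(\cS)$ and the desired inequality becomes $F(\cS) \leq 7$.

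\textbf{Step 3: Face bound.} Since $\sum_{k\geq 1} k F_k(\cS) = 6V(\cS) = 18$ and $F_1=0$ (no tadpoles), the bound reduces to establishing $F_2(\cS) \leq 3$. A length-$2$ face has two corners; following the two strand-half-edges out of one corner, the absence of self-loops forces each strand to traverse an edge to a \emph{different} vertex. To close the face with only one further corner, both strands must then join two distinct vertices $u,w$ along two distinct edges, i.e., lie inside a $2$-dipole of $\cG$. At each endpoint of the dipole the face necessarily uses the unique ``within-dipole'' corner pairing the two half-edges incident to the dipole. Each of the three dipoles of the double triangle contains exactly two such corners (one per endpoint), so it accommodates at most one length-$2$ face, giving $F_2 \leq 3$. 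Combining with $\sum_{k\geq 3} k F_k = 18 - 2F_2 \geq 3 \sum_{k\geq 3} F_k$ yields
\[
F(\cS) = F_2 + \sum_{k\geq 3} F_k \leq F_2 + \frac{18-2F_2}{3} = 6 + \frac{F_2}{3} \leq 7,
\]
so that $\omega(\cS) \geq 15/2 - 7 = 1/2$.

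The main obstacle is the structural claim in Step 3 that length-$2$ faces are confined to dipoles and saturate the two within-dipole corners at their endpoints; this requires careful bookkeeping of strand connectivity at the stranded vertex and across unbroken edges, the rest being a linear-programming bound on short-face multiplicities.
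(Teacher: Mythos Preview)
Your proof is correct. Step~1 mirrors the paper's argument (the paper is terser, simply noting that any vacuum graph on one or two vertices is a melon or tadpole, and that the double triangle is the unique option on three vertices). Steps~2--3, however, take a genuinely different route. The paper observes that the double triangle is exactly $H_2$ closed onto itself and invokes Lemma~\ref{lem:2pointmoves}: an $H_2$-contraction lowers the degree by at least $1/2$, and since the resulting ring has $\omega=0$, the double triangle has $\omega\geq 1/2$. You instead give a direct, self-contained face count tailored to this graph, bounding $F_2\leq 3$ via the observation that a length-$2$ face must sit inside a dipole and occupy the unique within-dipole corner at each endpoint, then closing with the elementary inequality $F\leq 6+F_2/3$.

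Your argument avoids the dependency on Lemma~\ref{lem:2pointmoves} (whose proof in Appendix~\ref{app:Specialcases} is itself a lengthier face-counting exercise), at the price of a small combinatorial check specific to the double triangle. The paper's route is shorter in situ because the $H_2$ machinery is already in place. Note also that you make the reduction to unbroken edges explicit in Step~2, whereas the paper's one-line proof leaves this implicit (Lemma~\ref{lem:2pointmoves} is stated for graphs with only unbroken edges).
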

\begin{proof}
A vacuum diagram with one or two vertices is either a melon or a tadpole, and the double triangle graph is the only suitable configuration on three vertices. To determine its scaling in $N$, we remark that it is nothing but $H_2$ closed onto itself. By Lemma \ref{lem:2pointmoves}, the degree of any of its stranded configurations is at least $1/2$. 
\end{proof}

\begin{figure}[htb]
 \begin{center}
 \includegraphics[scale=.7]{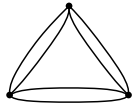}  
 \caption{The double triangle graph, obtained by closing $H_2$ onto itself.} \label{fig:double-triangle}
 \end{center}
 \end{figure}

\

In the following, we will implicitly rely on the following simple observation, which we state without proof.
\begin{lemma}\label{lem:gen_comb}
Let $\cG$ be a vacuum graph and $\cH$ a $2$-point graph, both without generalized melon and without generalized tadpole. Replacing any line of $\cG$ by $\cH$ yields a graph $\tilde{\cG}$ which has itself no generalized melon and no generalized tadpole.
\end{lemma}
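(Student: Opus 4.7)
The plan is to proceed by contradiction: assume $\tilde{\cG}$ contains a generalized tadpole or generalized melon $T$, and derive a contradiction with the assumptions on $\cG$ or $\cH$. The object whose position relative to $\cH$ I would track throughout is the \emph{core} of $T$, meaning the one or two vertices underlying the bare tadpole or melon together with the three (resp.\ one) skeleton edges carrying the $2$-point decorations.

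First I would show that the core of $T$ lies either entirely inside $\cH$ or entirely outside $\cH$. For a tadpole the core is a single vertex, so there is nothing to prove. For a melon with core vertices $v_1, v_2$, suppose $v_1 \in \cH$ and $v_2 \notin \cH$; then the three edge-disjoint paths between them (one through each decoration) would each have to cross the boundary of $\cH$. But that boundary in $\tilde{\cG}$ consists of the two external edges of $\cH$ only, forbidding three disjoint crossings.

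Then I would split into cases according to this dichotomy. If the core of $T$ lies inside $\cH$, a similar two-cut argument would show that each $2$-point decoration of $T$, being a connected $2$-point subgraph attached to the core, cannot escape through $\cH$'s two external edges while remaining $2$-point and reconnecting to the core; hence $T \subseteq \cH$, so $\cH$ itself would contain a generalized melon or tadpole, contradicting its hypothesis. If the core of $T$ lies outside $\cH$, then either $\cH$ and $T$ are edge-disjoint, in which case $T$ pulls back to a generalized melon or tadpole of $\cG$ (identifying the replaced edge with the contracted $\cH$); or $\cH$ is nested inside one of the decorations of $T$, in which case contracting $\cH$ to a single edge turns that decoration into another connected $2$-point subgraph of $\cG$ and exhibits $T$ as a generalized melon or tadpole of $\cG$. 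Either way the hypothesis on $\cG$ is violated.

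The hard part will be making the cut-counting arguments precise enough to rule out exotic overlap configurations between $T$ or its decorations and $\cH$. The combinatorial input is always the same, namely that $\cH$ has exactly two external edges in $\tilde{\cG}$ and that any connected $2$-point subgraph intersecting $\cH$ non-trivially but not in a nested fashion would need to use more than two such boundary edges. Once this is packaged as a clean ``nested or disjoint'' statement for pairs of connected $2$-point subgraphs of $\tilde{\cG}$, the case analysis above closes the proof.
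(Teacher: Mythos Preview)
The paper states this lemma without proof, calling it a ``simple observation'', so there is no argument in the text to compare yours against. Your contradiction strategy via the two-edge boundary of $\cH$ is the natural one and is essentially correct.

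There is one point where your intermediate claim is not quite right, though the conclusion survives. In the case where the core of $T$ lies inside $\cH$, you assert that every decoration of $T$ must stay inside $\cH$, hence $T \subseteq \cH$. In fact a single decoration can escape through both boundary edges $e_1, e_2$. What the cut-counting actually yields is the following: among the connected $2$-point pieces $D_1,\ldots,D_k,R$ attached to the core (with $k=1$ for a tadpole, $k=3$ for a melon, and $R := \tilde{\cG}\setminus T$), at most one can have vertices outside $\cH$, because each such piece needs an even and hence $\geq 2$ number of crossings of $\{e_1,e_2\}$ by the $4$-valence parity, while only two crossings are available in total. If the escaping piece is $R$ you recover $T\subseteq\cH$ and conclude as you wrote. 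If instead some $D_i$ escapes, then the core together with the remaining $D_j$'s and $R$ all sit inside $\cH$; but $R$ is itself a connected $2$-point subgraph attached to the core (at both half-edges of $v$ for a tadpole, at one half-edge of each $v_1,v_2$ for a melon), so this is again a generalized tadpole or melon inside $\cH$, with $R$ standing in for the missing decoration $D_i$. Either way $\cH$ contains a forbidden subgraph and the contradiction closes.

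With this small repair your case analysis goes through; the ``nested or disjoint'' heuristic you invoke at the end is exactly the right organizing principle, applied not to $\cH$ versus $T$ but to $\cH$ versus each of the pieces $D_1,\ldots,D_k,R$.
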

\noindent From a combinatorial perspective, this will allow to analyze the occurrence of generalized melons and tadpoles in a graphical way, in the exact same manner as for ordinary melons and tadpoles.  

\

We are now in the position to prove that a stranded graph with vertices but no generalized tadpole and no generalized melon cannot contribute to the leading order.
\begin{proposition}\label{prop:LO-bound}
Let $G$ be a stranded graph with $V(G) \geq 1$. If $G$ has no generalized tadpole and no generalized melon, then $\omega(G) \geq 1/2$.
\end{proposition}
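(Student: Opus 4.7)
The plan is to establish the bound by strong induction on $V(G)$. The base case is handled by Lemma~\ref{lem:init}: any admissible graph with $V \leq 2$ is automatically a tadpole or a melon and is excluded, while the only admissible graph with $V = 3$ is the double triangle, for which $\omega \geq 1/2$. For the inductive step I assume $V(G) \geq 4$ and that the bound holds for every admissible graph with strictly fewer vertices. First I would straighten all broken edges of $G$, which preserves the vertex set and the underlying combinatorial map, so the resulting graph $\tilde G$ still has no generalized tadpole or melon and satisfies $\omega(\tilde G) \leq \omega(G)$. If $\tilde G$ has no short face, the corner identity $\sum_q q F_q(\tilde G) = 6 V(\tilde G)$ gives $F(\tilde G) \leq 3 V(\tilde G)/2$, hence $\omega(\tilde G) \geq 3$, and we are done.

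Otherwise $\tilde G$ contains a dipole or a triangle, and I apply Lemma~\ref{lem:dipole} or Lemma~\ref{lem:tripole} to produce a (possibly disconnected) graph $\tilde G'$ with no tadpole, no melon, $V(\tilde G') < V(\tilde G)$, and $\omega(\tilde G') \leq \omega(\tilde G)$. The key claim is that $\tilde G'$ inherits the absence of generalized tadpoles and melons. On the one hand, the hypothesis on $\tilde G$ prevents the intermediate tad/mel-elimination sub-steps inside the proofs of Lemmas~\ref{lem:dipole} and \ref{lem:tripole} from ever being triggered: if the initial dipole or triangle deletion produced, say, a tadpole at an external vertex $w$, then the deleted subgraph together with the two edges of $\tilde G$ joining it to $w$ would already form a $2$-point subgraph realizing a generalized tadpole at $w$ in $\tilde G$, contradicting the hypothesis. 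On the other hand, any generalized tadpole or melon of $\tilde G'$ lifts to one of $\tilde G$ by reinserting the deleted dipole or triangle inside the offending $2$-point subgraph, again contradicting the assumption on $\tilde G$. Since $V(\tilde G) \geq 4$ and each reduction move removes at most three vertices, we have $V(\tilde G') \geq 1$, so applying the inductive hypothesis componentwise — ring components contribute $\omega = 0$ while each component with $V \geq 1$ contributes $\omega \geq 1/2$ — yields $\omega(\tilde G') \geq 1/2$, and therefore $\omega(G) \geq \omega(\tilde G) \geq \omega(\tilde G') \geq 1/2$.

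The main obstacle will be the exhaustive case analysis underlying the lifting argument: Lemmas~\ref{lem:dipole} and \ref{lem:tripole} distinguish many sub-configurations of dipoles and triangles, and for each of them I have to verify carefully that reinsertion of the deleted local structure really does promote a putative created tad/mel, or a putative generalized tad/mel of $\tilde G'$, into a genuine generalized tadpole or melon of $\tilde G$. Once this lifting is certified move by move, the rest of the argument is purely inductive book-keeping.
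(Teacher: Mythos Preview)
Your inductive scaffolding matches the paper's, but the heart of the argument --- your claims (a) and (b) --- does not survive the case analysis you anticipate.

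\textbf{Claim (a) is false.} Take a graph $\tilde G$ with no dipole containing a triangle $\{t_1,t_2,t_3\}$ that is neither untwisted nor of boundary type $\cG_6$, and such that one external half-edge of each $t_i$ is joined to a common vertex $u$ (this is exactly the local picture of Fig.~\ref{fig:4to1}). Lemma~\ref{lem:tripole} then applies a $3\to_L 2$ move; the new vertex inherits the three half-edges going to $u$ and forms an ordinary melon with $u$, so the melon-elimination sub-step \emph{is} triggered. Yet for generic completions $\tilde G$ has no generalized tadpole or melon: the triangle together with its three edges to $u$ is a \emph{four}-point object (one remaining leg at $u$, three remaining legs on the triangle), and your sentence ``the deleted subgraph together with the two edges of $\tilde G$ joining it to $w$ would already form a $2$-point subgraph'' is simply wrong when the deleted subgraph carries four or six external half-edges rather than two.

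\textbf{Claim (b) is likewise unsupported.} Reinserting a $4$-point dipole or a $6$-point triangle into a $2$-point connector of $\tilde G'$ can add extra external half-edges and destroy its $2$-point character. For instance, if a generalized tadpole of $\tilde G'$ uses only one of the two new edges produced by a dipole deletion, reinserting the dipole turns the putative $2$-point insertion into a $4$-point subgraph of $\tilde G$, and nothing is lifted.

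The paper's proof is organized differently precisely because these lifts fail. It reruns Lemmas~\ref{lem:dipole} and~\ref{lem:tripole} with ``tadpole/melon'' replaced throughout by ``generalized tadpole/melon'' (Lemma~\ref{lem:gen_comb} and Lemma~\ref{lemma:2-point-N} make this substitution legitimate), \emph{allows} the generalized cleanup steps to fire, and then checks case by case that whenever the resulting bound $\omega(G')\le\omega(G)$ is not strict, one either still has $V(G')>0$ (so the induction hypothesis applies) or has in fact exhibited a generalized tadpole already sitting inside $G$ (contradicting the hypothesis). In the single-melon scenario of Fig.~\ref{fig:4to1} the bound \emph{is} strict ($\omega(G')\le\omega(G)-1/2$), so no lifting to $G$ is needed there; your plan to preclude cleanup altogether is both unnecessary and impossible.
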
 
\begin{proof}
We proceed by induction on the number of vertices. The initialization is provided by Lemma \ref{lem:init}. Furthermore, Lemma \ref{lem:LO-noshort} ensures that the proposition holds for any graph with no dipole or triangle.

If $G$ has more than $3$ vertices and contains short faces, we look for deletions of dipoles or triangles, following the exact same steps as in the previous sections. The only differences is that some propagator lines may now be decorated with arbitrary connected $2$-point insertions, and that the words "tadpole" and "melon" must be replaced by "generalized tadpole" and "generalized melon". In particular, we must be able to delete any generalized tadpole or generalized melon that may be generated by the deletion of a dipole or a triangle. From a combinatorial point of view, this does not add any difficulty, thanks to the simple observation of Lemma \ref{lem:gen_comb}. Hence, the result of the previous sections immediately allow to construct a stranded configuration $G'$ with: strictly fewer vertices than $G$, no generalized tadpole or generalized melon, and degree verifying $0 \leq \omega(G') \leq \omega(G)$. To conclude, we claim that we are in one of three situations:
\begin{enumerate}
\item the derived bound is actually strict, i.e. $\omega(G')< \omega(G)$, in which case we immediately obtain $\omega(G) \geq 1/2$;
\item $V(G')>0$, in which case we can apply the induction hypothesis to deduce that $\omega(G) \geq \omega(G') \geq 1/2$;
\item $\cG$ has a generalized tadpole, which contradicts our working assumption. 
\end{enumerate}
Checking the validity of this claim is straightforward but tedious, as it requires to go once more through the combinatorial analysis of Section \ref{sec:proof1}. We limit ourselves to listing all the configurations for which we have to resort to 2. or 3.
\begin{itemize}

\item Deletion of dipoles

\begin{itemize}
\item There exists a deletion in the $\perp$ channel which disconnects $\cG$. Then $\cG$ has a a generalized tadpole.

\item There exists a deletion in the $\perp$ channel which does not disconnect the graph, and does not create generalized tadpoles or generalized melons. Then $G'$ is obtained by performing the deletion. Furthermore $V(G')>0$, otherwise $G$ would be a melon on two vertices. 

\item All $\perp$ deletions do not disconnect the graph, and create at least a melon or a tadpole. Then $G$ must contain one of the patterns shown in Fig. \ref{fig:adelchains} (up to some connected $2$-point decorations). In particular, $V(G) \geq 3$. 

\begin{itemize}
\item If there exists a deletion in the $=$ (or $\times$) channel which does not create a generalized tadpoles or melons, one can perform it and obtain a graph $G'$ with $V(G')>0$. 
\item If not, we are in one of the situations shown in Fig. \ref{fig:adegchains} (up to some connected $2$-point decorations). Performing a (generalized) $H$-contraction followed by generalized tadpole and melon contractions when necessary, we obtain a suitable graph $G'$. The only situations in which this graph $G'$ may not have a strictly smaller degree compared to $G$ is in the presence of a (generalized) $H_2$ $2$-point subgraph closed into a generalized tadpole. But then we are in situation 3.
\end{itemize}
\end{itemize}

\item Deletion of triangles
\begin{itemize}
\item Deletion of a pair of adjacent untwisted triangle (Lemma \ref{lem:4to2}).
\begin{itemize}
\item The successive deletions generate a generalized tadpole, a generalized melon, and another generalized tadpole. We are then in the situation of Fig. \ref{fig:2twisted_tad}, hence $G$ has a generalized tadpole. 
\item The successive deletions generate a generalized melon, a generalized tadpole, and another generalized tadpole. Then $G$ has a generalized tadpole, made out of one of the two $2$-point structures shown on the right panel of Fig. \ref{fig:2twisted_mel}. 
\end{itemize}
\item Deletion of triangles in graphs with no generalized melon, no generalized tadpole, and no dipole (Lemma \ref{lem:tripole}).
\begin{itemize}
\item Deletion of a triangle which is neither untwisted nor with boundary graph $\cG_6$. One may successively generate two generalized melons and one generalized tadpole, in which case the bound is not strict. But then there is a generalized tadpole in $G$, of the type shown on the right side of Fig. \ref{fig:4to1a}.
\item Deletion of a $1+1+1$-triangle with boundary $\cG_6$. The obtained bound on the degree is always strict, unless we have a generalized tadpole with one of the two $2$-point structures depicted in Fig. \ref{fig:pproblem1}.
\end{itemize}
\end{itemize}
This concludes the proof. 
\end{itemize}

\end{proof}

\

We are left to discuss graphs which contain generalized tadpoles. The subtlety is that this family features stranded configurations with degree $0$, but which nonetheless do not contribute to the leading order. This is due to similar cancellations as already identified in graphs with tadpoles. 
\begin{lemma}\label{lem:LO-tad}
Let $\cG$ be a connected and vacuum Feynman graph. If $\cG$ has a generalized tadpole, then it is not leading-order. 
\end{lemma}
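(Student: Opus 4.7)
The plan is to prove Lemma~\ref{lem:LO-tad} by exhibiting an explicit factor in $A(\cG)$ that vanishes in the large-$N$ limit, using Lemma~\ref{lemma:2-point-N} combined with the bare-tadpole self-energy computation already carried out in Section~\ref{sec:subtr}. The key physical intuition is that the mechanism responsible for the cancellation of ordinary tadpoles at leading order in the subtracted theory extends to their generalized counterparts, precisely because the internal $2$-point decoration is forced by Schur's Lemma to be proportional to $\mathbf{P}$.

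First, I will isolate a generalized tadpole subgraph $\cT \subset \cG$, consisting of a tadpole vertex $v$ together with a connected $2$-point sub-subgraph $\cH$ closing two of the half-edges of $v$ onto each other. By Lemma~\ref{lemma:2-point-N} applied to $\cH$, we have $A(\cH) = \lambda^{V(\cH)} f_\cH(N)\, \mathbf{P}$ with $f_\cH(N)$ uniformly bounded. Substituting this proportionality into the amplitude of $\cT$, the sub-subgraph $\cH$ effectively behaves as a single propagator up to the rescaling $f_\cH(N)/K(\lambda,N)$. Thus
\[
A(\cT) \;=\; \frac{f_\cH(N)}{K(\lambda,N)}\, A(\cT_{\mathrm{bare}}) \;=\; \lambda\, f_\cH(N)\, f_1^\mathbf{P}(N)\, \mathbf{P},
\]
up to a combinatorial factor coming from the tetrahedral vertex at $v$, where $A(\cT_{\mathrm{bare}}) = \lambda\, K(\lambda,N)\, f_1^\mathbf{P}(N)\, \mathbf{P}$ is the bare tadpole contribution to $\Sigma^{(2)}$ computed in Section~\ref{sec:subtr}. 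The crucial point is that the explicit expressions $f_1^\mathbf{A}(N) = (N-2)/(2N^{3/2})$ and $f_1^\mathbf{S}(N) = (N^2+2N-8)/(2N^{3/2}(N+2))$ both scale as $N^{-1/2}$, i.e.\ $\lim_{N\to\infty} f_1^\mathbf{P}(N) = 0$.

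Second, I will use the fact that $\cT$ is a (non-amputated) connected $2$-point subgraph of the vacuum graph $\cG$ to factorize the amplitude as $A(\cG) = (g_\cT(N)/K(\lambda,N))\, A(\cG_0)$, where $\cG_0$ is the connected vacuum graph obtained by contracting $\cT$ into a single unbroken edge, and $g_\cT(N)$ is the scalar such that $A(\cT) = g_\cT(N)\mathbf{P}$. Since $K(\lambda,N)$, $f_\cH(N)$ and $A(\cG_0)$ all admit $1/N$ expansions with uniformly bounded leading coefficients (by Section~\ref{sec:subtr}, Lemma~\ref{lemma:2-point-N}, and Theorem~\ref{thm:main} respectively), the $O(N^{-1/2})$ factor inherited from $f_1^\mathbf{P}(N)$ survives, yielding $A(\cG) = O(N^{-1/2})$. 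In particular, $\cG$ cannot contribute to $F_\mathbf{P}^{(0)}(\lambda)$.

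The main obstacle lies in carefully justifying the factorization of $A(\cT)$, and in particular verifying that replacing $A(\cH)$ by its projected form $\lambda^{V(\cH)} f_\cH(N)\mathbf{P}$ genuinely reproduces the bare tadpole structure of Section~\ref{sec:subtr} up to the rescaling by $1/K$. One has to check that no external-leg propagator is double-counted at the gluing between $\cH$ and $v$, and that the combinatorial factors associated with the three possible channels of the tetrahedral vertex are correctly accounted for. Once this bookkeeping is settled, the vanishing of $f_1^\mathbf{P}(N)$ in the large-$N$ limit yields the conclusion directly, and mirrors the analogous cancellation already exploited in the Wick-ordering construction of Section~\ref{sec:subtr}.
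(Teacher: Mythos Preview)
Your proposal is correct and follows essentially the same approach as the paper: both reduce the generalized tadpole to the bare tadpole via Lemma~\ref{lemma:2-point-N} (Schur's Lemma forcing the $2$-point decoration to be a scalar multiple of $\mathbf{P}$), and then invoke the explicit computation $f_1^{\mathbf{P}}(N) = O(N^{-1/2})$ from Section~\ref{sec:subtr}. The only presentational difference is that the paper applies Lemma~\ref{lemma:2-point-N} symmetrically to \emph{both} the decoration inside the tadpole loop and the complementary $2$-point piece outside it, obtaining directly $A(\cG) = \lambda^{V} f_{\mathcal{A}}(N) f_{\mathcal{B}}(N)\, \mathbf{P}_{a,c}\mathbf{P}_{c,c'}\mathbf{P}_{c',a}$ and reading off the suppression; you instead contract the generalized tadpole $\cT$ to an edge to form a vacuum graph $\cG_0$ and appeal to Theorem~\ref{thm:main} for its boundedness. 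Either bookkeeping works, and the $K(\lambda,N)$ factors you worry about are harmless bounded scalars.
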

\begin{proof}
In view of Lemma \ref{lemma:2-point-N}, the same cancellations found in ordinary tadpoles also occur in generalized tadpoles. Hence, if $\cG$ has a generalized tadpole, its amplitude must be suppressed by a factor $N^{-1/2}$ at least. 

In more detail, $\cG$ must have the structure shown in Fig. \ref{fig:lem-gen-tad}, where $\mathcal{A}$ and $\cB$ are connected $2$-point graphs. By Lemma \ref{lemma:2-point-N}, there exists two bounded functions $f_\mathcal{A}$ and $f_{\cB}$ such that:  
$$
A(\cG) = \lambda^{V(\mathcal{A})+ V(\mathcal{B})} f_\mathcal{A}(N) f_{\cB}(N)  \bP_{a_1a_2a_3, c_1c_2c_3} \bP_{c_3c_4c_5, c_5c_2c_6} \bP_{c_6 c_4 c_1,a_1 a_2 a_3} 
$$
The computations of Section \ref{sec:subtr} imply that this quantity scales as:
$$
f_\mathcal{A}(N) f_{\cB}(N) f_1^\bP (N) \bP_{a_1a_2a_3, a_1a_2a_3} = O(N^{-1/2}) \times \bP_{a_1a_2a_3, a_1a_2a_3}
$$
and therefore decays as $N^{-1/2}$ or faster.
\begin{figure}[htb]
 \begin{center}
 \includegraphics[scale=.7]{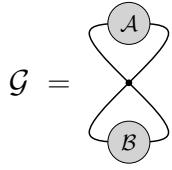}  
 \caption{Structure of the connected graph $\cG$ of Lemma \ref{lem:LO-tad}.} \label{fig:lem-gen-tad}
 \end{center}
 \end{figure}
\end{proof}

\

Lastly, we can now prove the following proposition.

\begin{proposition}\label{propo:LO-ring}
Let $\cG$ be a connected graph with no melon and no tadpole. If $\cG$ is leading order, then it is the ring graph. 
\end{proposition}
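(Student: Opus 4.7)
My plan is to proceed by strong induction on $V(\cG)$, showing the contrapositive: if $\cG$ is connected, has no melon and no tadpole, and is not the ring graph, then $\omega(\cG) > 0$. The base case $V(\cG) = 0$ is immediate, since the only connected unbroken vacuum graph with no vertices is the ring graph.

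For the inductive step $V(\cG) \geq 1$, I would split on whether $\cG$ contains generalized tadpoles or generalized melons. If $\cG$ has a generalized tadpole, Lemma~\ref{lem:LO-tad} directly yields the $N^{-1/2}$ suppression. If $\cG$ has neither a generalized melon nor a generalized tadpole, Proposition~\ref{prop:LO-bound} gives $\omega(\cG) \geq 1/2$. The crux of the argument is the remaining case: $\cG$ contains a generalized melon but no generalized tadpole. Since $\cG$ has no bare melon, any such generalized melon must carry at least one non-trivial $2$-point insertion $\mathcal{H}$, itself a connected $2$-point graph with $V(\mathcal{H}) \geq 1$ and inheriting from $\cG$ the absence of melons and tadpoles. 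Lemma~\ref{lemma:2-point-N} then factors its amplitude as $\lambda^{V(\mathcal{H})} f_{\mathcal{H}}(N)\,\bP$ with $f_{\mathcal{H}}(N)$ bounded. I would argue that $f_{\mathcal{H}}(N)$ is in fact $O(N^{-1/2})$ by closing $\mathcal{H}$ into a suitable vacuum graph and applying the inductive hypothesis or Proposition~\ref{prop:LO-bound}, while keeping track of the $\mathrm{Tr}\,\bP \sim N^3$ factor that comes out of the trace with the projector. Combining this suppression with the fact that the bare melon coefficient at large $N$ is $O(1)$ gives $\omega(\cG) \geq 1/2$ in this case as well.

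The main obstacle lies precisely in this generalized melon case. In contrast to the tadpole situation, where the bare scalar coefficient $f_1^\bP$ is already $O(N^{-1/2})$, the bare melon amplitude $f_2^\bP$ tends to $1/36$ at large $N$, so all of the required suppression must be extracted from the non-trivial $2$-point insertion $\mathcal{H}$. Doing this cleanly demands that I transform the $2$-point graph $\mathcal{H}$ into a vacuum setting without introducing or losing spurious powers of $N$, and that the vacuum closure chosen genuinely falls under the inductive hypothesis (no bare melons or tadpoles, and strictly fewer vertices than $\cG$). Once this technical point is settled, each of the three subcases delivers $\omega(\cG) \geq 1/2$, contradicting $\omega(\cG) = 0$ and forcing $\cG$ to be the ring graph.
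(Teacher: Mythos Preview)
Your approach is essentially the same as the paper's: you run a strong induction on the vertex count, while the paper phrases the identical argument as an infinite descent (a leading-order graph with vertices must contain a generalized melon, one of whose non-trivial $2$-point insertions is again leading order and hence again contains a generalized melon, and so on ad infinitum). The technical obstacle you correctly isolate---passing from the $2$-point insertion $\mathcal{H}$ back to a vacuum statement without spuriously creating tadpoles or melons---is exactly the step the paper dispatches with ``it is also easy to see''; note that if closing $\mathcal{H}$ produced a bare tadpole (both external legs on the same vertex $v_0$), then in $\cG$ the complement of $\mathcal{H}$ would decorate a self-loop at $v_0$, giving $\cG$ a generalized tadpole, which you have already excluded in this branch.
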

\begin{proof}
By Proposition \ref{prop:LO-bound} and Lemma \ref{lem:LO-tad}, if $\cG$ is leading-order and have vertices, then it must have a generalized melon. $\cG$ having no melon, at least one of the three $2$-point functions making the generalized melon must contain vertices. It is also easy to see that this $2$-point function must itself be leading-order, and hence contain a generalized melon. By induction, we are thus led to the absurd conclusion that $\cG$ contains infinitely many generalized melons. 
\end{proof}
For any graph $\cG$ with no tadpole and no melon, we have thus shown that:
\begin{itemize}
\item either $\cG$ is a ring graph, in which case it contributes to the leading order through its unique stranded configuration of degree $0$;
\item or $\cG$ has vertices but no generalized tadpole, in which case any of its stranded configurations $G$ has degree $\omega(G) \geq 1/2$;
\item or $\cG$ has a generalized tadpole, in which case it may have stranded structures with vanishing degree, but they compensate and $A(\cG)$ still decays like $N^{-1/2}$ or faster.  
\end{itemize} 

This achieves the proof of Theorem \ref{thm:LO}

\cleardoublepage

\appendix

\section*{Acknowledgements}

\noindent This research was supported in part by Perimeter Institute for Theoretical Physics. Research at Perimeter Institute is supported by the Government of Canada 
through the Department of Innovation, Science and Economic Development Canada and by the Province of Ontario through the Ministry of Research, Innovation and Science.

\section{Gaussian integral}\label{app:gauss}

We consider the Gaussian expectation of the invariant observable:
\begin{align}\label{eq:unu}
 \left[ e^{\frac{1}{2} \;  \partial_T \bP \partial_T    }      \;  T {\bf 1} T \right]_{T=0} = \sum_{\genfrac{}{}{0pt}{}{a_1,a_2,a_3}{b_1,b_2,b_3}}
  \bP_{a_1a_2a_3,b_1b_2b_3} {\bf 1}_{a_1a_2a_3,b_1b_2b_3} \;. 
\end{align}

Eq.~\eqref{eq:unu} has a convenient graphical representation. Each of the six (respectively fifteen) terms can be represented as a \emph{ring graph} consisting in an edge with three strands closing onto itself, as
depicted in Fig.~\ref{fig:ring}.
\begin{figure}[htb]
 \begin{center}
 \includegraphics[width=12cm]{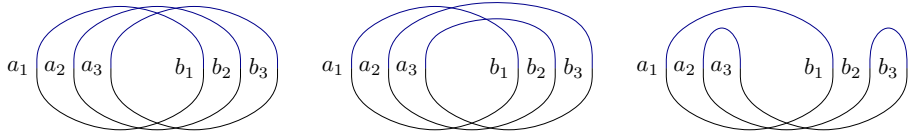}  
 \caption{Ring graphs representing the terms $\delta_{a_1b_1}\delta_{a_2b_2} \delta_{a_3b_3} $, $\delta_{a_1b_1}\delta_{a_2b_3} \delta_{a_3b_2} $ and $\delta_{a_1b_1}\delta_{a_2a_3} \delta_{b_2b_3} $ .} \label{fig:ring}
 \end{center}
 \end{figure}
The strands represent the indices of the tensors. The lower part of the drawing has three strands going from one end to the other and represents the $\delta_{a_i b_i} $ identifications.
As we represent $ T_{a_1a_2a_3} {\bf 1}_{a_1a_2a_3,b_1b_2b_3} T_{b_1b_2b_3}$, the three lower strands cross ($a_1$ goes to $b_1$, $a_2$ to $b_2$ and $a_3$ to $b_3$). 

The upper part of the drawing represents one of the terms in $\bP_{a_1a_2a_3,b_1b_2b_3}  $. If $\bP = \bA$ we have six terms, each of them 
identifying the indices $b$ with a permutation of the indices $a$. They are represented as three strands going from $a$ to $b$ via a permutation (the first two cases in  Fig.~\ref{fig:ring}).
In the symmetric traceless case $\bP = \bS$ and we get nine extra terms. Each of the new terms identifies and index $a$ with an index $b$, and then identifies pairwise the two remaining indices $a$ and the two 
remaining indices $b$ (the last case in  Fig.~\ref{fig:ring}).
Each ring graph comes with a global sign, and each closed strand, which we call a \emph{face}, brings a free sum, hence a factor $N$. 
\begin{description}
 \item[\it The antisymmetric case.] We have six terms corresponding to six permutations:
     \begin{itemize}
      \item[--] the identity permutation has a $+$ sign and $3$ faces, hence contributes $ \frac{1}{3!} N^3$,
      \item[--] three odd permutations with a $-$ sign and $2$ faces, bringing $\frac{1}{3!} (-3N^2) $,
      \item[--] another two even permutations with a $+$ sign and $1$ face, bringing $\frac{1}{3!} (2N)$,
     \end{itemize}
   hence a total of:
    \[
      \left[ e^{\frac{1}{2} \;  \partial_T \bA \partial_T    }      \;  T {\bf 1} T \right]_{T=0} =\frac{1}{3!} N(N-1)(N-2) \;.
    \]
 \item[\it The symmetric traceless case.] The six terms corresponding to permutations of the strands add up to:
     \[ 
      \frac{1}{3!} N(N^2 + 3N  + 2) \;,
     \]
     and the nine extra terms bring:
      \[
       \frac{1}{3!} \left( -\frac{2}{N+2} \right) 3N (N+2) \;,
      \]
     therefore we obtain a total of:
      \[
       \left[ e^{\frac{1}{2} \;  \partial_T \bS \partial_T    }      \;  T {\bf 1} T \right]_{T=0} = \frac{1}{3!}  N(N^2 + 3N  - 4) \;.
      \]

\end{description}
 
    Observe that, as expected, in both cases the expectation of $T {\bf 1} T$ is just the number of independent components of the tensor: $\binom{N}{3}$ in the antisymmetric case and:
    \[
     \underbrace{ \binom{N}{3} +  2 \binom{N}{2} + N }_{\text{symmetric tensors} } -\underbrace{ N }_{\text{traceless conditions}} 
    \]
    in the symmetric traceless one.

\section{The symmetric model and the trace instability}\label{app:symmodel}

The $1/N$ expansion does not work in the case of a symmetric tensor with no tracelessness condition because of an instability in the trace modes.
  Let us denote $\tilde \bS$ the projector on symmetric tensors and $\bQ$ the projector on the trace modes:
 \begin{align*}
& \tilde \bS_{a_1a_2a_3, b_1b_2b_3}  =  \frac{1}{3!} \bigg[ 
  \delta_{a_1b_1} ( \delta_{a_2b_2} \delta_{a_3 b_3} + \delta_{a_2b_3} \delta_{a_3b_2}  ) + 
         \delta_{a_1b_2} (  \delta_{a_2b_1} \delta_{a_3 b_3} +\delta_{a_2b_3} \delta_{a_3b_1}  )  +
      \delta_{a_1b_3} (\delta_{a_2 b_1} \delta_{a_3 b_2} +  \delta_{a_2 b_2} \delta_{a_3 b_1} )   \bigg] \crcr
& \bQ_{a_1a_2a_3, b_1b_2b_3} = \frac{1}{3( N + 2) } \bigg( 
   \delta_{a_1b_1} \delta_{a_2a_3} \delta_{b_2b_3} + \delta_{a_1b_2} \delta_{a_2a_3} \delta_{b_1b_3} + \delta_{a_1b_3} \delta_{a_2a_3} \delta_{b_1b_2} + (a_1 \leftrightarrow  a_2) + (a_1 \leftrightarrow  a_3)
 \bigg)\; .
\end{align*}
We have $\tilde \bS \bQ= \bQ \tilde \bS = \bQ$ and the projector on the symmetric traceless part in Eq.~\eqref{eq:S} is $ \bS = \tilde \bS - \bQ $. 
Computing the tadpole and melon corrections for the model with propagator $\tilde \bS$ yields:
\begin{align*}
&  \sum_{c} \tilde\bS_{a_1a_2a_3, c_1c_2c_3} \tilde \bS_{c_3c_4c_5, c_5c_2c_6} \tilde \bS_{c_6 c_4 c_1,b_1b_2b_3} \crcr
 & \qquad \qquad =  \frac{N+2}{ 6 }  \; \tilde \bS_{a_1a_2a_3, b_1b_2b_3} + \frac{(N+2)^2}{18}  \bQ_{a_1a_2a_3, b_1b_2b_3} \crcr
& \sum_{c,d} \tilde \bS_{a_1a_2a_3, c_1 c_2c_3}   \tilde \bS_{c_3 c_4 c_5, d_3 d_4 d_5 } \tilde \bS_{c_5c_2 c_6 , d_5 d_2 d_6 } \tilde \bS_{c_6c_4c_1 ,  d_6 d_4 d_1  }\tilde \bS_{ d_1 d_2 d_3 , b_1b_2b_3}  \crcr
 & \qquad \qquad = \frac{ N^3 + 9N^2 +34N +64 }{6^3} \; \tilde \bS_{a_1a_2a_3, b_1b_2b_3} +  \frac{ ( N^2+9N+26 ) (N+2)}{6^4}    \bQ_{a_1a_2a_3, b_1b_2b_3} \;,
\end{align*} 
and the self energy at second order is 
$\Sigma^{(2)}  =  \left(   \lambda  N^{ - 1/2}  +    \lambda^2  \right) \tilde \bS +   \left(    N^{1/2} \lambda  +  \lambda^2  \right)  \bQ  \sim  \lambda N^{1/2} \bQ$.
The effective two point function (i.e. the would be renormalized covariance) is formally:
\[  \left[ \sum_{q\ge 1} (\lambda N^{1/2})^q  \right] {\bf Q}\;, \]
which is not summable for large $N$. If one keeps $ \lambda N^{1/2}\le 1$ in the large $N$ limit, the series becomes summable, but this suppresses the melonic graphs.

\section{Special cases of Lemma~\ref{lem:2pointmoves} }\label{app:Specialcases}

We discuss the four graphs separately. 

 \begin{figure}[htb]
 \begin{center}
 \includegraphics[scale=.6]{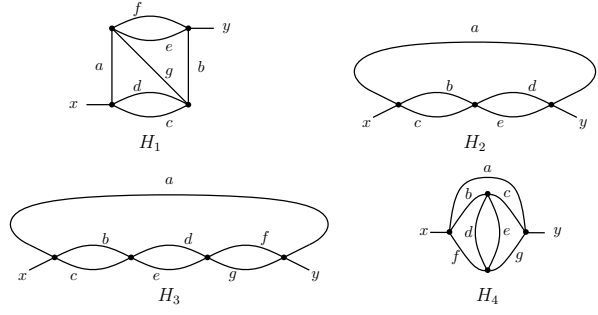}  
 \caption{The four special cases.} \label{fig:adegchainss}
 \end{center}
 \end{figure}

The external leg labels $x$ and $y$, and the edge labels $a,\, b, \, c\,, \ldots, \,g$ refer to Fig.~\ref{fig:adegchainss}. We call $V$, $L$, $F$ (resp. $F_i$) the number of vertices, edges, closed faces (resp. closed faces of length $i$)
in the subgraph. We also define $\ell$ as the sum of the lengths of the open strands of the subgraph. The sum of the lengths of the internal faces of the subgraph is then\footnote{For a $4$-valent $2$-point graph, $4V = 2 L + 2$.}:
\be
S := 3 L - \ell = 6 V - 3 - \ell\,.
\ee

From $F = \sum_i F_i$ and $S = \sum_i i F_i$, one obtains the general combinatorial bounds
\be\label{ineq_H}
\forall k \geq 2\;, \qquad F \leq \lfloor \frac{S}{k+1} + \sum_{i\leq k} \frac{k+1-i}{k+1} F_i \rfloor \;,
\ee
which we will repeatedly use. Since the graphs we will consider have no tadpoles, we will always have $F_1=0$. For $k=2$ we obtain $F \leq \lfloor (S+F_2) /3 \rfloor$, while for $k=3$ we have $F \leq \lfloor ( S + 2 F_2 + F_3) /4 \rfloor $.

\

\noindent {\bf Graph $H_1$.} $V= 4$ and therefore $S = 21 - \ell$.
\begin{itemize}

\item Suppose first that all the external strands traverse from $x$ to $y$ (i.e. the effective $2$-point function is unbroken). The three external strands have length $2$ or more. Moreover, if an external strand follows the path $(x a e y)$ (resp. $(x d b y)$), no external strand can follow the path $(x a f y)$ (resp. $(x c b y)$) (since there is only one corner $(x a)$, and one corner $(b y)$). Hence there can be at most two external strands of length $2$, which implies $\ell \geq 2+2+3= 7$ and $S \leq 14$. There are only two cycles of length $2$ in $H_1$, so that $F_2 \leq 2$. The inequality \eqref{ineq_H} (for $k=2$) yields $F \leq  \lfloor (14 + 2)/3 \rfloor = 5$. Contracting $H_1$ deletes $4$ vertices and therefore 
$\omega \to \omega' \geq \omega - 4\times 3 / 2 + 5 =  \omega - 1$.

\item Suppose now that two external strands loop back. In this case we need to make sure that $F \leq 4$, to compensate for the additional face which may be deleted when replacing the effective broken edge by an unbroken one. Notice that the geometry of the diagram imposes $\ell \geq 7$ and $F_2 \leq 2$.\begin{itemize}
\item If $(\ell = 7, F_2 \geq 1)$, we can assume without loss of generality that $(cd)$ forms a face of length $2$. Up to a permutation of $c \leftrightarrow d$, the only way to maintain $l=7$ is with the following combination of external faces: $(xcgax)$, $(yfey)$ and $(xdby)$. But then it is easy to see that $F_2=1$, and that there must be a face of length $4$ or more\footnote{For instance, a face running through $f$ must necessarily go through $a$, then through $c$ or $d$, then through $b$ or $g$.}. Hence $S - 3 F = -F_2 + F4 + 2 F_5 + 3 F_6 + \ldots \geq 0$, from which we finally obtain $F \leq \lfloor 14/3 \rfloor = 4$.

\item If $(\ell = 8, F_2 = 2)$, one can assume without loss of generality that the external faces are: $(xcgax)$, $(yfgby)$ and $(xdby)$. One finds again that there must be be a face of length $4$ or higher; this yields $S-3F \geq - F_2 + F_4 + \ldots \geq -1 $ and therefore $F \leq \lfloor (13 +1)/3 \rfloor = 4$.

\item In all other cases, $F_2 \leq \ell- 7$ and we immediately obtain $F \leq  \lfloor ( 21 - \ell + \ell - 7)/3 \rfloor = 4$ from \eqref{ineq_H}. 
\end{itemize}

\end{itemize}

\noindent {\bf Graph $H_2$.}
There are $3$ vertices, which gives $S = 15 - \ell$. 

\begin{itemize}

\item When all the external strands traverse from $x$ to $y$, $\ell \geq 2 +2+1= 5$. There are only two cycles in $H_2$, so that $F_2 \leq 2$. Hence $F \leq \lfloor (S + F_2 )/3 \rfloor \leq 4$, and consequently $\omega' \leq \omega - 3 \times 3/2 + 4 = \omega - 1/2$.

\item When two external strands loop back, we need to prove that $F \leq 3$. The geometry of the graph imposes again $\ell \geq 5$. 
\begin{itemize}
\item If $\ell = 5$, then the external strands must be: $(xbcx)$ (length 2), $(ydey)$ (length 2) and  $(xay)$ (length $1$). Both corners $(bc)$ and $(de)$ are occupied, so that there is no cycle left to support faces of length $2$. Hence $F_2 = 0$ and $F \leq \lfloor (S + F_2 )/3 \rfloor = \lfloor 10/3 \rfloor = 3$. 

\item If $\ell \geq 6$, we may use the constraints $F_2 \leq 2$ and 
$F_3 \leq 2$ imposed by the geometry of the diagram, and conclude that $F \leq \lfloor ( S + 2 F_2 + F_3) /4 \rfloor = \lfloor 15/4 \rfloor = 3$.
\end{itemize}

\end{itemize}

\

\noindent {\bf Graph $H_3$.} There are $4$ vertices so that $S= 21 - \ell$. There is no cycle of length $3$ and $3$ cycles of length $2$, hence $F_3 = 0$ and $F_2 \leq 3$.
\begin{itemize}

\item When all the external strands traverse from $x$ to $y$, $\ell \geq 1+3+3=7$. Therefore
$F \leq \lfloor ( S + 2 F_2 + F_3) /4 \rfloor \leq 5$, which implies $\omega' \leq \omega - 1$.

\item When two external strands loop back, we need to prove that $F \leq 4$. We immediately have $\ell \geq 5$. Given that the external strands that loop back cannot have length $3$, while the strand that traverses cannot 
have length $2$, we furthermore infer $\ell \neq 6$.

\begin{itemize}

\item If $\ell=5$, the external strands must be: $(xay)$, $(xbcx)$ and $(yfgy)$. Two corners $(bc)$ and $(fg)$ being occupied, the only cycle which can support a face of length $2$ is $(de)$. Hence $F_2 \leq 1$, leading to
$F \leq \lfloor ( S + 2 F_2 + F_3) /4 \rfloor \leq 4$.

\item If $\ell=7$, two possible partitions lead to consistent configurations: $7 = 1 + 2 + 4$ and $7 = 3 + 2 + 2$. In the first case, we can assume without loss of generality that the external strands are: $(xbcx)$ (length $2$), $(xay)$ (length $1$) and $(yfdegy)$ (length $4$). The only cycle that can support a face of length $2$ is $(fg)$, hence $F_2 \leq 1$. 
In the second case, two of the external strands are $(xbcx)$ and $(yfgy)$, both of length $2$. This leaves only the cycle $(de)$ for constructing faces of length $2$, hence again $F_2 \leq 1$. We conclude that in both cases $F \leq \lfloor ( S + 2 F_2 + F_3) /4 \rfloor \leq 4$.

\item If $\ell \geq 8$, we immediately obtain $F \leq \lfloor ( S + 2 F_2 + F_3) /4 \rfloor \leq \lfloor (21-8 + 2\times 3) / 4 \rfloor = 4$.

\end{itemize}

\end{itemize}

\noindent {\bf Graph $H_4$.} There are $4$ vertices, so that $S = 21 - \ell$. The geometry of the diagram imposes $F_2 \leq 1$ and $F_3 \leq 3$.

\begin{itemize}
\item When all the external strands traverse from $x$ to $y$, $\ell \geq 1+2+2 = 5$. Hence $F \leq \lfloor (S + F_2 )/3 \rfloor \leq 5$ and $\omega' \leq \omega-1$.

\item When two of the external strands loop back, each of these has length at least $3$, so that $\ell \geq 1+3+3 = 7$. Hence $F \leq \lfloor ( S + 2 F_2 + F_3) /4 \rfloor \leq \lfloor 19/4 \rfloor = 4$, and therefore $\omega' \leq \omega - 1$. 

\end{itemize}

This concludes the proof of Lemma~\ref{lem:2pointmoves}. \hfill $\square$

\section{Wick ordering}

We briefly review in this section the Wick ordering in the usual $\phi^4_2$ quantum field theory. Let us denote $C = \frac{1}{p^2 + m^2}$ the propagator of the model, where $m^2$ is the renormalized (physical) mass 
of the theory. The critical theory is obtained for $m^2=0$. In order to cutoff the UV divergences we consider the $\phi^4_2$ theory on a lattice. 
One first tries to define the $\phi^4_2$ quantum  field theory as the partition function:
\[
 Z = \int [d\phi] \; e^{-\frac{1}{2} \sum_{i,j} \phi_i [C^{-1}]_{ij} \phi_j -\lambda \sum_i \phi_i^4  }  =
 \left[  e^{ \frac{1}{2} \sum_{i,j} \frac{\delta}{\delta \phi_i}  C_{ij}  \frac{\delta}{\delta \phi_j} }  \;e^{-\lambda \sum_i \phi_i^4} \right]_{\phi=0} \;,
\]
where $i = (i_1,i_2) \in  \mathbb{Z}^2$. The covariance $C_{ij}$ is translation invariant, that is it depends only on the Euclidean distance $|i-j|$.
We denote $T = C_{ii}$, which is a constant. 
Unfortunately $Z$ is ill defined as the Feynman graphs can contain tadpoles. The amplitude of a tadpole is $\sum_i \phi_i \phi_i C_{ii} =  T  \sum_i \phi_i \phi_i$ 
and is ultraviolet divergent as $T $ diverges when sending the lattice spacing (UV cutoff) to infinity. 
The theory is renormalized by Wick ordering. The Wick ordered $: \sum_i \phi_i^4 :$ interaction is:
\begin{align*}
 : \sum_i \phi_i^4 : &  = e^{ - \frac{1}{2} \sum_{i,j} \frac{\delta}{\delta \phi_i}  C_{ij}  \frac{\delta}{\delta \phi_j} } \sum_i \phi_i^4  
  = \sum_i \phi_i^4 - 6     \sum_i \phi_i \phi_i C_{ii}   + 3  \sum_i C_{ii}^2  \crcr
 &   =  \sum_i \phi_i^4 - 6 T \sum_i \phi_i \phi_i   + 3 T^2\sum_i 1 \;.
\end{align*}

The renormalized partition function which defines the $\phi^4_2$ theory is:
\begin{align*}
 Z^r & = \int [d\phi] \; e^{-\frac{1}{2} \sum_{i,j} \phi_i [C^{-1}]_{ij} \phi_j -\lambda : \sum_i \phi_i^4 : }  
 = e^{-3\lambda T^2 \sum_i 1 } \int [d\phi] \; e^{-\frac{1}{2} \sum_{i,j} \phi_i[C^{-1} ]_{ij} \phi_j + 6 \lambda T \sum_i \phi_i^2  -\lambda   \sum_i \phi_i^4   } \crcr
  & = e^{-3\lambda T^2 \sum_i 1 } \left[  e^{ \frac{1}{2} \sum_{i,j} \frac{\delta}{\delta \phi_i}  C_{ij}  \frac{\delta}{\delta \phi_j} }  \;e^{ 6 \lambda T \sum_i \phi_i^2-\lambda \sum_i \phi_i^4} \right]_{\phi=0}
  \crcr
  & =  e^{-3\lambda T^2 \sum_i 1 }
  \left[  e^{ \frac{1}{2} \sum_{i,j} \frac{\delta}{\delta \phi_i}  \left( \frac{1}{C^{-1} -  12\lambda T  } \right)_{ij}  \frac{\delta}{\delta \phi_j} }  \;e^{-\lambda \sum_i \phi_i^4} \right]_{\phi=0} \;.
\end{align*}

The renormalized partition function is UV finite in perturbation theory. There are two ways to analyze it.
\begin{description}
 \item[\it The renormalized expansion.] It is obtained by using as covariance the covariance $C$ (with mass the renormalized mass $m^2$). The expansion generates divergent tadpoles and explicit counter terms which subtract the tadpoles to zero: 
    \begin{align*}
      &  
       e^{   e^{ \frac{1}{2} \sum_{i,j} \frac{\delta}{\delta \phi_i}  C_{ij}  \frac{\delta}{\delta \phi_j} }  }  
       \bigg( 6 \lambda T  \sum_i \phi_i^2 -\lambda \sum_i \phi_i^4  \bigg)   \Rightarrow  6 \lambda T \sum_i \phi_i^2 -\lambda \frac{1}{2} 4\cdot 3 \sum_i \phi_i^2 C_{ii} =0 \;.
    \end{align*} 

\item[\it The bare expansion.] It is obtained by using as covariance the bare covariance:
\[ \frac{1}{C^{-1} -  12\lambda T  }   = \frac{1}{ p^2 + m^2 -12\lambda T  } \;,\]
which in particular involves the bare (UV divergent) mass $m^2 -12\lambda T  $. The expansion generates tadpoles. 
The effective two point function, the bare covariance and the self energy are related at the tadpole order by the equations: 
\[
 G^{-1} =  [ C^{-1} -  12\lambda T  ] -\Sigma \;, \qquad \Sigma = - 12 \lambda G_{00} \;,
\]
with solution $G = C$. 
The bare expansion is slightly problematic because the bare mass is negative, hence it seems that the covariance is not positively defined. However, a more careful analysis shows that in a momentum slice the 
$p^2$ term always dominated on the mass, hence it is possible to integrate slice by slice.
\end{description}

In both cases the renormalized partition function has an expansion in Feynman graphs with propagators the physical covariance $C$ (with mass the  renormalized mass $m^2$)
and  having no tadpoles.
In the renormalized expansion the tadpoles are killed one  by one by their counter terms. In the bare expansion the tadpoles are resummed and cancel the mass counterterm to give the renormalized covariance as effective covariance.

 \bibliography{Refs.bib}
 
\end{document}